\documentclass[12pt]{article} 
\pdfoutput=1

\usepackage[round]{natbib}
\renewcommand{\cite}{\citep}

\usepackage{amsmath, amsfonts, amsthm}
\usepackage[ruled, noline]{algorithm2e}
\usepackage{xspace}
\usepackage{fullpage}
\usepackage{verbatim}
\usepackage[pdftex]{graphicx}
\usepackage{grffile}

\newtheorem{theorem}{Theorem}
\newtheorem{lemma}{Lemma}

\newtheorem{remark}{Remark}
\newcommand{\eq}[1]{(\ref{eq:#1})}

\newcommand{\F}{\mathcal{F}}
\newcommand{\G}{\mathbb{G}}
\newcommand{\I}{\mathcal{I}}

\newcommand{\R}{\mathbb{R}}
\renewcommand{\P}{\mathbb{P}}
\newcommand{\Q}{\mathbb{Q}}
\newcommand{\X}{\mathcal{X}}
\newcommand{\1}{\mathbf{1}}
\newcommand{\Normal}{\text{Normal}}
\newcommand{\StudentT}{\text{StudentT}}
\newcommand{\GammaDist}{\text{Gamma}}
\newcommand{\Bernoulli}{\text{Bernoulli}}
\newcommand{\Multinomial}{\text{Multinomial}}
\newcommand{\pconv}{\overset{P}{\rightarrow}}

\newcommand{\dequal}{\overset{d}{=}}
\DeclareMathOperator{\Var}{Var}

\newcommand{\eout}[1]{#1^*}  
\newcommand{\ein}[1]{#1_*}   
\newcommand{\s}[1]{{#1}^*} 
\newcommand{\sest}[1]{\hat{#1}}	     

\newcommand{\numsub}{s}      
\newcommand{\numre}{r}         

\newcommand{\OurAlg}{Bag of Little Bootstraps\xspace}  
\newcommand{\OurAlgWithAbbrev}{Bag of Little Bootstraps (BLB)\xspace}  
\newcommand{\OuralgWithAbbrev}{The Bag of Little Bootstraps (BLB)\xspace}   
\newcommand{\OuralgAbbrev}{BLB\xspace}  
\newcommand{\ouralgWithAbbrev}{the Bag of Little Bootstraps (BLB)\xspace}  
\newcommand{\ouralgAbbrev}{BLB\xspace}  

\title{A Scalable Bootstrap for Massive Data}

\author{
Ariel Kleiner \\
Department of Electrical Engineering and Computer Science\\
University of California, Berkeley\\
\texttt{akleiner@eecs.berkeley.edu} \\
\and
Ameet Talwalkar \\
Department of Electrical Engineering and Computer Science\\
University of California, Berkeley\\
\texttt{ameet@eecs.berkeley.edu} \\
\and
Purnamrita Sarkar \\
Department of Electrical Engineering and Computer Science\\
University of California, Berkeley\\
\texttt{psarkar@eecs.berkeley.edu} \\
\and
Michael I.~Jordan \\
Department of Statistics\\
and Department of Electrical Engineering and Computer Science\\
University of California, Berkeley\\
\texttt{jordan@eecs.berkeley.edu}
}

\begin{document}
\maketitle

\begin{abstract}
The bootstrap provides a simple and powerful means of assessing the quality of 
estimators.  However, in settings involving large datasets---which are increasingly 
prevalent---the computation of bootstrap-based quantities can be prohibitively 
demanding computationally.  While variants such as subsampling and the $m$ out 
of $n$ bootstrap can be used in principle to reduce the cost of bootstrap 
computations, we find that these methods are generally not robust to specification 
of hyperparameters (such as the number of subsampled data points), and they often 
require use of more prior information (such as rates of convergence of estimators) 
than the bootstrap.  As an alternative, we introduce \ouralgWithAbbrev, a new 
procedure which incorporates features of both the bootstrap and subsampling 
to yield a robust, computationally efficient means of assessing the quality of estimators.  \OuralgAbbrev is well suited to modern parallel and distributed computing architectures 
and furthermore retains the generic applicability and statistical efficiency of 
the bootstrap.  We demonstrate \ouralgAbbrev's favorable statistical performance 
via a theoretical analysis elucidating the procedure's properties, as well as a 
simulation study comparing \ouralgAbbrev to the bootstrap, the $m$ out of $n$ 
bootstrap, and subsampling.  In addition, we present results from a large-scale 
distributed implementation of \ouralgAbbrev demonstrating its computational 
superiority on massive data, a method for adaptively selecting \ouralgAbbrev's 
hyperparameters, an empirical study applying \ouralgAbbrev to several real datasets,
and an extension of \ouralgAbbrev to time series data.
\end{abstract}

\section{Introduction}
\label{sec:introduction}

The development of the bootstrap and related resampling-based methods 
in the 1960s and 1970s heralded an era in statistics in which inference 
and computation became increasingly intertwined~\cite{bootstrap-efron,DiaconisEfron}.  
By exploiting the basic capabilities of the classical von Neumann computer 
to simulate and iterate, the bootstrap made it possible to use computers 
not only to compute estimates but also to assess the quality of estimators, 
yielding results that are quite generally consistent~\cite{bickel-freedman, 
gine-zinn, vdv-wellner} and often more accurate than those based upon 
asymptotic approximation~\cite{hall-edgeworth}.  Moreover, the bootstrap
aligned statistics to computing technology, such that advances in speed and storage 
capacity of computers could immediately allow statistical methods to scale 
to larger datasets.

Two recent trends are worthy of attention in this regard.  First, 
the growth in size of datasets is accelerating, with ``massive''  
datasets becoming increasingly prevalent.  Second, computational resources are 
shifting toward parallel and distributed architectures, with multicore and 
cloud computing platforms providing access to hundreds or thousands 
of processors.  The second trend is seen as a mitigating factor with 
respect to the first, in that parallel and distributed architectures
present new capabilities for storage and manipulation of data.  
However, from an inferential point of view, it is not yet clear how 
statistical methodology will transport to a world involving 
massive data on parallel and distributed computing platforms.

While massive data bring many statistical issues to the fore, including 
issues in exploratory data analysis and data visualization, there remains the
core inferential need to assess the quality of estimators.
Indeed, the uncertainty and biases in estimates based on large data can remain quite significant, as large datasets are often high dimensional, are frequently used to fit complex models with large numbers of parameters, and can have many potential sources of bias.
Furthermore, even if sufficient data are available to allow highly accurate estimation, the ability to efficiently assess estimator quality remains essential to allow efficient use of available resources by processing only as much data as is necessary to achieve a desired accuracy or confidence.

The bootstrap brings to bear various desirable features in the massive data setting, notably its relatively automatic nature and its applicability to a wide variety of inferential problems.  It can be used to assess bias, to quantify the uncertainty in an estimate (e.g., via a standard error or a confidence interval), or to assess risk.  However, these virtues are realized at the expense of a substantial computational burden.  Bootstrap-based quantities typically must be computed via a form of Monte Carlo approximation in which the estimator in question is repeatedly applied to resamples of the entire original observed dataset.

Because these resamples have size on the order of that of the original data, 
with approximately 63\% of data points appearing at least once in each resample, 
the usefulness of the bootstrap is severely blunted by the large datasets increasingly 
encountered in practice.  In the massive data setting, computation of even a single 
point estimate on the full dataset can be quite computationally demanding, and so 
repeated computation of an estimator on comparably sized resamples can be prohibitively 
costly.  To mitigate this problem, one might naturally attempt to exploit the modern
trend toward parallel and distributed computing.  Indeed, at first glance, the bootstrap would seem ideally suited
to straightforwardly leveraging parallel and distributed computing architectures: one might imagine using
different processors or compute nodes to process different bootstrap resamples independently in parallel.
However, the large size of bootstrap resamples in the massive 
data setting renders this approach problematic, as the cost of transferring data
to independent processors or compute nodes can be overly high, as is the cost of operating on even a single resample using an independent set of computing resources.

While the literature does contain some discussion of techniques for improving 
the computational efficiency of the bootstrap, that work is largely devoted 
to reducing the number of resamples required~\cite{efficient-bootstrap-efron, 
efron-intro-bootstrap}.  These techniques in general introduce significant 
additional complexity of implementation and do not eliminate the crippling 
need for repeated computation of the estimator on resamples having size 
comparable to that of the original dataset.

Another landmark in the development of simulation-based inference is 
subsampling~\cite{politis-subsampling-book} and the closely related  
$m$ out of $n$ bootstrap~\cite{bootstrap-moutofn}.  These methods (which 
were introduced to achieve statistical consistency in edge cases in which 
the bootstrap fails) initially appear to remedy the bootstrap's key 
computational shortcoming, as they only require repeated computation 
of the estimator under consideration on resamples (or subsamples) that 
can be significantly smaller than the original dataset.  However, these 
procedures also have drawbacks.  As we show in our simulation study, their success is sensitive to the choice of resample (or subsample) size (i.e., $m$ in the $m$ out of $n$ bootstrap).  Additionally, because the variability of an estimator on a subsample differs from its variability on the 
full dataset, these procedures must perform a rescaling of their output, and this rescaling requires knowledge and explicit use of the convergence rate of the estimator in question; these methods are thus less automatic and easily deployable than the bootstrap.  While schemes have been proposed for data-driven selection of an optimal resample size~\cite{bickel-sakov-choice-of-m}, they require significantly greater computation which would eliminate any computational gains.  Also, there has been work on the $m$ out of $n$ bootstrap that has sought to reduce computational costs using two different values of $m$ in conjunction with extrapolation~\cite{bickel-yahav-extrapolation, bickel-sakov-extrapolation}.  However, these approaches explicitly utilize series expansions of the estimator's sampling distribution and hence are less automatically usable; they also require execution of the $m$ out of $n$ bootstrap for multiple values of $m$.

Motivated by the need for an automatic, accurate means of assessing estimator 
quality that is scalable to large datasets, we introduce a new procedure, 
\ouralgWithAbbrev, which functions by combining the results of bootstrapping 
multiple small subsets of a larger original dataset. Instead of applying an 
estimator directly to each small subset, as in the $m$ out of $n$ bootstrap 
and subsampling, \ouralgWithAbbrev applies the bootstrap to each small subset, 
where in the resampling process of each individual bootstrap run, weighted 
samples are formed such that the effect is that of sampling the small subset 
$n$ times with replacement, but the computational cost is that associated 
with the size of the small subset.  This has the effect that, despite operating 
only on subsets of the original dataset, \ouralgAbbrev does not require analytical
rescaling of its output.  Overall, \OuralgAbbrev has a significantly
more favorable computational profile than the bootstrap, as it only
requires repeated computation of the estimator under consideration on
quantities of data that can be much smaller than the original dataset.
As a result, \ouralgAbbrev is well suited to implementation on modern
distributed and parallel computing architectures which are often used
to process large datasets.  Also, our procedure maintains the bootstrap's
generic applicability, favorable statistical properties (i.e., consistency
and higher-order correctness), and simplicity of implementation.
Finally, as we show in experiments, \ouralgAbbrev is consistently more
robust than alternatives such as the $m$ out of $n$ bootstrap and subsampling.

The remainder of our presentation is organized as follows.  In Section~\ref{sec:blb}, we formalize our statistical setting and notation, present \ouralgAbbrev in detail, and discuss the procedure's computational characteristics.  Subsequently, in Section~\ref{sec:stat-perform}, we elucidate \ouralgAbbrev's statistical properties via a theoretical analysis (Section~\ref{sec:theory}) showing that \ouralgAbbrev shares the bootstrap's consistency and higher-order correctness, as well as a simulation study (Section~\ref{sec:simulation}) which compares \ouralgAbbrev to the bootstrap, the $m$ out of $n$ bootstrap, and subsampling.  Section~\ref{sec:scalability} discusses a large-scale implementation of \ouralgAbbrev on a distributed computing system and presents results illustrating the procedure's superior computational performance in the massive data setting.  We present a method for adaptively selecting \ouralgAbbrev's hyperparameters in Section~\ref{sec:hyperparam}.  Finally, we apply \ouralgAbbrev (as well as the bootstrap and the $m$ out of $n$ bootstrap, for comparison) to several real datasets in Section~\ref{sec:real-data}, we present an extension of \ouralgAbbrev to time series data in Section~\ref{sec:time-series}, and we conclude in Section~\ref{sec:conclusion}.

\section{\OurAlgWithAbbrev}
\label{sec:blb}

\subsection{Setting and Notation}
We assume that we observe a sample $X_1, \ldots, X_n \in \X$ drawn i.i.d.\ from some (unknown) underlying distribution $P \in \mathcal{P}$; we denote by $\P_n = n^{-1} \sum_{i=1}^n \delta_{X_i}$ the corresponding empirical distribution.  Based only on this observed data, we compute an estimate $\sest{\theta}_n \in \Theta$ of some (unknown) population value $\theta \in \Theta$ associated with $P$.  For example, $\sest{\theta}_n$ might estimate a measure of correlation, the parameters of a regressor, or the prediction accuracy of a trained classification model.  When we wish to explicitly indicate the data used to compute an estimate, we shall write $\sest{\theta}(\P_n)$.  Noting that $\sest{\theta}_n$ is a random quantity because it is based on $n$ random observations, we define $Q_n(P) \in \mathcal{Q}$ as the true underlying distribution of $\sest{\theta}_n$, which is determined by both $P$ and the form of the estimator.  Our end goal is the computation of an estimator quality assessment $\xi(Q_n(P), P): \mathcal{Q} \times \mathcal{P} \rightarrow \Xi$, for $\Xi$ a vector space; to lighten notation, we shall interchangeably write $\xi(Q_n(P))$ in place of $\xi(Q_n(P), P)$.  For instance, $\xi$ might compute a quantile, a confidence region, a standard error, or a bias.  In practice, we do not have direct knowledge of $P$ or $Q_n(P)$, and so we must estimate $\xi(Q_n(P))$ itself based only on the observed data and knowledge of the form of the estimator under consideration.

Note that we allow $\xi$ to depend directly on $P$ in addition to $Q_n(P)$ because $\xi$ might operate on the distribution of a centered and normalized version of $\sest{\theta}_n$.  For example, if $\xi$ computes a confidence region, it might manipulate the distribution of the statistic $\sqrt{n}(\sest{\theta}_n - \theta)$, which is determined by both $Q_n(P)$ and $\theta$; because $\theta$ cannot in general be obtained directly from $Q_n(P)$, a direct dependence on $P$ is required in this case.  Nonetheless, given knowledge of $Q_n(P)$, any direct dependence of $\xi$ on $P$ generally has a simple form, often only involving the parameter $\theta$.  Additionally, rather than restricting $Q_n(P)$ to be the distribution of $\sest{\theta}_n$, we could instead allow it to be the distribution of a more general statistic, such as $(\sest{\theta}_n, \sest{\sigma}_n)$, where $\sest{\sigma}_n$ is an estimate of the standard deviation of $\sest{\theta}_n$ (e.g., this would apply when constructing confidence intervals based on the distribution of the studentized statistic $(\sest{\theta}_n - \theta)/\sest{\sigma}_n$).  Our subsequent development generalizes straightforwardly to this setting, but to simplify the exposition, we will largely assume that $Q_n(P)$ is the distribution of $\sest{\theta}_n$.

Under our notation, the bootstrap simply computes the data-driven plugin approximation $\xi(Q_n(P)) \approx \xi(Q_n(\P_n))$.  Although $\xi(Q_n(\P_n))$ cannot be computed exactly in most cases, it is generally amenable to straightforward Monte Carlo approximation via the following algorithm~\cite{efron-intro-bootstrap}: repeatedly resample $n$ points i.i.d.\ from $\P_n$, compute the estimate on each resample, form the empirical distribution $\s{\Q}_n$ of the computed estimates, and approximate $\xi(Q_n(P)) \approx \xi(\s{\Q}_n)$.

Similarly, using our notation, the $m$ out of $n$ bootstrap (and subsampling) functions as follows, for $m < n$~\cite{bootstrap-moutofn, politis-subsampling-book}: repeatedly resample $m$ points i.i.d.\ from $\P_n$ (subsample $m$ points without replacement from $X_1, \ldots, X_n$), compute the estimate on each resample (subsample), form the empirical distribution $\s{\Q}_{m}$ of the computed estimates, approximate $\xi(Q_m(P)) \approx \xi(\s{\Q}_{m})$, and apply an analytical correction to in turn approximate $\xi(Q_n(P))$.  This final analytical correction uses prior knowledge of the convergence rate of $\sest{\theta}_n$ as $n$ increases and is necessary because each value of the estimate is computed based on only $m$ rather than $n$ points.

We use $\1_d$ to denote the $d$-dimensional vector of ones, and we let $I_d$ denote the $d \times d$ identity matrix.

\subsection{\OurAlg}

\OuralgWithAbbrev functions by averaging the results of bootstrapping multiple small subsets of $X_1, \ldots, X_n$.  More formally, given a subset size $b < n$, \ouralgAbbrev samples $\numsub$ subsets of size $b$ from the original $n$ data points, uniformly at random (one can also impose the constraint that the subsets be disjoint).  Let $\I_1, \ldots, \I_{\numsub} \subset \{1, \ldots, n\}$ be the corresponding index multisets (note that $| \I_j | = b, \forall j$), and let $\P_{n,b}^{(j)} = b^{-1} \sum_{i \in \I_j} \delta_{X_i}$ be the empirical distribution corresponding to subset $j$.  \OuralgAbbrev's estimate of $\xi(Q_n(P))$ is then given by
\begin{equation}
\label{eq:ouralg}
\numsub^{-1} \sum_{j=1}^{\numsub} \xi(Q_n(\P_{n,b}^{(j)})).
\end{equation}
Although the terms $\xi(Q_n(\P_{n,b}^{(j)}))$ in~\eq{ouralg} cannot be computed analytically in general, they can be computed numerically via straightforward Monte Carlo approximation in the manner of the bootstrap: for each term $j$, repeatedly resample $n$ points i.i.d.\ from $\P_{n,b}^{(j)}$, compute the estimate on each resample, form the empirical distribution $\s{\Q}_{n,j}$ of the computed estimates, and approximate $\xi(Q_n(\P_{n,b}^{(j)})) \approx \xi(\s{\Q}_{n,j})$.

Now, to realize the substantial computational benefits afforded by \ouralgAbbrev, we utilize the following crucial fact: each \ouralgAbbrev resample, despite having nominal size $n$, contains at most $b$ distinct data points.  In particular, to generate each resample, it suffices to draw a vector of counts from an $n$-trial uniform multinomial distribution over $b$ objects.  We can then represent each resample by simply maintaining the at most $b$ distinct points present within it, accompanied by corresponding sampled counts (i.e., each resample requires only storage space in $O(b)$).  In turn, if the estimator can work directly with this weighted data representation, then the computational requirements of the estimator---with respect to both time and storage space---scale only in $b$, rather than $n$.  Fortunately, this property does indeed hold for many if not most commonly used estimators, such as general M-estimators.  The resulting \ouralgAbbrev algorithm, including Monte Carlo resampling, is shown in Algorithm~\ref{alg:ouralg}.

\begin{algorithm}[tb]
	\caption{\OurAlgWithAbbrev}
	\label{alg:ouralg}
	\DontPrintSemicolon
	\SetAlgoLined

	\KwIn{\parbox[t]{0.39\linewidth}{Data $X_1, \ldots, X_n$
							\newline $\sest{\theta}$: estimator of interest
							\newline $\xi$: estimator quality assessment}
        	 \parbox[t]{0.45\linewidth}{$b$: subset size
	 						\newline $\numsub$: number of sampled subsets
        						\newline $\numre$: number of Monte Carlo iterations}}
	\KwOut{An estimate of $\xi(Q_n(P))$}
	\BlankLine
	\For{$j \leftarrow 1$ \KwTo $\numsub$}{
		\tcp{Subsample the data}
		Randomly sample a set $\I = \{i_1, \ldots, i_b\}$ of $b$ indices from $\{1, \ldots, n\}$ without replacement\;
		[or, choose $\I$ to be a disjoint subset of size $b$ from a predefined random partition of $\{1, \ldots, n\}$]\;
		\tcp{Approximate $\xi(Q_n(\P_{n,b}^{(j)}))$}
		\For{$k \leftarrow 1$ \KwTo $\numre$}{
			Sample $(n_1, \ldots, n_b) \sim \Multinomial(n, \1_b / b)$\;
			$\s{\P}_{n,k} \leftarrow n^{-1} \sum_{a=1}^b n_a \delta_{X_{i_a}}$\;
			$\s{\sest{\theta}}_{n,k} \leftarrow \sest{\theta}(\s{\P}_{n,k})$\;
		}
		$\s{\Q}_{n,j} \leftarrow {\numre}^{-1} \sum_{k=1}^{\numre} \delta_{\s{\sest{\theta}}_{n,k}}$\;
		$\s{\xi}_{n,j} \leftarrow \xi(\s{\Q}_{n,j})$\;
	}
	\tcp{Average values of $\xi(Q_n(\P_{n,b}^{(j)}))$ computed for different data subsets}
	\Return{$\numsub^{-1} \sum_{j=1}^{\numsub} \s{\xi}_{n,j}$}
\end{algorithm}

Thus, \ouralgAbbrev only requires repeated computation on small subsets of the original dataset and avoids the bootstrap's problematic need for repeated computation of the estimate on resamples having size comparable to that of the original dataset.  A simple and standard calculation~\cite{efron-intro-bootstrap} shows that each bootstrap resample contains approximately $0.632n$ distinct points, which is large if $n$ is large.  In contrast, as discussed above, each \ouralgAbbrev resample contains at most $b$ distinct points, and $b$ can be chosen to be much smaller than $n$ or $0.632n$.  For example, we might take $b = n^\gamma$ where $\gamma \in [0.5, 1]$.  More concretely, if $n = 1,000,000$, then each bootstrap resample would contain approximately $632,000$ distinct points, whereas with $b = n^{0.6}$ each \ouralgAbbrev subsample and resample would contain at most $3,981$ distinct points.  If each data point occupies 1~MB of storage space, then the original dataset would occupy 1~TB, a bootstrap resample would occupy approximately 632~GB, and each \ouralgAbbrev subsample or resample would occupy at most 4~GB.  As a result, the cost of computing the estimate on each \ouralgAbbrev resample is generally substantially lower than the cost of computing the estimate on each bootstrap resample, or on the full dataset.  Furthermore, as we show in our simulation study and scalability experiments below, \ouralgAbbrev typically requires less total computation (across multiple data subsets and resamples) than the bootstrap to reach comparably high accuracy; fairly modest values of $\numsub$ and $\numre$ suffice.

Due to its much smaller subsample and resample sizes, \ouralgAbbrev is also significantly more amenable than the bootstrap to distribution of different subsamples and resamples and their associated computations to independent compute nodes; therefore, \ouralgAbbrev allows for simple distributed and parallel implementations, enabling additional large computational gains.  In the large data setting, computing a single full-data point estimate often requires simultaneous distributed computation across multiple compute nodes, among which the observed dataset is partitioned.  Given the large size of each bootstrap resample, computing the estimate on even a single such resample in turn also requires the use of a comparably large cluster of compute nodes; the bootstrap requires repetition of this computation for multiple resamples.  Each computation of the estimate is thus quite costly, and the aggregate computational costs of this repeated distributed computation are quite high  (indeed, the computation for each bootstrap resample requires use of an entire cluster of compute nodes and incurs the associated overhead).

In contrast, \ouralgAbbrev straightforwardly permits computation on multiple (or even all) subsamples and resamples simultaneously in parallel: because \ouralgAbbrev subsamples and resamples can be significantly smaller than the original dataset, they can be transferred to, stored by, and processed on individual (or very small sets of) compute nodes.  For example, we could naturally leverage modern hierarchical distributed architectures by distributing subsamples to different compute nodes and subsequently using intra-node parallelism to compute across different resamples generated from the same subsample.  Thus, relative to the bootstrap, \ouralgAbbrev both decreases the total computational cost of assessing estimator quality and allows more natural use of parallel and distributed computational resources.  Moreover, even if only a single compute node is available, \ouralgAbbrev allows the following somewhat counterintuitive possibility: even if it is prohibitive to actually compute a point estimate for the full observed data using a single compute node (because the full dataset is large), it may still be possible to efficiently assess such a point estimate's quality using only a single compute node by processing one subsample (and the associated resamples) at a time.

Returning to equation~\eq{ouralg}, unlike the plugin approximation $\xi(Q_n(\P_n))$ used by the bootstrap, the plugin approximations $\xi(Q_n(\P_{n,b}^{(j)}))$ used by \ouralgAbbrev are based on empirical distributions $\P_{n,b}^{(j)}$ which are more compact and hence, as we have seen, less computationally demanding than the full empirical distribution $\P_n$.  However, each $\P_{n,b}^{(j)}$ is inferior to $\P_n$ as an approximation to the true underlying distribution $P$, and so \ouralgAbbrev averages across multiple different realizations of $\P_{n,b}^{(j)}$ to improve the quality of the final result.  This procedure yields significant computational benefits over the bootstrap (as discussed above and demonstrated empirically in Section~\ref{sec:scalability}), while having the same generic applicability and favorable statistical properties as the bootstrap (as shown in the next section), in addition to being more robust than the $m$ out of $n$ bootstrap and subsampling to the choice of subset size (see our simulation study below).

\section{Statistical Performance}
\label{sec:stat-perform}

\subsection{Consistency and Higher-Order Correctness}
\label{sec:theory}

We now show that \ouralgAbbrev has statistical properties---in particular, asymptotic consistency and higher-order correctness---which are identical to those of the bootstrap, under the same conditions that have been used in prior analysis of the bootstrap.  Note that if $\sest{\theta}_n$ is consistent (i.e., approaches $\theta$ in probability) as $n \rightarrow \infty$, then it has a degenerate limiting distribution.  Thus, in studying the asymptotics of the bootstrap and related procedures, it is typical to assume that $\xi$ manipulates the distribution of a centered and normalized version of $\sest{\theta}_n$ (though this distribution is still determined by $Q_n(P)$ and $P$).
Additionally, as in standard analyses of the bootstrap, we do not explicitly account here for error introduced by use of Monte Carlo approximation to compute the individual plugin approximations $\xi(Q_n(\P_{n,b}^{(j)}))$.

The following theorem states that (under standard assumptions) as $b, n \rightarrow \infty$, the estimates $\numsub^{-1} \sum_{j=1}^{\numsub} \xi(Q_n(\P_{n,b}^{(j)}))$ returned by \ouralgAbbrev approach the population value $\xi(Q_n(P))$ in probability.  Interestingly, the only assumption about $b$ required for this result is that $b \rightarrow \infty$, though in practice we would generally take $b$ to be a slowly growing function of $n$.

\begin{theorem}
\label{thm:consistency}
Suppose that $\sest{\theta}_n = \phi(\P_n)$ and $\theta = \phi(P)$, where $\phi$ is Hadamard differentiable at $P$ tangentially to some subspace, with $P$, $\P_n$, and $\P_{n,b}^{(j)}$ viewed as maps from some Donsker class $\F$ to $\R$ such that $\F_\delta$ is measurable for every $\delta > 0$, where $\F_{\delta} = \{f-g:f,g\in\F,\rho_P(f-g)<\delta\}$ and $\rho_P(f)=\left(P(f-Pf)^2\right)^{1/2}$.  Additionally, assume that $\xi(Q_n(P))$ is a function of the distribution of $\sqrt{n}(\phi(\P_n) - \phi(P))$ which is continuous in the space of such distributions with respect to a metric that metrizes weak convergence.  Then,
$$\numsub^{-1} \sum_{j=1}^{\numsub} \xi(Q_n(\P_{n,b}^{(j)})) - \xi(Q_n(P))\pconv 0$$
as $n \rightarrow \infty$, for any sequence $b \rightarrow \infty$ and for any fixed $\numsub$.
\end{theorem}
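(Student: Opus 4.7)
My approach is to reduce the theorem to a per-subset statement and then invoke bootstrap consistency for Donsker classes. Since $\numsub$ is fixed, it suffices to prove that for each $j$,
$$\xi(Q_n(\P_{n,b}^{(j)})) - \xi(Q_n(P)) \pconv 0,$$
because the average over $j = 1, \ldots, \numsub$ then inherits the convergence by the continuous mapping theorem. The strategy for each term is to show that both $\xi(Q_n(\P_{n,b}^{(j)}))$ and $\xi(Q_n(P))$ converge to the common deterministic limit $\xi(\mathcal{L}(\phi'_P(\G_P)))$, where $\G_P$ is the $P$-Brownian bridge on $\F$ and $\phi'_P$ is the Hadamard derivative of $\phi$ at $P$.

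The convergence $\xi(Q_n(P)) \to \xi(\mathcal{L}(\phi'_P(\G_P)))$ is the usual functional delta method: since $\F$ is Donsker, $\sqrt{n}(\P_n - P) \dconv \G_P$; Hadamard differentiability gives $\sqrt{n}(\phi(\P_n) - \phi(P)) \dconv \phi'_P(\G_P)$, so $Q_n(P)$ converges weakly to $\mathcal{L}(\phi'_P(\G_P))$, and the assumed continuity of $\xi$ with respect to a metric for weak convergence closes this step.

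For the \ouralgAbbrev side I would argue in two sub-steps. First, because the index sets $\I_j$ are uniform subsamples of $\{1, \ldots, n\}$ drawn without replacement (or extracted from a uniform random partition), exchangeability of the original i.i.d.\ data implies $\P_{n,b}^{(j)} \dequal \P_b$, where $\P_b$ denotes the empirical measure of $b$ i.i.d.\ draws from $P$; hence $\P_{n,b}^{(j)} \to P$ in the Glivenko--Cantelli sense over $\F$ as $b\to\infty$ (Donsker implies GC). Second, conditionally on the subsample, the inner resampling draws $n$ points from $\P_{n,b}^{(j)}$ via multinomial weights, and a short calculation shows that the conditional covariance of $\sqrt{n}(\s{\P}_n f - \P_{n,b}^{(j)} f)$ is exactly $\mathrm{Var}_{\P_{n,b}^{(j)}}(f)$, which in turn converges to $\mathrm{Var}_P(f)$. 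Combining this variance computation with the Donsker and measurability assumptions on $\F$ and an exchangeably-weighted bootstrap CLT in the style of van der Vaart--Wellner, we obtain that
$$\sqrt{n}\bigl(\s{\P}_n - \P_{n,b}^{(j)}\bigr) \dconv \G_P$$
conditionally on the data, in probability, as $n,b \to \infty$. Applying the functional delta method for the bootstrap (invoking Hadamard differentiability once more), the conditional distribution $Q_n(\P_{n,b}^{(j)})$ converges weakly in probability to $\mathcal{L}(\phi'_P(\G_P))$, and the continuity of $\xi$ yields $\xi(Q_n(\P_{n,b}^{(j)})) \pconv \xi(\mathcal{L}(\phi'_P(\G_P)))$, as desired.

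The main obstacle is this second sub-step: establishing the bootstrap functional CLT in a regime where the resample size $n$ need not match the base sample size $b$ (indeed, the motivation of \ouralgAbbrev is that $b$ may be much smaller than $n$) and where the base measure is itself a random empirical converging to $P$. The Donsker and measurability hypotheses on $\F$ are what supply the stochastic equicontinuity needed to promote finite-dimensional convergence to a uniform statement over $\F$; the standard multiplier/exchangeable bootstrap proofs then go through with the multinomial counts $(n_1, \ldots, n_b)$ playing the role of the exchangeable weights. Once this ingredient is secured, the remainder is routine: continuous mapping and Slutsky handle both the passage through $\xi$ and the aggregation over the $\numsub$ subsets.
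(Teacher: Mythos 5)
Your proposal is correct and follows essentially the same route as the paper: reduce to a single subsample via the fixed-$\numsub$ average, use $\P_{n,b}^{(j)} \dequal \P_b$ together with the van der Vaart--Wellner bootstrap CLT for resample size $n$ drawn from a base of $b$ points (the paper invokes their Theorem~3.6.3 with the identification $(n,k_n)\leftrightarrow(b,n)$, which is exactly the ingredient you flag as the main obstacle), then apply the delta method for the bootstrap and the continuity of $\xi$. The only difference is one of emphasis: the paper cites the specific theorem where you sketch why the exchangeable-weights machinery should go through.
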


See the appendix for a proof of this theorem, as well as for proofs of all other results in this section.  Note that the assumptions of Theorem~\ref{thm:consistency} are standard in analysis of the bootstrap and in fact hold in many practically interesting cases.  For example, M-estimators are generally Hadamard differentiable (under some regularity conditions)~\cite{vdv, vdv-wellner}, and the assumptions on $\xi$ are satisfied if, for example, $\xi$ computes a cdf value.  Theorem~\ref{thm:consistency} can also be generalized to hold for sequences $\numsub \rightarrow \infty$ and more general forms of $\xi$, but such generalization appears to require stronger assumptions, such as uniform integrability of the $\xi(Q_n(\P_{n,b}^{(j)}))$; the need for stronger assumptions in order to obtain more general consistency results has also been noted in prior work on the bootstrap (e.g., see~\citet{hahn}).

Moving beyond analysis of the asymptotic consistency of \ouralgAbbrev, we now characterize its higher-order correctness (i.e., the rate of convergence of its output to $\xi(Q_n(P))$).  A great deal of prior work has been devoted to showing that the bootstrap is higher-order correct in many cases (e.g., see the seminal book by~\citet{hall-edgeworth}), meaning that it converges to the true value $\xi(Q_n(P))$ at a rate of $O_P(1/n)$ or faster.  In contrast, methods based on analytical asymptotic approximation are generally correct only at order $O_P(1/\sqrt{n})$.  The bootstrap converges more quickly due to its more data-driven nature, which allows it to better capture finite-sample deviations of the distribution of $\sest{\theta}_n$ from its asymptotic limiting distribution.

As shown by the following theorem, \ouralgAbbrev shares the same degree of higher-order correctness as the bootstrap, assuming that $\numsub$ and $b$ are chosen to be sufficiently large.  Importantly, sufficiently large values of $b$ here can still be significantly smaller than $n$, with $b/n \rightarrow 0$ as $n \rightarrow \infty$.  Following prior analyses of the bootstrap, we now make the standard assumption that $\xi$ can be represented via an asymptotic series expansion in powers of $1/\sqrt{n}$.  In fact, prior work provides such expansions in a variety of settings.  When $\xi$ computes a cdf value, these expansions are termed Edgeworth expansions; if $\xi$ computes a quantile, then the relevant expansions are Cornish-Fisher expansions.  See~\citet{hall-edgeworth} for a full development of such expansions both in generality as well as for specific forms of the estimator, including smooth functions of mean-like statistics and curve estimators.

\begin{theorem}
\label{thm:higher-order}
Suppose that $\xi(Q_n(P))$ admits an expansion as an asymptotic series
\begin{equation}
\label{eq:pop-expansion}
\xi(Q_n(P)) = z + \frac{p_1}{\sqrt{n}} + \cdots + \frac{p_k}{n^{k/2}} + o\left(\frac{1}{n^{k/2}}\right),
\end{equation}
where $z$ is a constant independent of $P$ and the $p_k$ are polynomials in the moments of $P$.  Additionally, assume that the empirical version of $\xi(Q_n(P))$ for any $j$ admits a similar expansion
\begin{equation}
\label{eq:sample-expansion}
\xi(Q_n(\P_{n,b}^{(j)})) = z + \frac{\sest{p}^{(j)}_1}{\sqrt{n}} + \cdots + \frac{\sest{p}^{(j)}_k}{n^{k/2}} + o_P\left(\frac{1}{n^{k/2}}\right),
\end{equation}
where $z$ is as defined above and the $\sest{p}^{(j)}_k$ are polynomials in the moments of $\P_{n,b}^{(j)}$ obtained by replacing the moments of $P$ in the $p_k$ with those of $\P_{n,b}^{(j)}$.  Then, assuming that $b \leq n$ and $E(\sest{p}^{(1)}_k)^2 < \infty$ for $k \in \{1,2\}$,
\begin{equation}
\label{eq:higher-order-result}
\left| \numsub^{-1} \sum_{j=1}^{\numsub} \xi(Q_n(\P_{n,b}^{(j)})) - \xi(Q_n(P)) \right| = O_P\left(\frac{\sqrt{\Var(\sest{p}^{(1)}_k - p_k | \P_n)}}{\sqrt{n\numsub}}\right) + O_P\left(\frac{1}{n}\right) + O\left(\frac{1}{b\sqrt{n}}\right).
\end{equation}
Therefore, taking $\numsub = \Omega(n \Var(\sest{p}^{(1)}_k - p_k | \P_n))$ and $b = \Omega(\sqrt{n})$ yields
$$\left| \numsub^{-1} \sum_{j=1}^{\numsub} \xi(Q_n(\P_{n,b}^{(j)})) - \xi(Q_n(P)) \right| = O_P\left(\frac{1}{n}\right),$$
in which case \ouralgAbbrev enjoys the same level of higher-order correctness as the bootstrap.
\end{theorem}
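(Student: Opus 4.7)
My plan is to subtract the two asymptotic series~\eq{pop-expansion} and~\eq{sample-expansion} term by term, yielding for each $j$
$$\xi(Q_n(\P_{n,b}^{(j)})) - \xi(Q_n(P)) = \frac{\sest{p}_1^{(j)} - p_1}{\sqrt{n}} + \frac{\sest{p}_2^{(j)} - p_2}{n} + o_P\!\left(\frac{1}{n}\right).$$
Averaging over $j$ and invoking the triangle inequality reduces the task to bounding $A_k := \numsub^{-1}\sum_{j=1}^{\numsub}(\sest{p}_k^{(j)} - p_k)$ for $k \in \{1,2\}$. Since the subsamples $\I_1,\ldots,\I_{\numsub}$ are i.i.d.\ conditional on $\P_n$, so are the $\sest{p}_k^{(j)}$, which motivates the three-way decomposition
$$A_k = \underbrace{\numsub^{-1}\!\sum_j\bigl(\sest{p}_k^{(j)} - E[\sest{p}_k^{(1)}\mid\P_n]\bigr)}_{(i)} \;+\; \underbrace{E[\sest{p}_k^{(1)}\mid\P_n] - p_k(\P_n)}_{(ii)} \;+\; \underbrace{p_k(\P_n) - p_k}_{(iii)},$$
where $p_k(\P_n)$ denotes the same polynomial as $p_k$ but with the moments of $P$ replaced by those of $\P_n$.

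Term $(i)$ is an average of $\numsub$ conditionally i.i.d.\ mean-zero variables, so a conditional Chebyshev bound gives $O_P(\sqrt{\Var(\sest{p}_k^{(1)}\mid\P_n)/\numsub})$; after dividing by $n^{k/2}$, the $k=1$ case produces exactly the first summand of~\eq{higher-order-result}. Term $(iii)$ is a smooth polynomial of the moments of an i.i.d.\ sample of size $n$ and hence equals $O_P(1/\sqrt n)$ by the delta method (with the requisite finite second moments provided by the hypothesis $E(\sest{p}_k^{(1)})^2<\infty$); once divided by $n^{k/2}$ it contributes $O_P(1/n)$ for $k=1$ and $O_P(n^{-3/2})$ for $k=2$, both absorbed into the $O_P(1/n)$ summand.

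The heart of the argument, and the main obstacle, is term $(ii)$: I need to show $|E[\sest{p}_k^{(1)}\mid\P_n] - p_k(\P_n)| = O(1/b)$. Since $\P_{n,b}^{(1)}$ is an i.i.d.\ size-$b$ sample from $\P_n$, its sample moments $\hat m_\ell$ are conditionally unbiased for the moments $m_\ell(\P_n)$ of $\P_n$, with conditional covariances $\Cov(\hat m_i,\hat m_\ell\mid\P_n) = O(1/b)$. Writing $\sest{p}_k^{(1)} = p_k(\hat m_1,\hat m_2,\ldots)$ and Taylor-expanding around $(m_1(\P_n),m_2(\P_n),\ldots)$, the linear terms vanish in conditional expectation and the quadratic terms deliver the $O(1/b)$ bias through $\tfrac12\sum_{i,\ell}\partial^2_{i\ell}p_k\cdot\Cov(\hat m_i,\hat m_\ell\mid\P_n)$, with the remainder controlled by the second-moment hypothesis. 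Multiplying by $n^{-1/2}$ yields the third term $O(1/(b\sqrt n))$ of~\eq{higher-order-result} for $k=1$; the $k=2$ contribution $O(1/(bn))$ is absorbed into $O_P(1/n)$. Combining $(i)$--$(iii)$ gives~\eq{higher-order-result}, and plugging in $\numsub = \Omega(n\Var(\sest{p}_k^{(1)}-p_k\mid\P_n))$ collapses the first summand to $O_P(1/n)$ while $b = \Omega(\sqrt n)$ collapses the third to $O(1/n)$, producing the claimed $O_P(1/n)$ rate.
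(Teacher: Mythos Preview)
Your proposal is correct and follows the same overall skeleton as the paper: subtract the two expansions, reduce to bounding $A_k=\numsub^{-1}\sum_j(\sest{p}_k^{(j)}-p_k)$ for $k\in\{1,2\}$, and split off the conditionally-i.i.d.\ fluctuation term $(i)$ exactly as the paper does. The difference lies in how you handle the residual $E[\sest{p}_k^{(1)}\mid\P_n]-p_k$. The paper pivots at the \emph{unconditional} mean $E[\sest{p}_k^{(1)}]$: it treats $E[\sest{p}_k^{(1)}\mid\P_n]$ as a U-statistic of order $b$ in $X_1,\ldots,X_n$ and invokes a U-statistic variance bound (together with a lemma that $\Var(\sest{p}_k^{(1)})=O(1/b)$) to get $E[\sest{p}_k^{(1)}\mid\P_n]-E[\sest{p}_k^{(1)}]=O_P(1/\sqrt{n})$, and then proves a separate combinatorial lemma showing $|E[\sest{p}_k^{(1)}]-p_k|=O(1/b)$. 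You instead pivot at $p_k(\P_n)$: your term $(iii)$ is handled directly by the delta method, and your term $(ii)$ by a conditional Taylor-bias calculation. Your route is a bit more elementary---it avoids the U-statistic machinery---and since $p_k$ is a genuine polynomial the Taylor expansion terminates, so the remainder control you allude to is not an issue.

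Two small points. First, the subsamples are drawn \emph{without} replacement from $\{1,\ldots,n\}$, not i.i.d.\ from $\P_n$; however, the only properties you use (conditional unbiasedness of subsample moments for the moments of $\P_n$, and conditional covariances of order $1/b$) hold equally well for simple random sampling without replacement, so your argument is unaffected. Second, because the implicit constants in your bound for term $(ii)$ involve moments of $\P_n$ (which are random), your route naturally delivers $O_P(1/(b\sqrt{n}))$ rather than the deterministic $O(1/(b\sqrt{n}))$ stated in~\eq{higher-order-result}; the paper obtains the deterministic $O$ precisely because its pivot $E[\sest{p}_k^{(1)}]$ is nonrandom. This distinction is immaterial for the final $O_P(1/n)$ conclusion.
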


Note that it is natural to assume above that $\xi(Q_n(\P_{n,b}^{(j)}))$ can be expanded in powers of $1/\sqrt{n}$, rather than $1/\sqrt{b}$, because $Q_n(\P_{n,b}^{(j)})$ is the distribution of the estimate computed on $n$ points sampled from $\P_{n,b}^{(j)}$.  The fact that only $b$ points are represented in $\P_{n,b}^{(j)}$ enters via the $\sest{p}^{(j)}_k$, which are polynomials in the sample moments of those $b$ points.

Theorem~\ref{thm:higher-order} indicates that, like the bootstrap, \ouralgAbbrev can converge at rate $O_P(1/n)$ (assuming that $\numsub$ and $b$ grow at a sufficient rate).  Additionally, because $\Var(\sest{p}^{(1)}_k - p_k | \P_n)$ is decreasing in probability as $b$ and $n$ increase, $\numsub$ can grow significantly more slowly than $n$ (indeed, unconditionally, $\sest{p}^{(j)}_k - p_k = O_P(1/\sqrt{b})$).  While $\Var(\sest{p}^{(1)}_k - p_k | \P_n)$ can in principle be computed given an observed dataset, as it depends only on $\P_n$ and the form of the estimator under consideration, we can also obtain a general upper bound (in probability) on the rate of decrease of this conditional variance:
\begin{remark}
\label{remark:condvar-rate}
Assuming that $E(\sest{p}^{(1)}_k)^4 < \infty$, $\Var(\sest{p}^{(1)}_k - p_k | \P_n) = O_P(1/\sqrt{n}) + O(1/b).$
\end{remark}

The following result, which applies to the alternative variant of \ouralgAbbrev that constrains the $\numsub$ randomly sampled subsets to be disjoint, also highlights the fact that $s$ can grow substantially more slowly than $n$:

\begin{theorem}
\label{thm:higher-order-disjoint}
Under the assumptions of Theorem~\ref{thm:higher-order}, and assuming that \ouralgAbbrev uses disjoint random subsets of the observed data (rather than simple random subsamples), we have
\begin{equation}
\label{eq:higher-order-disjoint-result}
\left| \numsub^{-1} \sum_{j=1}^{\numsub} \xi(Q_n(\P_{n,b}^{(j)})) - \xi(Q_n(P)) \right| = O_P\left(\frac{1}{\sqrt{nb\numsub}}\right) + O\left(\frac{1}{b\sqrt{n}}\right).
\end{equation}
Therefore, if $\numsub \sim (n/b)$ and $b = \Omega(\sqrt{n})$, then
$$\left| \numsub^{-1} \sum_{j=1}^{\numsub} \xi(Q_n(\P_{n,b}^{(j)})) - \xi(Q_n(P)) \right| = O_P\left(\frac{1}{n}\right),$$
in which case \ouralgAbbrev enjoys the same level of higher-order correctness as the bootstrap.
\end{theorem}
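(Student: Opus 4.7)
The plan is to adapt the proof of Theorem~\ref{thm:higher-order}, exploiting the independence of disjoint subsets to upgrade the conditional variance bound to an unconditional one. First I would substitute the expansions \eq{pop-expansion} and \eq{sample-expansion} (truncated at order $K=2$) into the object of interest to obtain
$$\numsub^{-1}\sum_{j=1}^{\numsub}\xi(Q_n(\P_{n,b}^{(j)})) - \xi(Q_n(P)) = \frac{A_1}{\sqrt{n}} + \frac{A_2}{n} + \numsub^{-1}\sum_{j=1}^{\numsub} R_j,$$
where $A_k = \numsub^{-1}\sum_{j=1}^{\numsub}(\sest{p}_k^{(j)}-p_k)$ for $k=1,2$, and each $R_j = o_P(1/n)$. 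A uniform-integrability argument of the kind standard in the higher-order bootstrap literature shows $\numsub^{-1}\sum_j R_j = o_P(1/n)$, which is then dominated by the $O(1/(b\sqrt{n}))$ term in the final bound.

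Second, I would exploit the crucial structural observation: because the $\numsub$ subsets are disjoint subsets of the i.i.d.\ sample $X_1,\ldots,X_n$, the empirical distributions $\P_{n,b}^{(1)},\ldots,\P_{n,b}^{(\numsub)}$ are mutually independent, each being the empirical distribution of $b$ i.i.d.\ draws from $P$. In particular, for each fixed $k$, the summands $\sest{p}_k^{(1)},\ldots,\sest{p}_k^{(\numsub)}$ are i.i.d.\ unconditionally. This is the key difference from Theorem~\ref{thm:higher-order}, where the random subsamples share $\P_n$ and only conditional independence (given $\P_n$) is available.

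Third, I would compute the mean and variance of a single term. Because $\sest{p}_k^{(j)}$ is a fixed polynomial evaluated at the sample moments of $b$ i.i.d.\ draws from $P$, a Taylor expansion about the population moments gives $E\sest{p}_k^{(j)}-p_k = O(1/b)$ (the linear term vanishes by unbiasedness of sample moments, and the leading bias comes from the quadratic term $\tfrac12\,\mathrm{tr}(H\,\Cov(m))$ with $\Cov(m) = O(1/b)$) and $\Var(\sest{p}_k^{(j)}) = O(1/b)$ by the delta method together with the $L^2$ hypothesis. By independence across $j$, $\Var(A_k) = \Var(\sest{p}_k^{(1)})/\numsub = O(1/(\numsub b))$, so that $A_k = O(1/b) + O_P(1/\sqrt{\numsub b})$.

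Finally, substituting into the expansion, the $k=1$ contribution is $A_1/\sqrt{n} = O_P(1/\sqrt{n\numsub b}) + O(1/(b\sqrt{n}))$, which is exactly \eq{higher-order-disjoint-result}. The $k=2$ contribution $A_2/n = O_P(1/(n\sqrt{\numsub b})) + O(1/(nb))$ is strictly smaller than $O(1/(b\sqrt{n}))$ and is absorbed. The corollary then follows by inspection: choosing $\numsub \sim n/b$ forces $1/\sqrt{n\numsub b} = 1/n$, and $b = \Omega(\sqrt{n})$ forces $1/(b\sqrt{n}) = O(1/n)$, giving the full $O_P(1/n)$ bootstrap rate. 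The main obstacle is the bookkeeping around the expansion remainders $R_j$: verifying that their average is truly $o_P(1/n)$ requires some uniform control that is not literally part of the hypotheses---typically one invokes either a stronger moment condition or the explicit form of Edgeworth / Cornish--Fisher expansions---but this is the only delicate point; the rest is a routine independent-sum calculation.
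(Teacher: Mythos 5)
Your proposal is correct and follows essentially the same route as the paper: decompose via the asymptotic series, observe that disjointness makes the $\sest{p}^{(j)}_k$ unconditionally i.i.d.\ so the variance of their average is $\Var(\sest{p}^{(1)}_k)/\numsub = O(1/(b\numsub))$, combine with the $O(1/b)$ bias, and plug back in. The only cosmetic difference is that you justify the single-subset moment bounds by a delta-method/Taylor argument where the paper's Lemmas~\ref{lemma:marginal-var} and~\ref{lemma:expectation-rate} use V-/U-statistic theory and a direct combinatorial count; both yield the same $O(1/b)$ rates, and your remark about the remainder terms matches what the paper silently absorbs into its expansion hypothesis.
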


Finally, while the assumptions of the two preceding theorems generally require that $\xi$ studentizes the estimator under consideration (which involves dividing by an estimate of standard error), similar results hold even if the estimator is not studentized.  In particular, not studentizing slows the convergence rate of both the bootstrap and \ouralgAbbrev by the same factor, generally causing the loss of a factor of $O_P(1/\sqrt{n})$~\cite{vdv}.

\subsection{Simulation Study}
\label{sec:simulation}

We investigate empirically the statistical performance characteristics of \ouralgAbbrev and compare to the statistical performance of existing methods via experiments on simulated data.  Use of simulated data is necessary here because it allows knowledge of $P$, $Q_n(P)$, and hence $\xi(Q_n(P))$; this ground truth is required for evaluation of statistical correctness.  For different datasets and estimation tasks, we study the convergence properties of \ouralgAbbrev as well as the bootstrap, the $m$ out of $n$ bootstrap, and subsampling.

We consider two different settings: regression and classification.  For both settings, the data have the form $X_i = (\tilde{X}_i, Y_i) \sim P$, i.i.d.\ for $i=1, \ldots, n$, where $\tilde{X}_i \in \R^d$; $Y_i \in \R$ for regression, whereas $Y_i \in \{0,1\}$ for classification.  In each case, $\sest{\theta}_n$ estimates a parameter vector in $\R^d$ for a linear or generalized linear model of the mapping between $\tilde{X}_i$ and $Y_i$.  We define $\xi$ as a procedure that computes a set of marginal 95\% confidence intervals, one for each element of the estimated parameter vector.  In particular, given an estimator's sampling distribution $Q$ (or an approximation thereof), $\xi$ computes the boundaries of the relevant confidence intervals as the 2.5th and 97.5th percentiles of the marginal component-wise distributions defined by $Q$ (averaging across $\xi$'s simply consists of averaging these percentile estimates).

To evaluate the various quality assessment procedures on a given estimation task and true underlying data distribution $P$, we first compute the ground truth $\xi(Q_n(P))$ by generating $2,000$ realizations of datasets of size $n$ from $P$, computing $\sest{\theta}_n$ on each, and using this collection of $\sest{\theta}_n$'s to form a high-fidelity approximation to $Q_n(P)$.  Then, for an independent dataset realization of size $n$ from the true underlying distribution, we run each quality assessment procedure (without parallelization) until it converges and record the estimate of $\xi(Q_n(P))$ produced after each iteration (e.g., after each bootstrap resample or \ouralgAbbrev subsample is processed), as well as the cumulative processing time required to produce that estimate.  Every such estimate is evaluated based on the average (across dimensions) relative deviation of its component-wise confidence intervals' widths from the corresponding true widths; given an estimated confidence interval width $c$ and a true width $c_o$, the relative deviation of $c$ from $c_o$ is defined as $|c - c_o|/c_o$.  We repeat this process on five independent dataset realizations of size $n$ and average the resulting relative errors and corresponding processing times across these five datasets to obtain a trajectory of relative error versus time for each quality assessment procedure.  The relative errors' variances are small relative to the relevant differences between their means, and so these variances are not shown in our plots.  Note that we evaluate based on confidence interval widths, rather than coverage probabilities, to control the running times of our experiments: in our experimental setting, even a single run of a quality assessment procedure requires non-trivial time, and computing coverage probabilities would require a large number of such runs.  All experiments in this section were implemented and executed using MATLAB on a single processor.  To maintain consistency of notation, we refer to the $m$ out of $n$ bootstrap as the $b$ out of $n$ bootstrap throughout the remainder of this section.  For \ouralgAbbrev, the $b$ out of $n$ bootstrap, and subsampling, we consider $b = n^\gamma$ with $\gamma \in \{0.5, 0.6, 0.7, 0.8, 0.9\}$; we use $\numre = 100$ in all runs of \ouralgAbbrev.

In the regression setting, we generate each dataset from a true underlying distribution $P$ consisting of either a linear model $Y_i = \tilde{X}_i^T \1_d + \epsilon_i$ or a model $Y_i = \tilde{X}_i^T \1_d + \tilde{X}_i^T \tilde{X}_i + \epsilon_i$ having a quadratic term, with $d=100$ and $n=20,000$.  The $\tilde{X}_i$ and $\epsilon_i$ are drawn independently from one of the following pairs of distributions: $\tilde{X}_i \sim \Normal(0, I_d)$ with $\epsilon_i \sim \Normal(0, 10)$; $\tilde{X}_{i,j} \sim \StudentT(3)$ i.i.d.\ for $j=1, \ldots, d$ with $\epsilon_i \sim \Normal(0, 10)$; or $\tilde{X}_{i,j} \sim \GammaDist(1 + 5(j-1)/\max(d-1,1), 2) - 2[1 + 5(j-1)/\max(d-1,1), 2]$ independently for $j=1, \ldots, d$ with $\epsilon_i \sim \GammaDist(1,2) - 2$.  All of these distributions have $E\tilde{X}_i = E\epsilon_i = 0$, and the last $\tilde{X}_i$ distribution has non-zero skewness which varies among the dimensions.  In the regression setting under both the linear and quadratic data generating distributions, our estimator $\sest{\theta}_n$ consists of a linear (in $\tilde{X}_i$) least squares regression with a small $L_2$ penalty on the parameter vector to encourage numerical stability (we set the weight on this penalty term to $10^{-5}$).  The true average (across dimensions) marginal confidence interval width for the estimated parameter vector is approximately 0.1 under the linear data generating distributions (for all $\tilde{X_i}$ distributions) and approximately 1 under the quadratic data generating distributions.

\begin{figure*}
\includegraphics[width=0.32\linewidth]{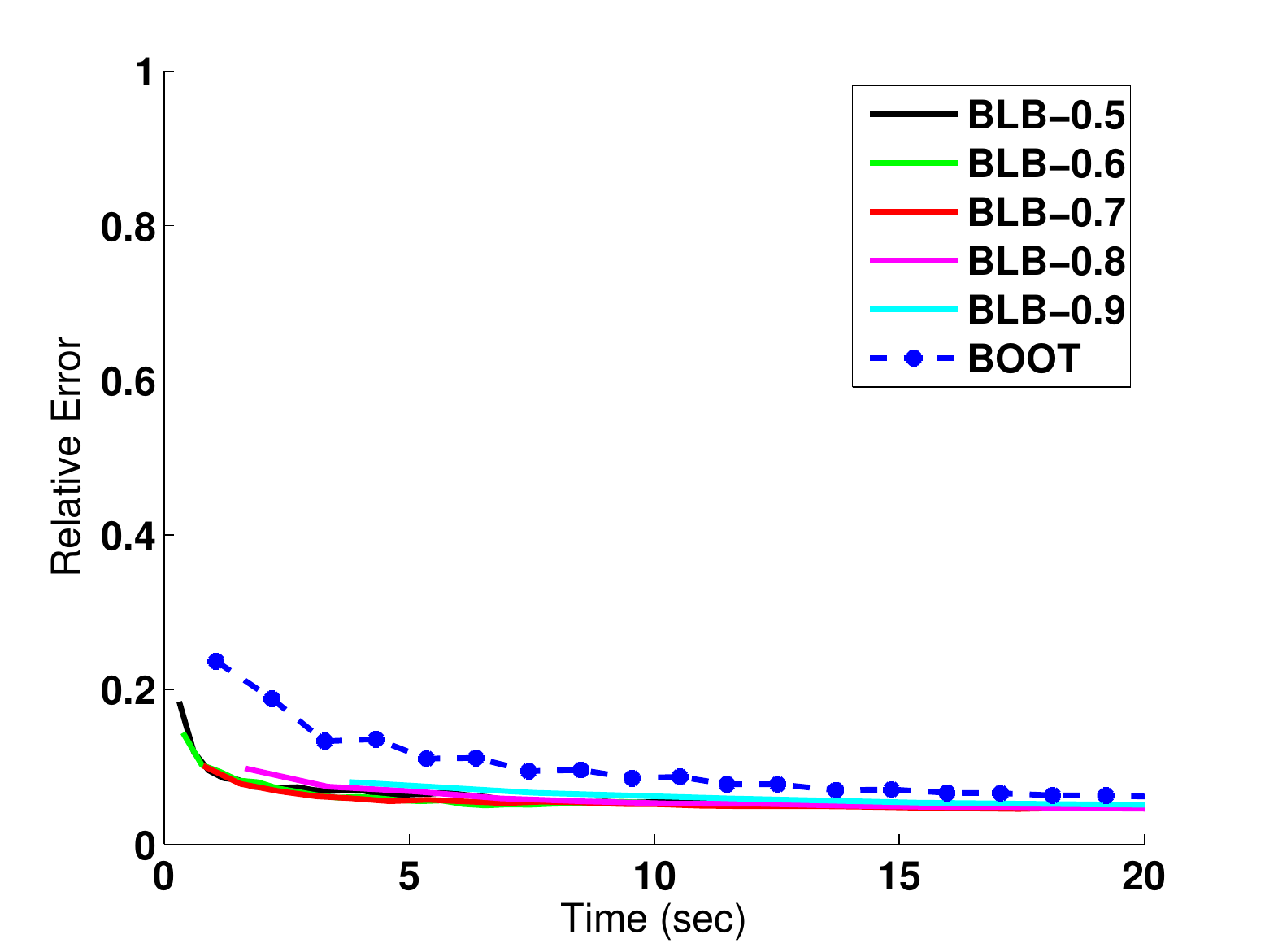}
\includegraphics[width=0.32\linewidth]{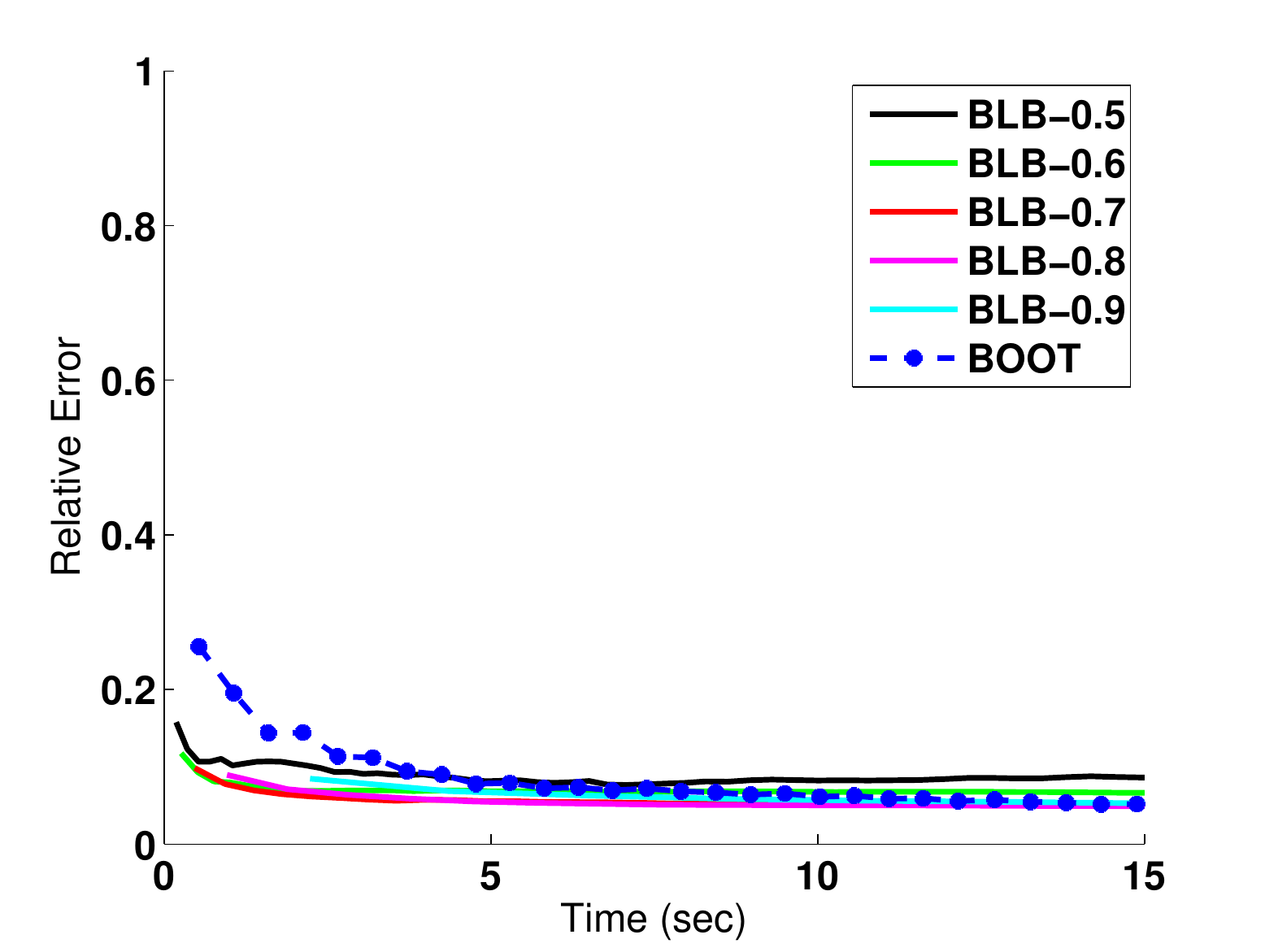}
\includegraphics[width=0.32\linewidth]{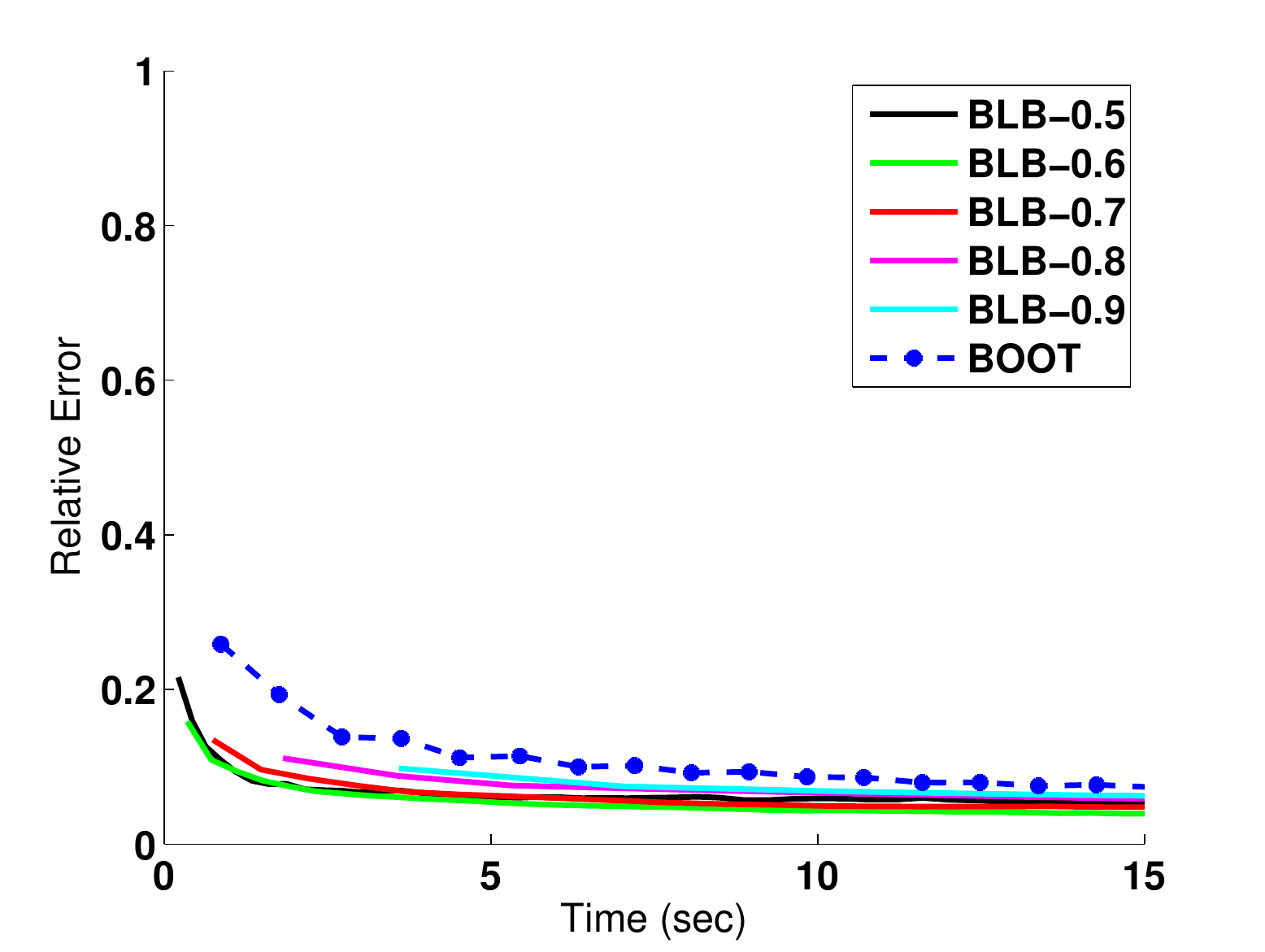}
\\
\includegraphics[width=0.32\linewidth]{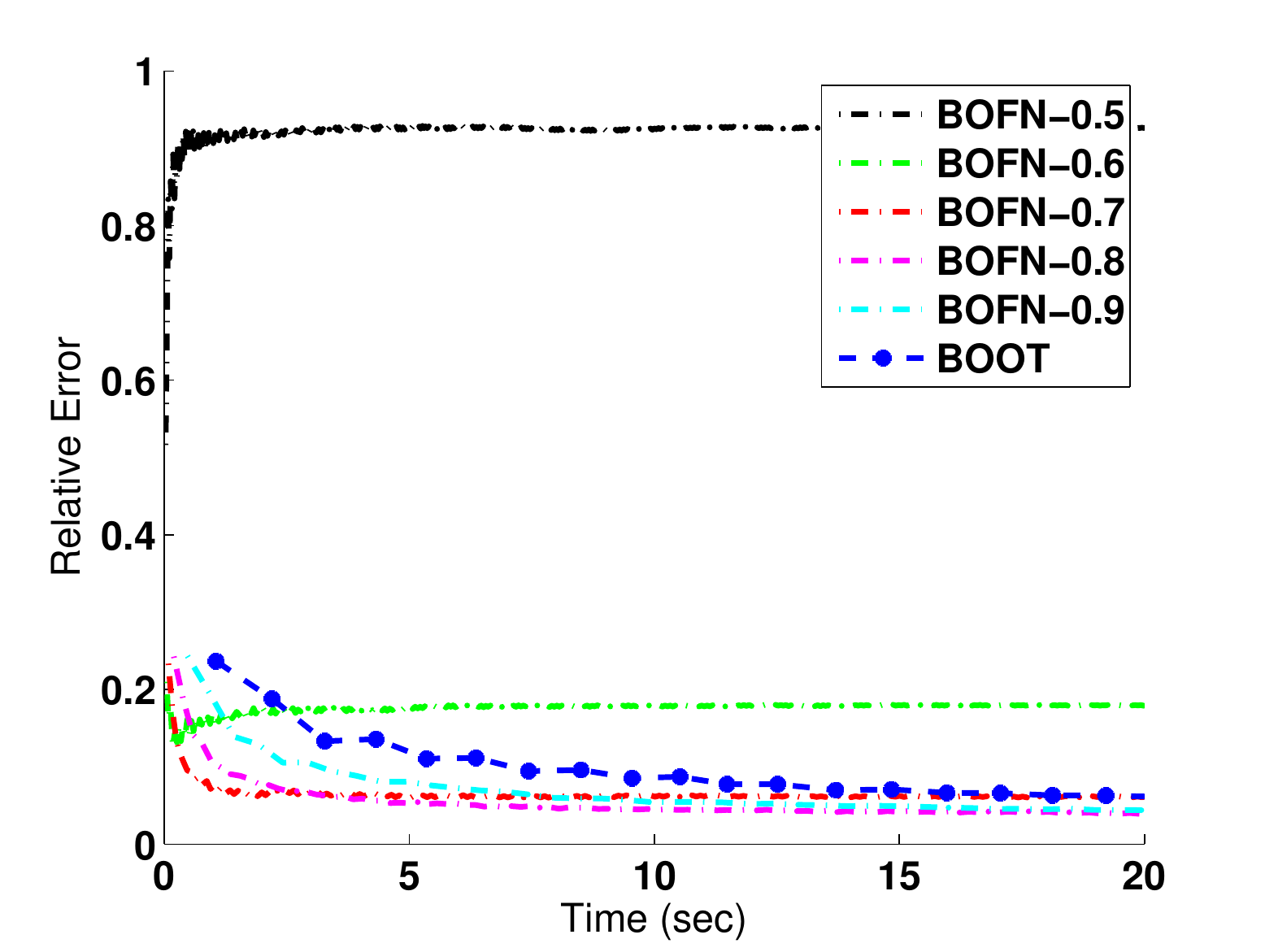}
\includegraphics[width=0.32\linewidth]{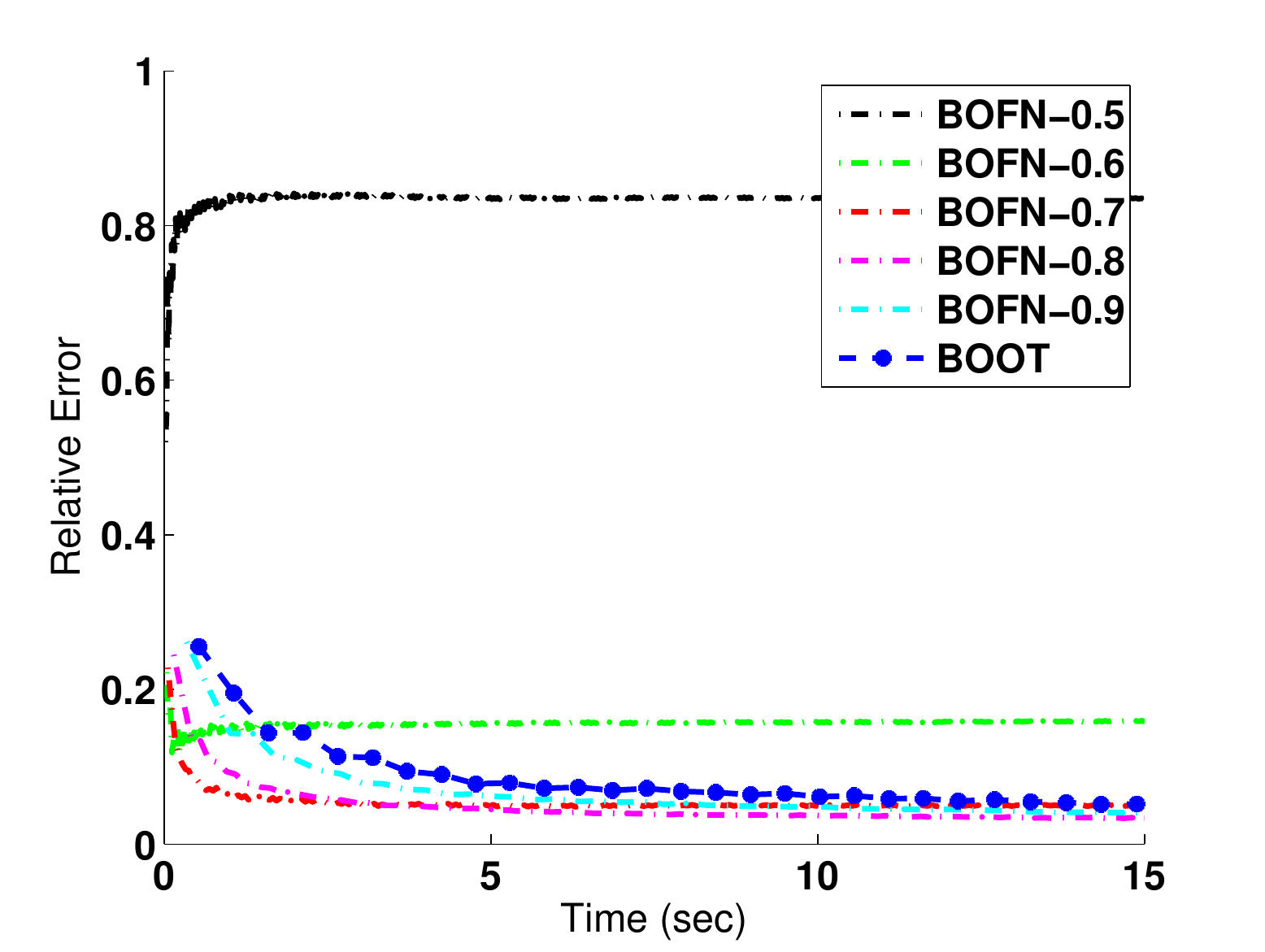}
\includegraphics[width=0.32\linewidth]{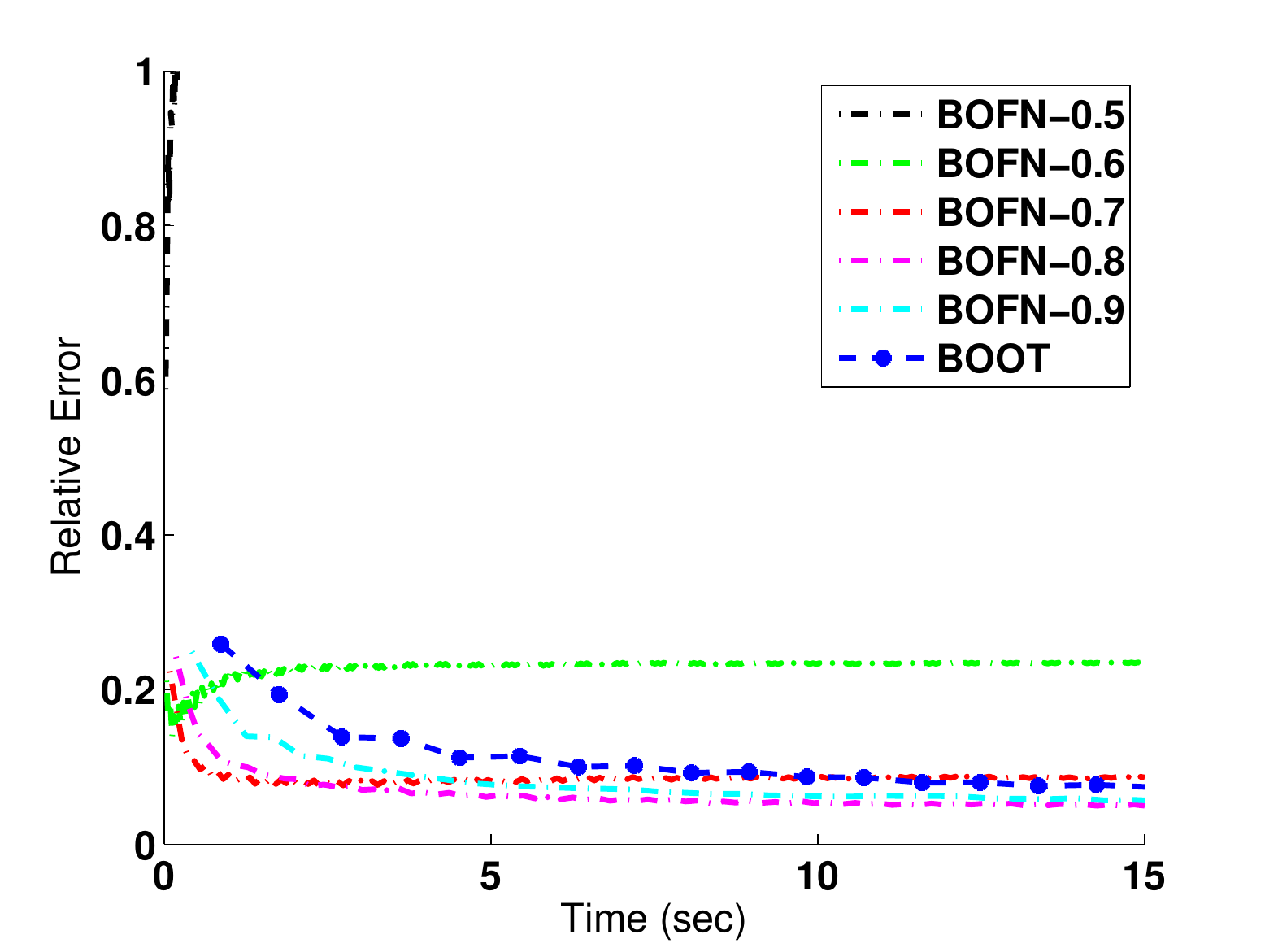}
\\
\includegraphics[width=0.32\linewidth]{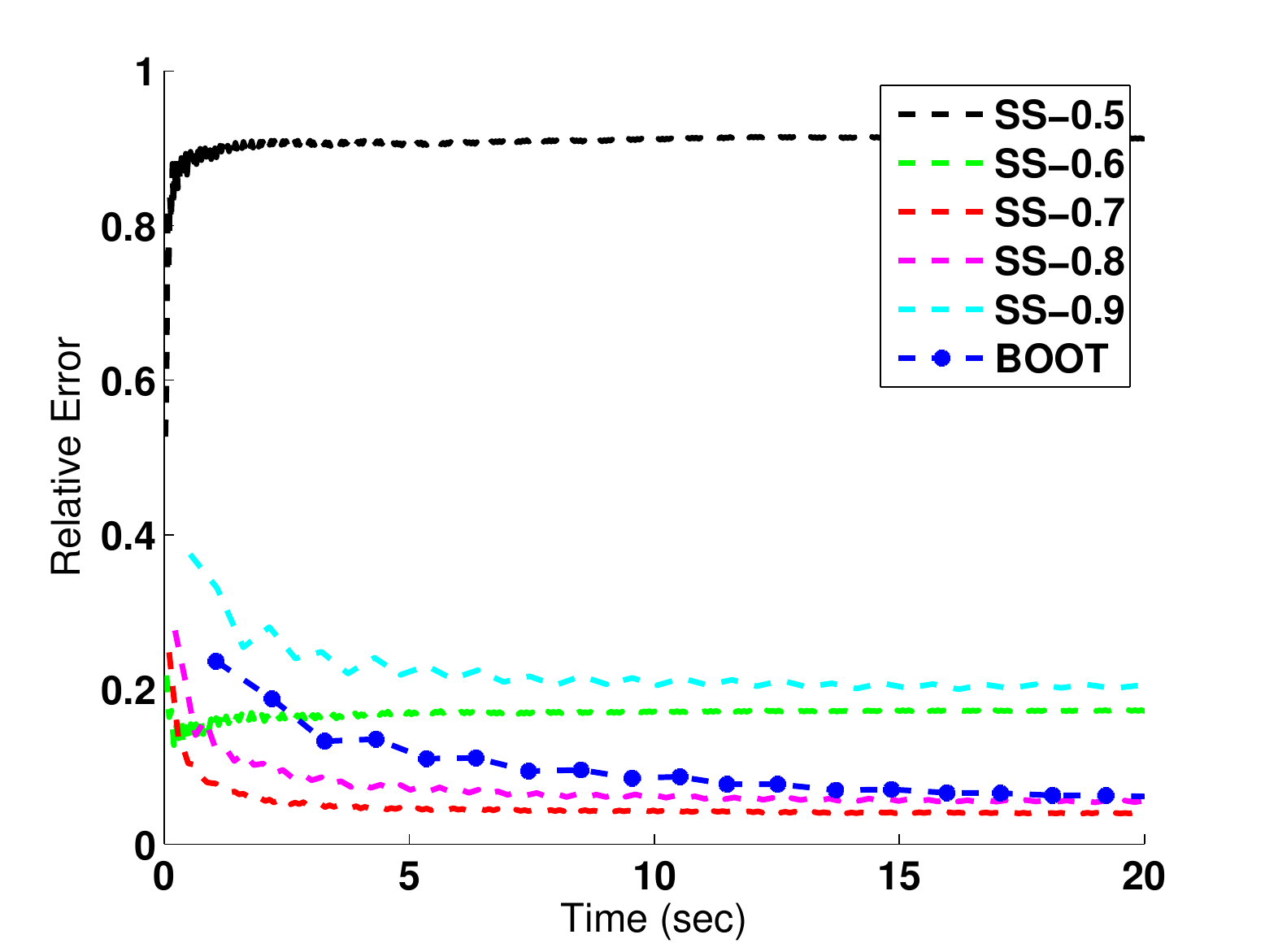}
\includegraphics[width=0.32\linewidth]{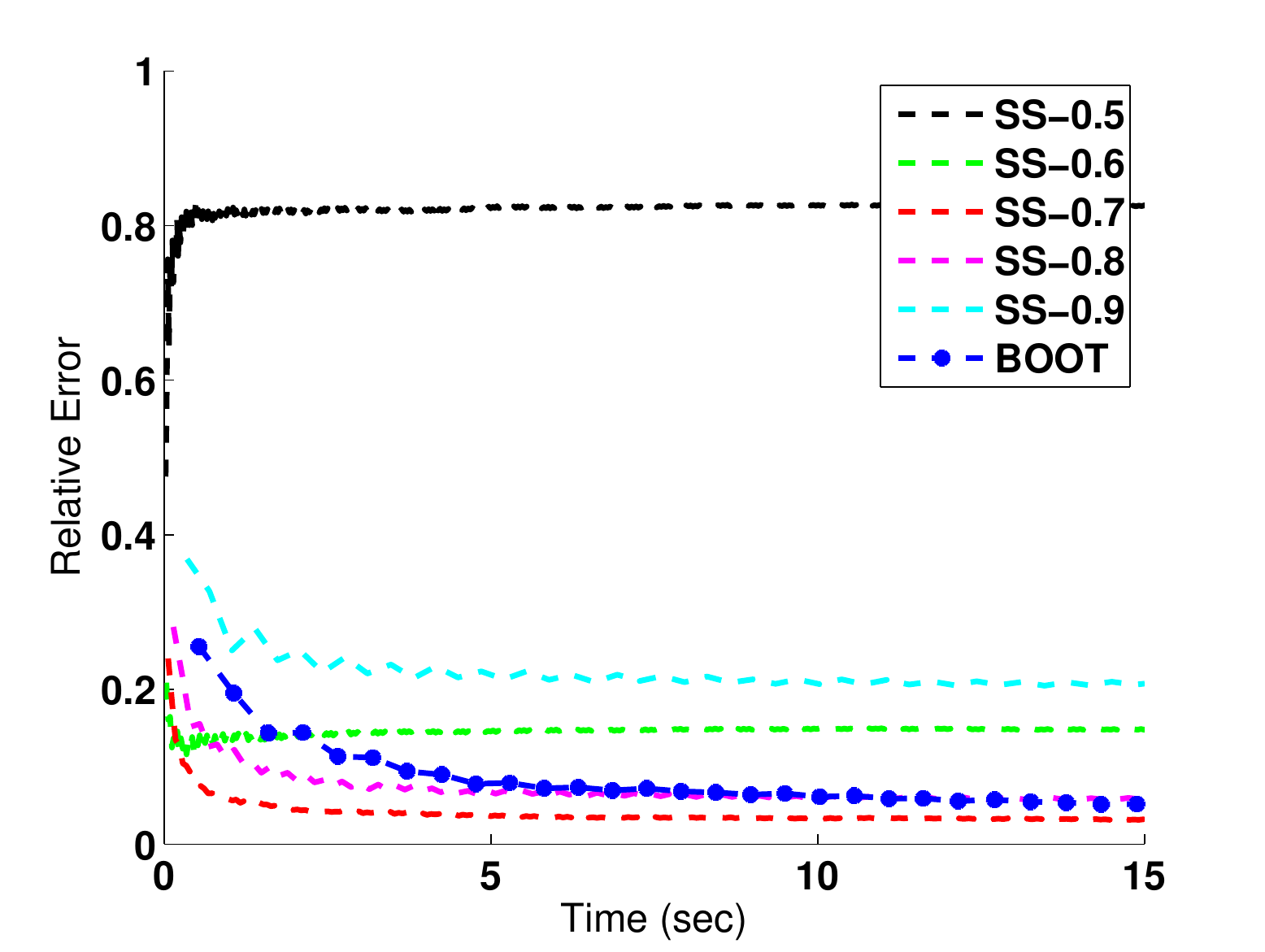}
\includegraphics[width=0.32\linewidth]{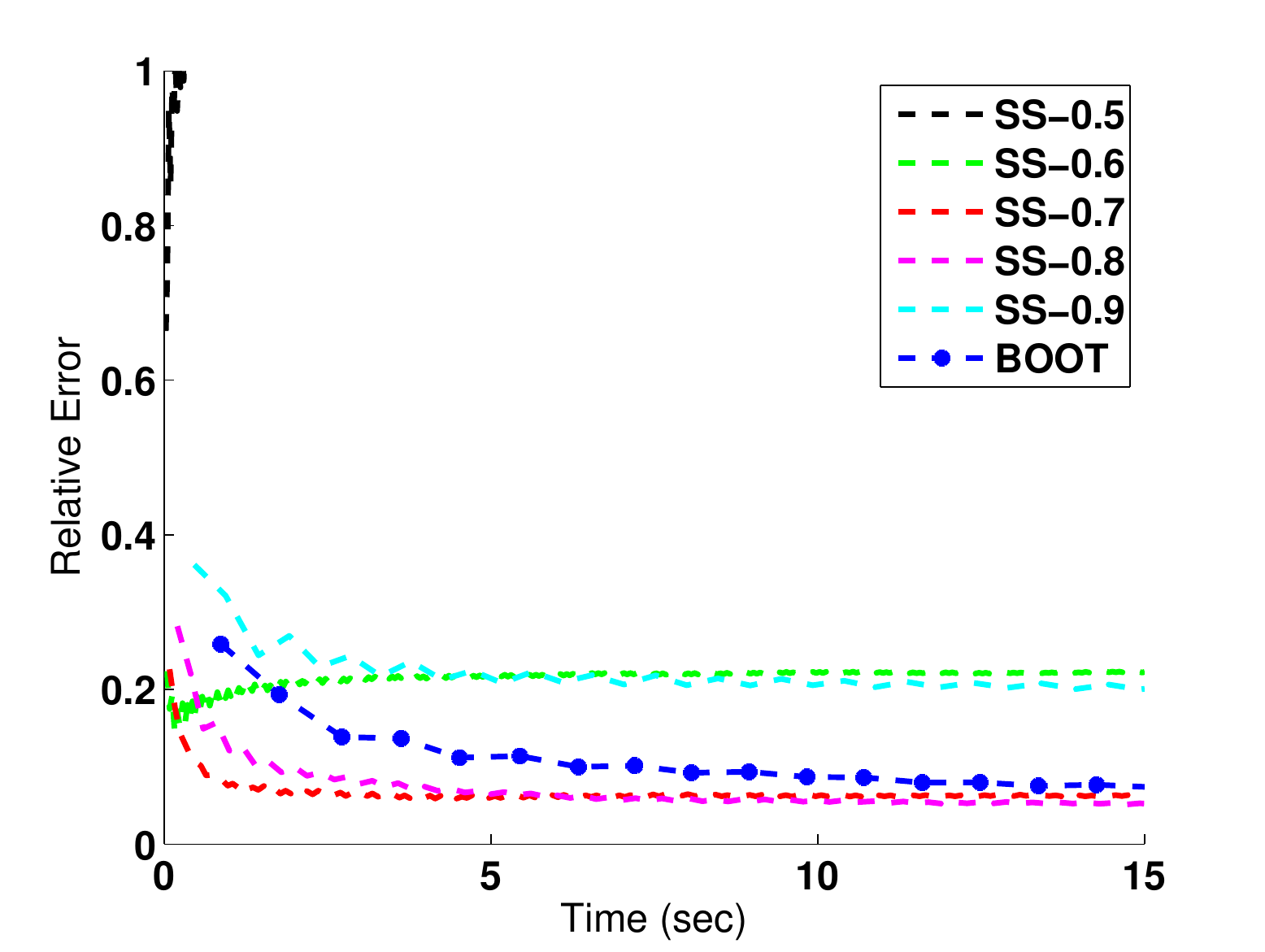}
\caption{Relative error vs.\ processing time for regression setting.  The top row shows \ouralgAbbrev with bootstrap (BOOT), the middle row shows $b$ out of $n$ bootstrap (BOFN), and the bottom row shows subsampling (SS).  For \ouralgAbbrev, BOFN, and SS, $b=n^\gamma$ with the value of $\gamma$ for each trajectory given in the legend.  The left column shows results for linear regression with linear data generating distribution and $\GammaDist$ $\tilde{X}_i$ distribution.  The middle column shows results for linear regression with quadratic data generating distribution and $\GammaDist$ $\tilde{X}_i$ distribution.  The right column shows results for linear regression with linear data generating distribution and $\StudentT$ $\tilde{X}_i$ distribution.}
\label{fig:linreg}
\end{figure*}

Figure~\ref{fig:linreg} shows results for the regression setting under the linear and quadratic data generating distributions with the $\GammaDist$ and $\StudentT$ $\tilde{X}_i$ distributions; similar results hold for the $\Normal$ $\tilde{X}_i$ distribution.  In all cases, \ouralgAbbrev (top row) succeeds in converging to low relative error significantly more quickly than the bootstrap, for all values of $b$ considered.  In contrast, the $b$ out of $n$ bootstrap (middle row) fails to converge to low relative error for smaller values of $b$ (below $n^{0.7}$).  Additionally, subsampling (bottom row) performs strictly worse than the $b$ out of $n$ bootstrap, as subsampling fails to converge to low relative error for both smaller and larger values of $b$ (e.g., for $b = n^{0.9}$).  Note that fairly modest values of $\numsub$ suffice for convergence of \ouralgAbbrev (recall that $\numsub$ values are implicit in the time axes of our plots), with $\numsub$ at convergence ranging from 1-2 for $b=n^{0.9}$ up to 10-14 for $b=n^{0.5}$, in the experiments shown in Figure~\ref{fig:linreg}; larger values of $\numsub$ are required for smaller values of $b$, which accords with both intuition and our theoretical analysis.  Under the quadratic data generating distribution with $\StudentT$ $\tilde{X}_i$ distribution (plots not shown), none of the procedures (including the bootstrap) converge to low relative error, which is unsurprising given the $\StudentT(3)$ distribution's lack of moments beyond order two.

For the classification setting, we generate each dataset considered from either a linear model $Y_i \sim \Bernoulli((1 + \exp(-\tilde{X}_i^T \1))^{-1})$ or a model $Y_i \sim \Bernoulli((1 + \exp(-\tilde{X}_i^T \1 - \tilde{X}_i^T \tilde{X}_i))^{-1})$ having a quadratic term, with $d=10$.  We use the three different distributions on $\tilde{X}_i$ defined in the regression setting.  Our estimator, under both the linear and quadratic data generating distributions, consists of a linear (in $\tilde{X}_i$) logistic regression fit via Newton's method, again using an $L_2$ penalty term with weight $10^{-5}$ to encourage numerical stability.  For this estimation task with $n=20,000$, the true average (across dimensions) marginal confidence interval width for the estimated parameter vector is approximately 0.1 under the linear data generating distributions (for all $\tilde{X_i}$ distributions) and approximately 0.02 under the quadratic data generating distributions.

\begin{figure*}
\includegraphics[width=0.32\linewidth]{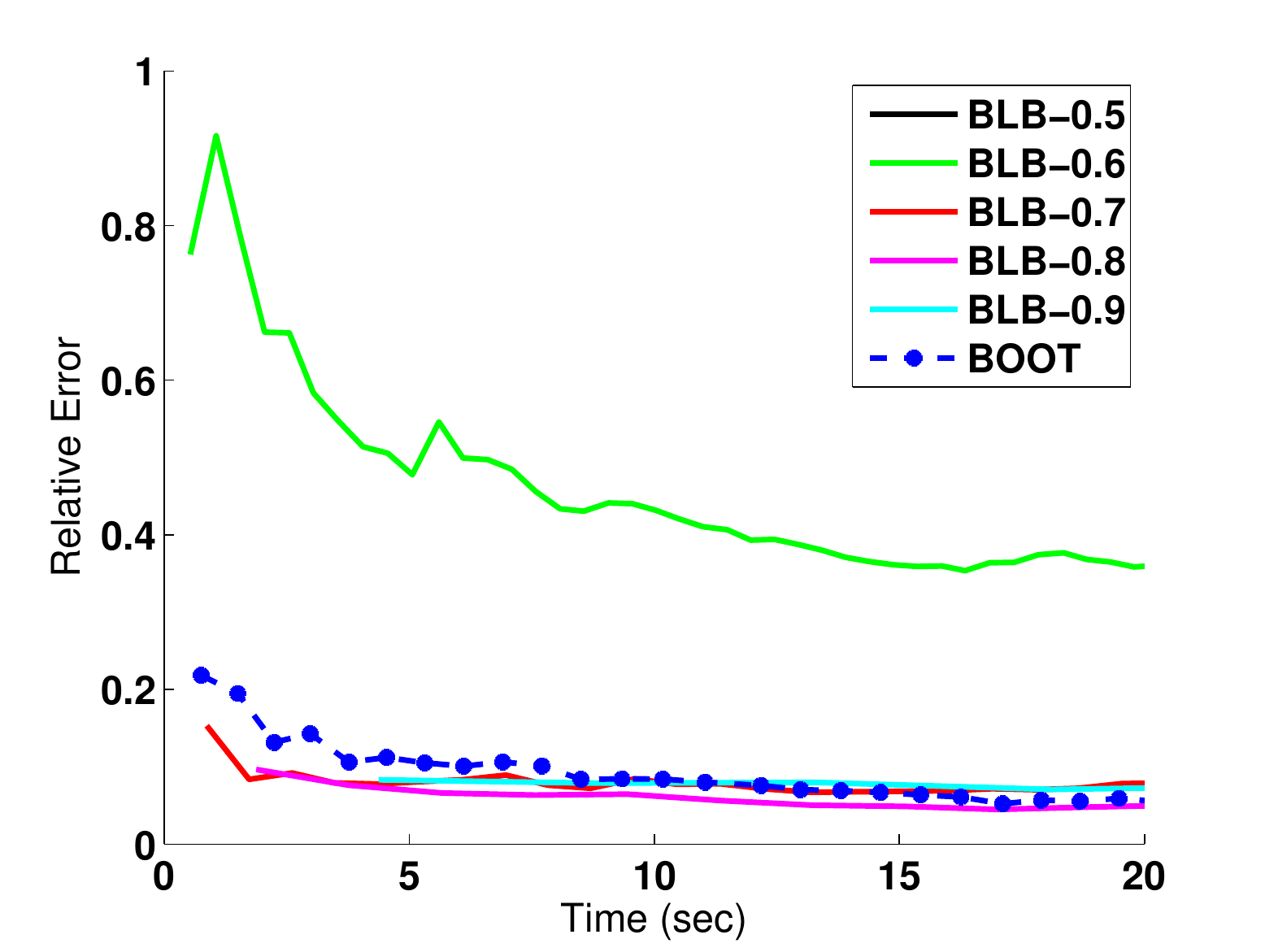}
\includegraphics[width=0.32\linewidth]{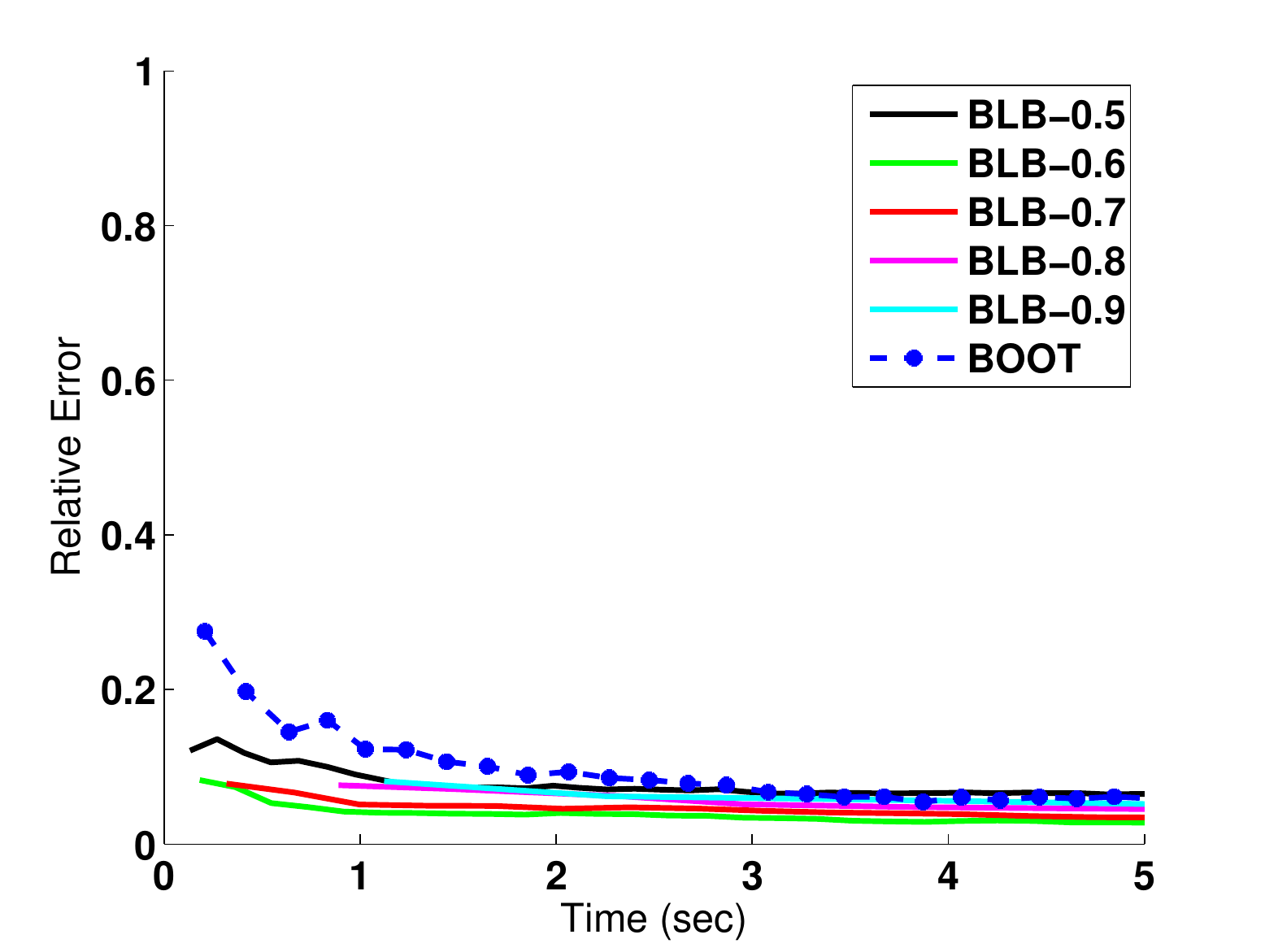}
\includegraphics[width=0.32\linewidth]{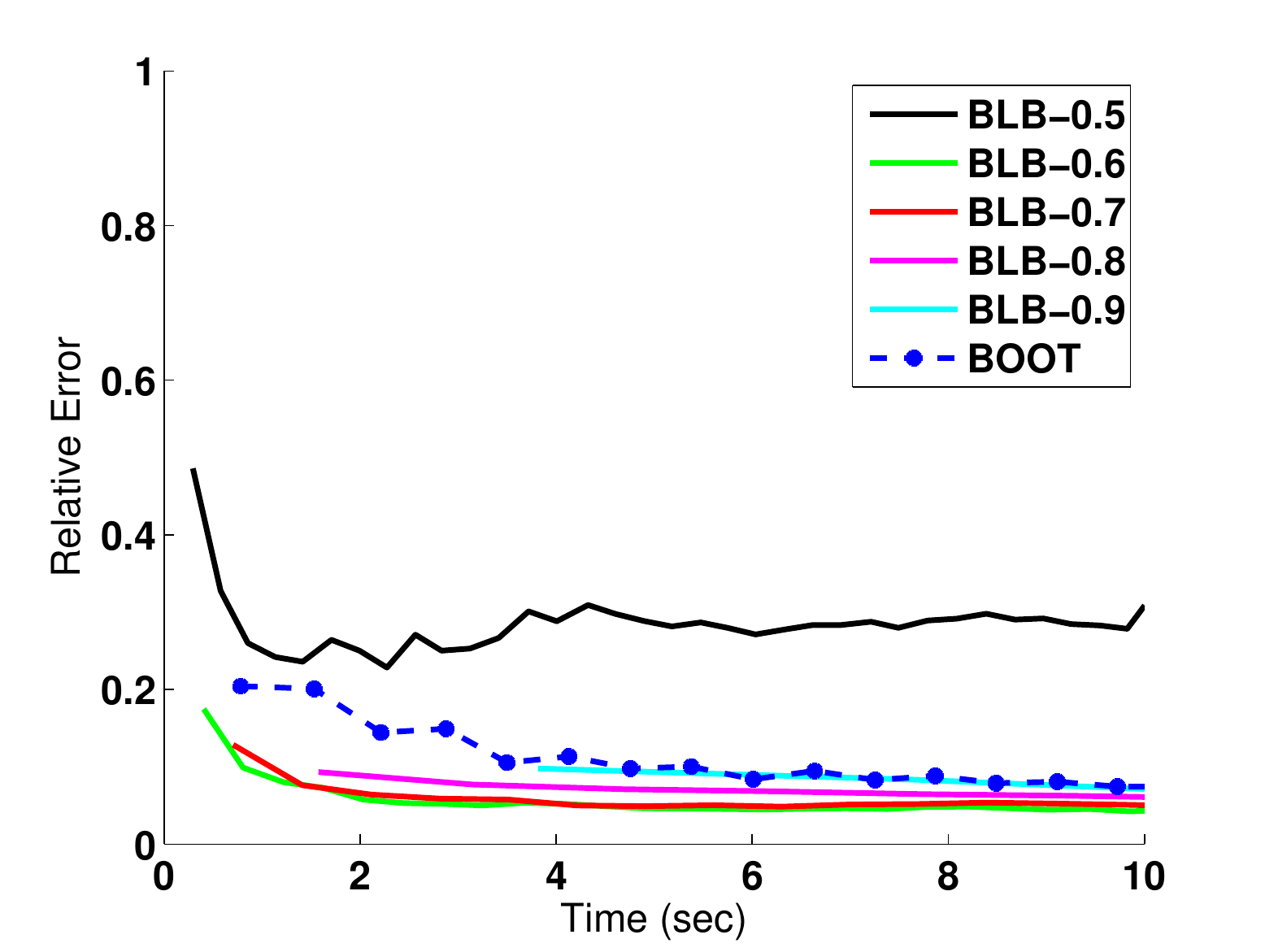}
\\
\includegraphics[width=0.32\linewidth]{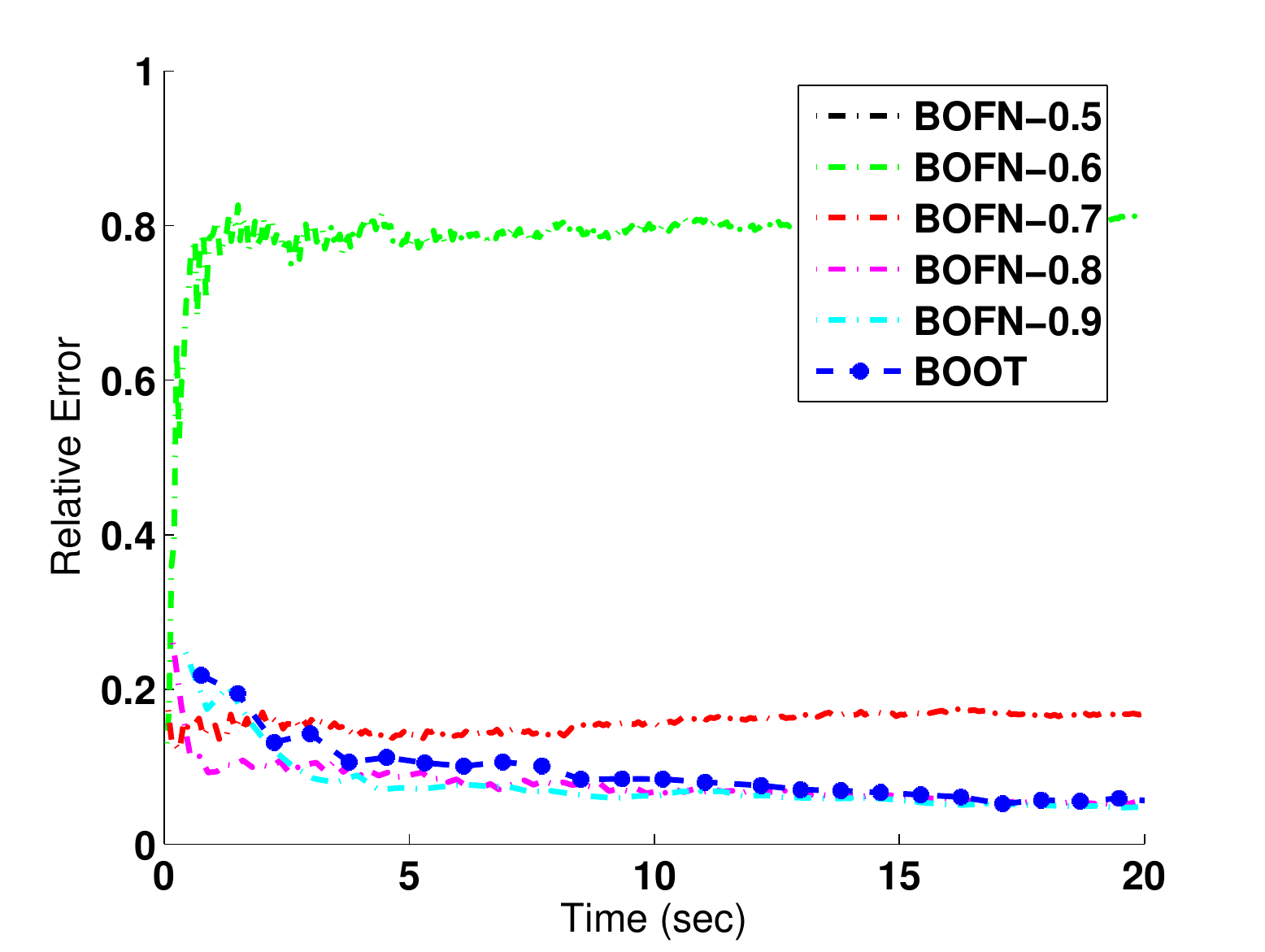}
\includegraphics[width=0.32\linewidth]{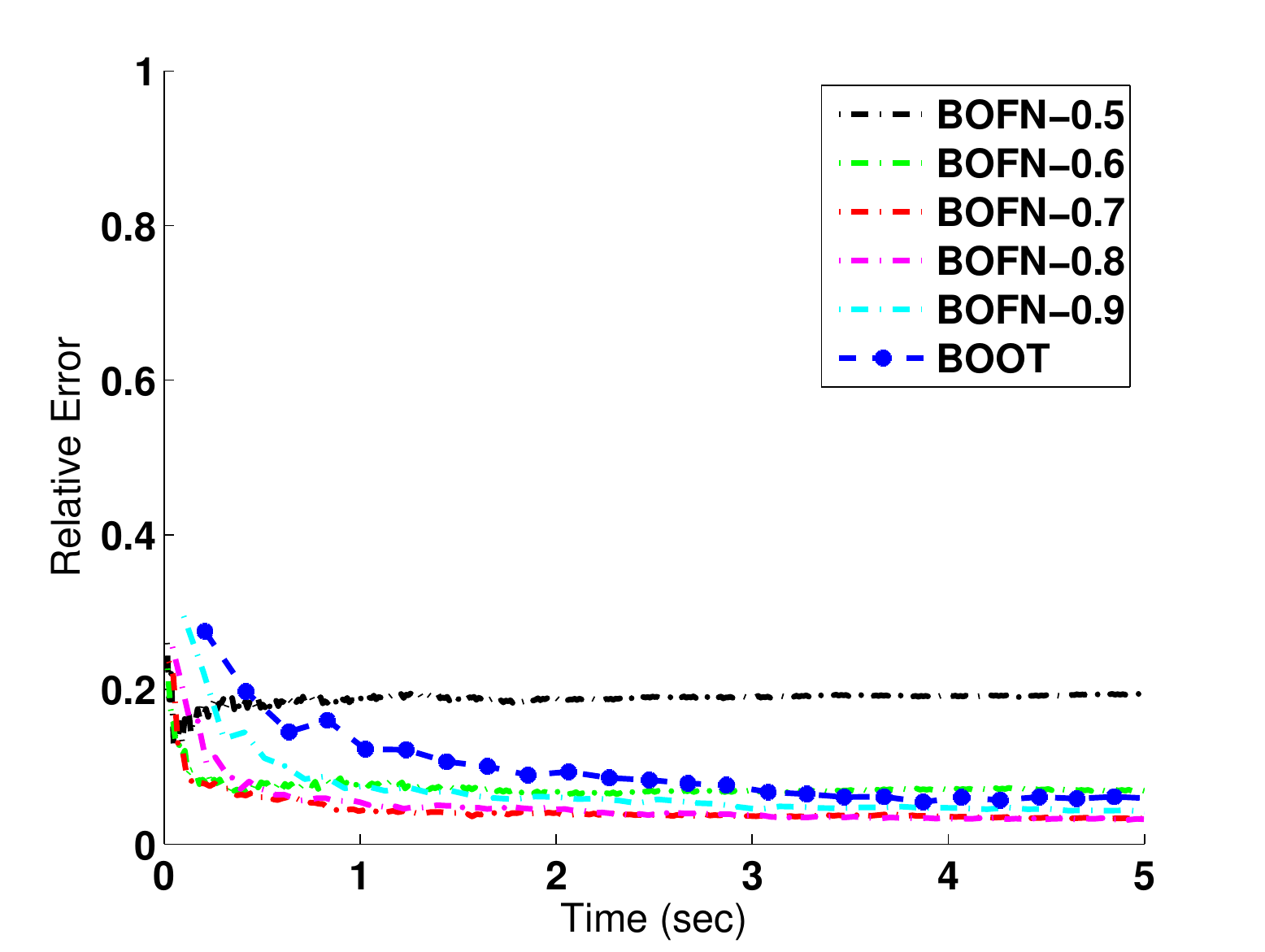}
\includegraphics[width=0.32\linewidth]{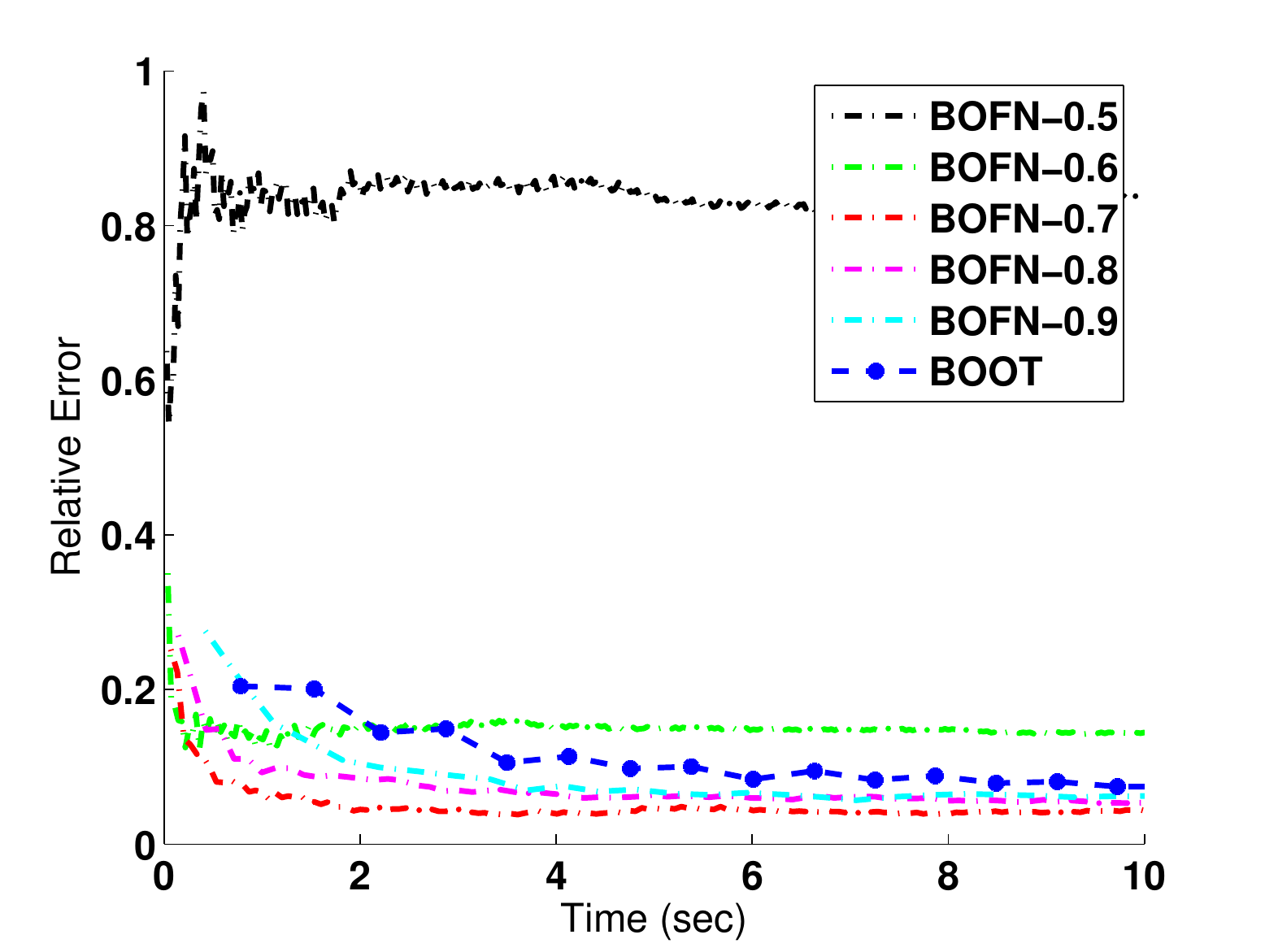}
\caption{Relative error vs.\ processing time for classification setting with $n=20,000$.  The top row shows \ouralgAbbrev with bootstrap (BOOT); bottom row shows $b$ out of $n$ bootstrap (BOFN).  For both \ouralgAbbrev and BOFN, $b=n^\gamma$ with the value of $\gamma$ for each trajectory given in the legend.  The left column shows results for logistic regression with linear data generating distribution and $\GammaDist$ $\tilde{X}_i$ distribution.  The middle column shows results for logistic regression with quadratic data generating distribution and $\GammaDist$ $\tilde{X}_i$ distribution.  The right column shows results for logistic regression with linear data generating distribution and $\StudentT$ $\tilde{X}_i$ distribution.}
\label{fig:logreg}
\end{figure*}

Figure~\ref{fig:logreg} shows results for the classification setting under the linear and quadratic data generating distributions with the $\GammaDist$ and $\StudentT$ $\tilde{X}_i$ distributions, and $n = 20,000$ (as in Figure~\ref{fig:linreg}); results for the $\Normal$ $\tilde{X_i}$ distribution are qualitatively similar.  Here, the performance of the various procedures is more varied than in the regression setting.  The case of the linear data generating distribution with $\GammaDist$ $\tilde{X}_i$ distribution (left column of Figure~\ref{fig:logreg}) appears to be the most challenging.  In this setting, \ouralgAbbrev converges to relative error comparable to that of the bootstrap for $b > n^{0.6}$, while converging to higher relative errors for the smallest values of $b$ considered.  For the larger values of $b$, which are still significantly smaller than $n$, we again converge to low relative error faster than the bootstrap.  We are also once again more robust than the $b$ out of $n$ bootstrap, which fails to converge to low relative error for $b \leq n^{0.7}$.  In fact, even for $b \leq n^{0.6}$, \ouralgAbbrev's performance is superior to that of the $b$ out of $n$ bootstrap.  Qualitatively similar results hold for the other data generating distributions, but with \ouralgAbbrev and the $b$ out of $n$ bootstrap both performing better relative to the bootstrap.  In the experiments shown in Figure~\ref{fig:logreg}, the values of $\numsub$ (which are implicit in the time axes of our plots) required for convergence of \ouralgAbbrev range from 1-2 for $b=n^{0.9}$ up to 10-20 for $b \leq n^{0.6}$ (for cases in which \ouralgAbbrev converges to low relative error).  As in the regression setting, subsampling (plots not shown) has performance strictly worse than that of the $b$ out of $n$ bootstrap in all cases.

\begin{figure*}
\includegraphics[width=0.5\linewidth]{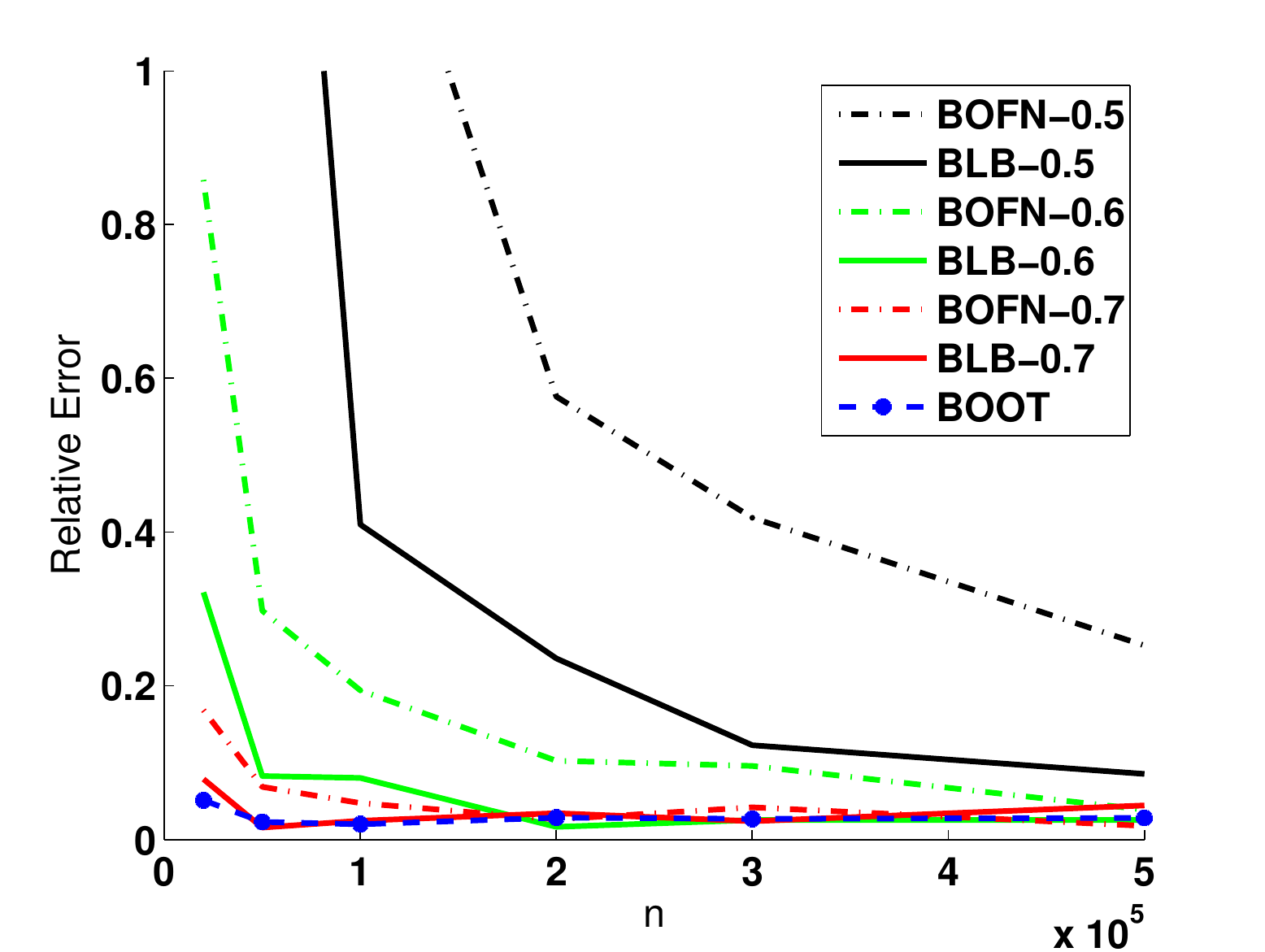}
\includegraphics[width=0.5\linewidth]{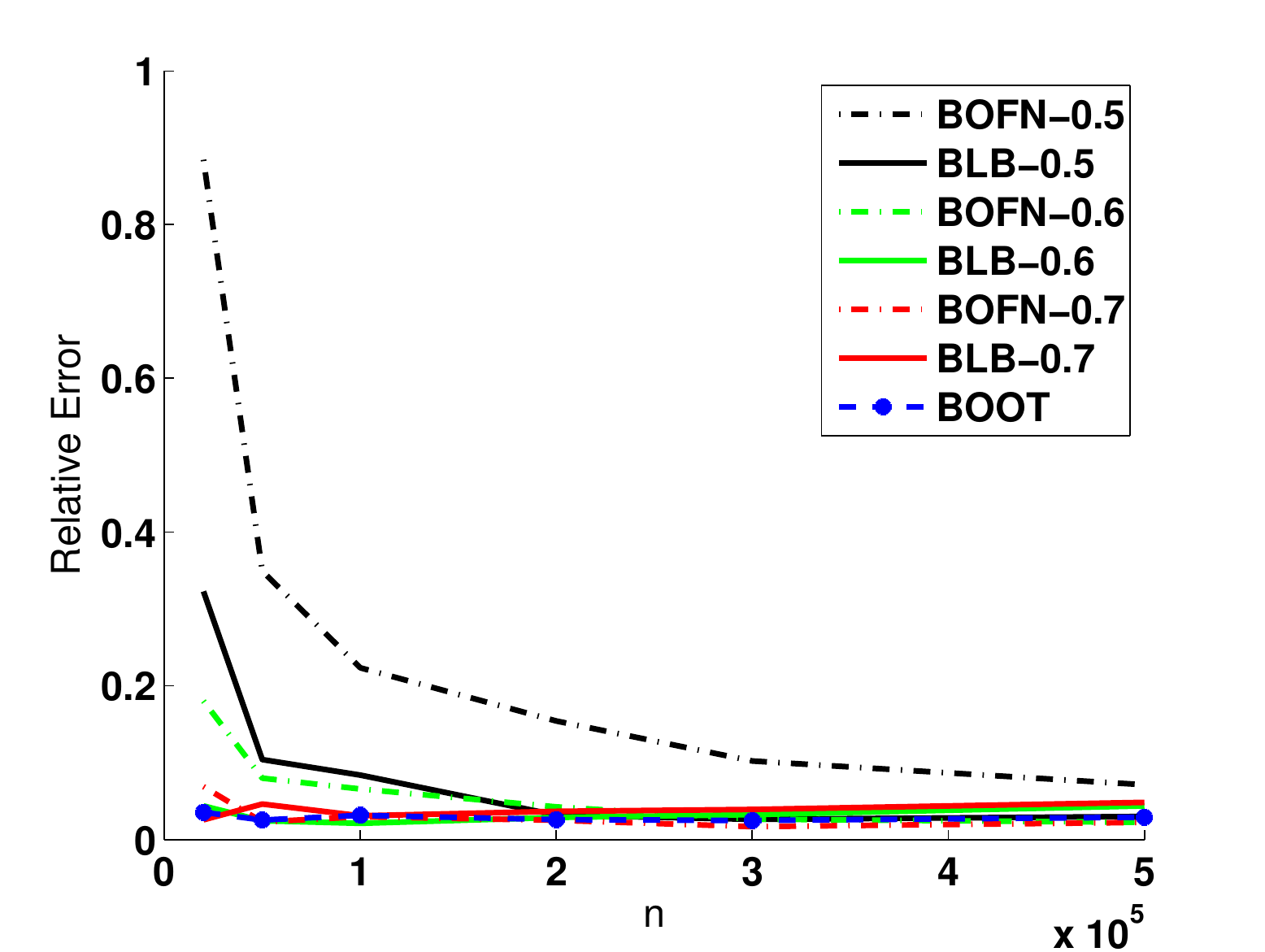}
\caption{Relative error (after convergence) vs.\ $n$ for \ouralgAbbrev, the $b$ out of $n$ bootstrap (BOFN), and the bootstrap (BOOT) in the classification setting.  For both \ouralgAbbrev and BOFN, $b=n^\gamma$ with the relevant values of $\gamma$ given in the legend.  The left plot shows results for logistic regression with linear data generating distribution and $\GammaDist$ $\tilde{X}_i$ distribution.  The right plot shows results for logistic regression with linear data generating distribution and $\StudentT$ $\tilde{X}_i$ distribution.}
\label{fig:logreg-vary-n}
\end{figure*}

To further examine the cases in which \ouralgAbbrev (when using small values of $b$) does not converge to relative error comparable to that of the bootstrap, we explore how the various procedures' relative errors vary with $n$.  In particular, for different values of $n$ (and $b$), we run each procedure as described above and report the relative error that it achieves after it converges (i.e., after it has processed sufficiently many subsets, in the case of \ouralgAbbrev, or resamples, in the case of the $b$ out of $n$ bootstrap and the bootstrap, to allow its output to stabilize).  Figure~\ref{fig:logreg-vary-n} shows results for the classification setting under the linear data generating distribution with the $\GammaDist$ and $\StudentT$ $\tilde{X}_i$ distributions; qualitatively similar results hold for the $\Normal$ $\tilde{X}_i$ distribution.  As expected based on our previous results for fixed $n$, \ouralgAbbrev's relative error here is higher than that of the bootstrap for the smallest values of $b$ and $n$ considered.  Nonetheless, \ouralgAbbrev's relative error decreases to that of the bootstrap as $n$ increases---for all considered values of $\gamma$, with $b = n^\gamma$---in accordance with our theoretical analysis; indeed, as $n$ increases, we can set $b$ to progressively more slowly growing functions of $n$ while still achieving low relative error.  Furthermore, \ouralgAbbrev's relative error is consistently substantially lower than that of the $b$ out of $n$ bootstrap and decreases more quickly to the low relative error of the bootstrap as $n$ increases.

\section{Computational Scalability}
\label{sec:scalability}

The experiments of the preceding section, though primarily intended to investigate statistical performance, also provide some insight into computational performance: as seen in Figures~\ref{fig:linreg} and~\ref{fig:logreg}, when computing on a single processor, \ouralgAbbrev generally requires less time, and hence less total computation, than the bootstrap to attain comparably high accuracy.  Those results only hint at \ouralgAbbrev's superior ability to scale computationally to large datasets, which we now demonstrate in full in the following discussion and via large-scale experiments on a distributed computing platform.

As discussed in Section~\ref{sec:blb}, modern massive datasets often exceed both the processing and storage capabilities of individual processors or compute nodes, thus necessitating the use of parallel and distributed computing architectures.
As a result, the scalability of a quality assessment method is closely tied to its ability to effectively utilize such computing resources.

Recall from our exposition in preceding sections that, due to the large size of bootstrap resamples,
the following is the most natural avenue for applying the bootstrap to large-scale data using distributed computing: given data partitioned across a cluster of compute nodes, parallelize the estimate computation on each resample across the cluster, and compute on one resample at a time.  This approach, while at least potentially feasible, remains quite problematic.  Each computation of the estimate will require the use of an entire cluster of compute nodes, and the bootstrap repeatedly incurs the associated overhead, such as the cost of repeatedly communicating intermediate data among nodes.  Additionally, many cluster computing systems currently in widespread use (e.g., Hadoop MapReduce~\cite{hadoop-website}) store data only on disk, rather than in memory, due to physical size constraints (if the dataset size exceeds the amount of available memory) or architectural constraints (e.g., the need for fault tolerance).  In that case, the bootstrap incurs the extreme costs associated with repeatedly reading a very large dataset from disk---reads from disk are orders of magnitude slower than reads from memory.  Though disk read costs may be acceptable when (slowly) computing only a single full-data point estimate, they easily become prohibitive when computing many estimates on one hundred or more resamples.  Furthermore, as we have seen, executing the bootstrap at scale requires implementing the estimator such that it can be run on data distributed over a cluster of compute nodes.

In contrast, \ouralgAbbrev permits computation on multiple (or even all) subsamples and resamples simultaneously in parallel, allowing for straightforward distributed and parallel implementations which enable effective scalability and large computational gains.  Because \ouralgAbbrev subsamples and resamples can be significantly smaller than the original dataset, they can be transferred to, stored by, and processed independently on individual (or very small sets of) compute nodes.  For instance, we can distribute subsamples to different compute nodes and subsequently use intra-node parallelism to compute across different resamples generated from the same subsample.  Note that generation and distribution of the subsamples requires only a single pass over the full dataset (i.e., only a single read of the full dataset from disk, if it is stored only on disk), after which all required data (i.e., the subsamples) can potentially be stored in memory.  Beyond this significant architectural benefit, we also achieve implementation and algorithmic benefits: we do not need to parallelize the estimator internally to take advantage of the available parallelism, as \ouralgAbbrev uses this available parallelism to compute on multiple resamples simultaneously, and exposing the estimator to only $b$ rather than $n$ distinct points significantly reduces the computational cost of estimation, particularly if the estimator computation scales super-linearly.

Given the shortcomings of the $m$ out of $n$ bootstrap and subsampling illustrated in the preceding section, we do not include these methods in the scalability experiments of this section.  However, it is worth noting that these procedures have a significant computational shortcoming in the setting of large-scale data: the $m$ out of $n$ bootstrap and subsampling require repeated access to many different random subsets of the original dataset (in contrast to the relatively few, potentially disjoint, subsamples required by \ouralgAbbrev), and this access can be quite costly when the data is distributed across a cluster of compute nodes.

We now detail our large-scale experiments on a distributed computing platform.  For this empirical study, we use the experimental setup of Section~\ref{sec:simulation}, with some modification to accommodate larger scale and distributed computation.  First, we now use $d = 3,000$ and $n = 6,000,000$ so that the size of a full observed dataset is approximately 150~GB.  The full dataset is partitioned across a number of compute nodes.  We again use simulated data to allow knowledge of ground truth; due to the substantially larger data size and attendant higher running times, we now use 200 independent realizations of datasets of size $n$ to numerically compute the ground truth.  As our focus is now computational (rather than statistical) performance, we present results here for a single data generating distribution which yields representative statistical performance based on the results of the previous section; for a given dataset size, changing the underlying data generating distribution does not alter the computational resources required for storage and processing.  For the experiments in this section, we consider the classification setting with $\StudentT$ $\tilde{X}_i$ distribution.  The mapping between $\tilde{X}_i$ and $Y_i$ remains similar to that of the linear data generating distribution in Section~\ref{sec:simulation}, but with the addition of a normalization factor to prevent degeneracy when using larger $d$: $Y_i \sim \Bernoulli((1 + \exp(-\tilde{X}_i^T \1 / \sqrt{d}))^{-1})$.  We implement the logistic regression using L-BFGS~\cite{nocedal-wright} due to the significantly larger value of $d$.

We compare the performance of \ouralgAbbrev and the bootstrap, both implemented as described above.  That is, our implementation of \ouralgAbbrev processes all subsamples simultaneously in parallel on independent compute nodes; we use $r=50$, $s=5$, and $b=n^{0.7}$.  Our implementation of the bootstrap uses all available processors to compute on one resample at a time, with computation of the logistic regression parameter estimates parallelized across the available compute nodes by simply distributing the relevant gradient computations among the different nodes upon which the data is partitioned.  We utilize Poisson resampling~\cite{vdv-wellner} to generate bootstrap resamples, thereby avoiding the complexity of generating a random multinomial vector of length $n$ in a distributed fashion.  Due to high running times, we show results for a single trial of each method, though we have observed little variability in qualitative outcomes during development of these experiments.  All experiments in this section are run on Amazon EC2 and implemented in the Scala programming language using the Spark cluster computing framework~\cite{spark-paper}, which provides the ability to either read data from disk (in which case performance is similar to that of Hadoop MapReduce) or cache it in memory across a cluster of compute nodes (provided that sufficient memory is available) for faster repeated access.

\begin{figure*}
\includegraphics[width=0.5\linewidth]{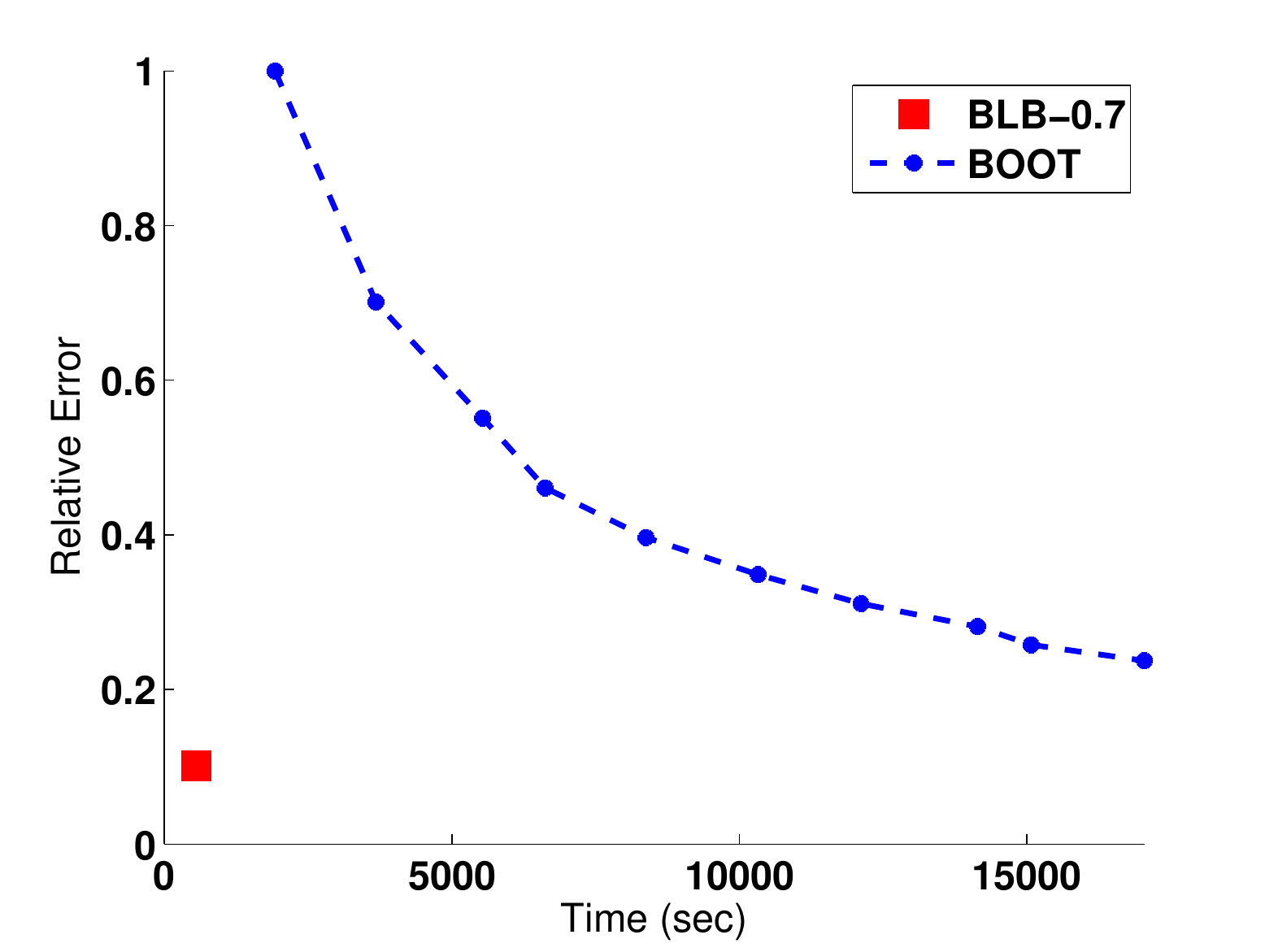}
\includegraphics[width=0.5\linewidth]{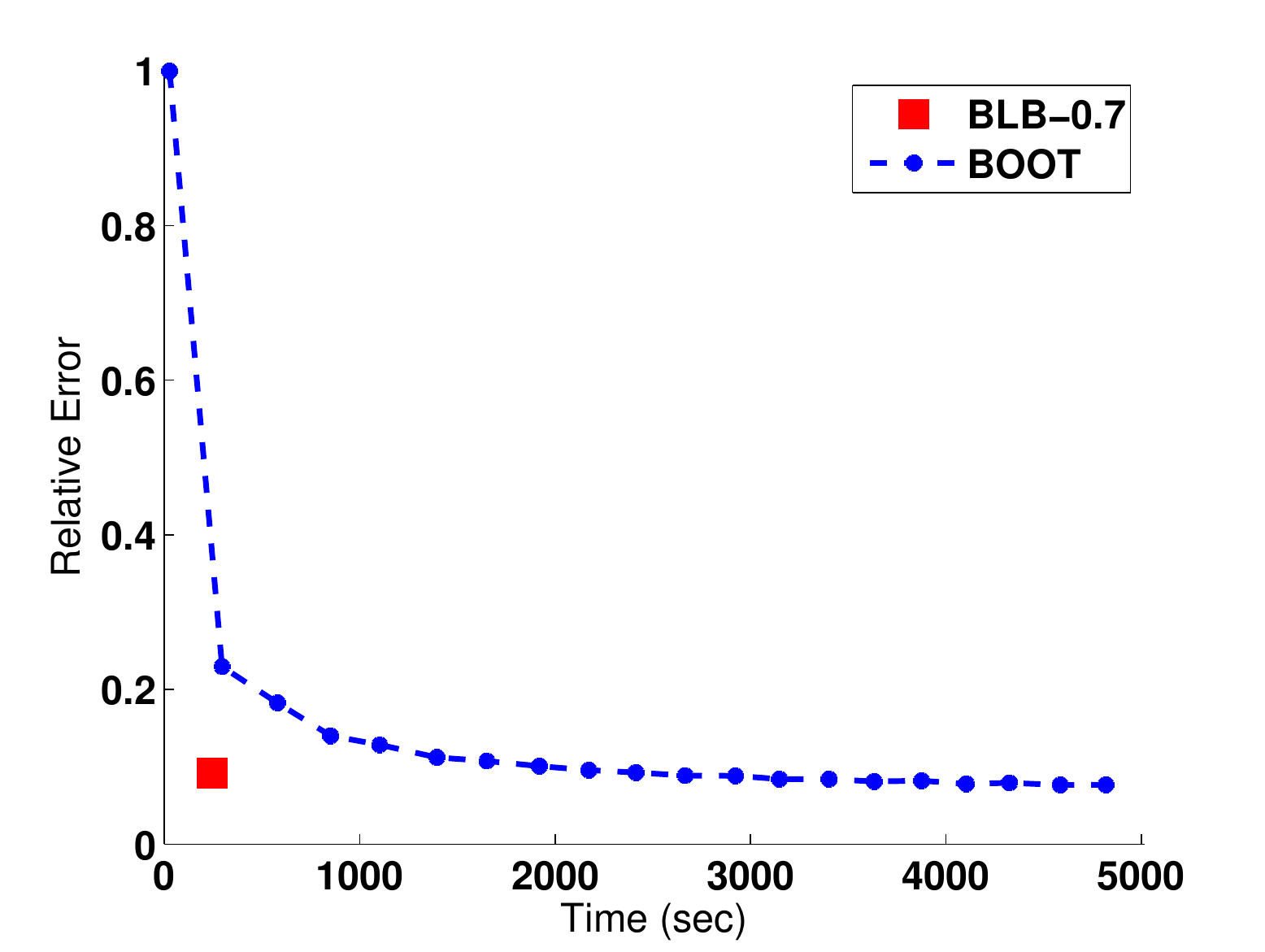}
\caption{Relative error vs.\ processing time for \ouralgAbbrev (with $b=n^{0.7}$) and the bootstrap (BOOT) on 150~GB of data in the classification setting.  The left plot shows results with the full dataset stored only on disk; the right plot shows results with the full dataset cached in memory.  Because \ouralgAbbrev's computation is fully parallelized across all subsamples, we show only the processing time and relative error of \ouralgAbbrev's final output.}
\label{fig:scalability}
\end{figure*}

In the left plot of Figure~\ref{fig:scalability}, we show results obtained using a cluster of 10 worker nodes, each having 6~GB of memory and 8~compute cores; thus, the total memory of the cluster is 60~GB, and the full dataset (150~GB) can only be stored on disk (the available disk space is ample and far exceeds the dataset size).  As expected, the time required by the bootstrap to produce even a low-accuracy output is prohibitively high, while \ouralgAbbrev provides a high-accuracy output quite quickly, in less than the time required to process even a single bootstrap resample.  In the right plot of Figure~\ref{fig:scalability}, we show results obtained using a cluster of 20 worker nodes, each having 12~GB of memory and 4~compute cores; thus, the total memory of the cluster is 240~GB, and we cache the full dataset in memory for faster repeated access.  Unsurprisingly, the bootstrap's performance improves significantly with respect to the previous disk-bound experiment.  However, the performance of \ouralgAbbrev (which also improves), remains substantially better than that of the bootstrap.

\section{Hyperparameter Selection}
\label{sec:hyperparam}

Like existing resampling-based procedures such as the bootstrap, \ouralgAbbrev requires the specification of hyperparameters controlling the number of subsamples and resamples processed.  Setting such hyperparameters to be sufficiently large is necessary to ensure good statistical performance; however, setting them to be unnecessarily large results in wasted computation.  Prior work on the bootstrap and related procedures---which largely does not address computational issues---generally assumes that a procedure's user will simply select a priori a large, constant number of resamples to be processed (with the exception of~\citet{tibs-how-many}, who does not provide a general solution for this issue).  However, this approach reduces the level of automation of these methods and can be quite inefficient in the large data setting, in which each subsample or resample can require a substantial amount of computation.

Thus, we now examine the dependence of \ouralgAbbrev's performance on the choice of $r$ and $s$, with the goal of better understanding their influence and providing guidance toward achieving adaptive methods for their selection.  For any particular application of \ouralgAbbrev, we seek to select the minimal values of $r$ and $s$ which are sufficiently large to yield good statistical performance.

Recall that in the simulation study of Section~\ref{sec:simulation}, across all of the settings considered, fairly modest values of $r$ (100 for confidence intervals) and $s$ (from 1-2 for $b=n^{0.9}$ up to 10-20 for $b = n^{0.6}$) were sufficient.  The left plot of Figure~\ref{fig:hyperparam} provides further insight into the influence of $r$ and $s$, giving the relative errors achieved by \ouralgAbbrev with $b=n^{0.7}$ for different $r, s$ pairs in the classification setting with linear data generating distribution and $\StudentT$ $\tilde{X}_i$ distribution.  In particular, note that for all but the smallest values of $r$ and $s$, it is possible to choose these values independently such that \ouralgAbbrev achieves low relative error; in this case, selecting $s \geq 3, r \geq 50$ is sufficient.

\begin{figure*}
\centering
\begin{minipage}[c]{0.6\linewidth}
	\includegraphics[width=0.48\linewidth]{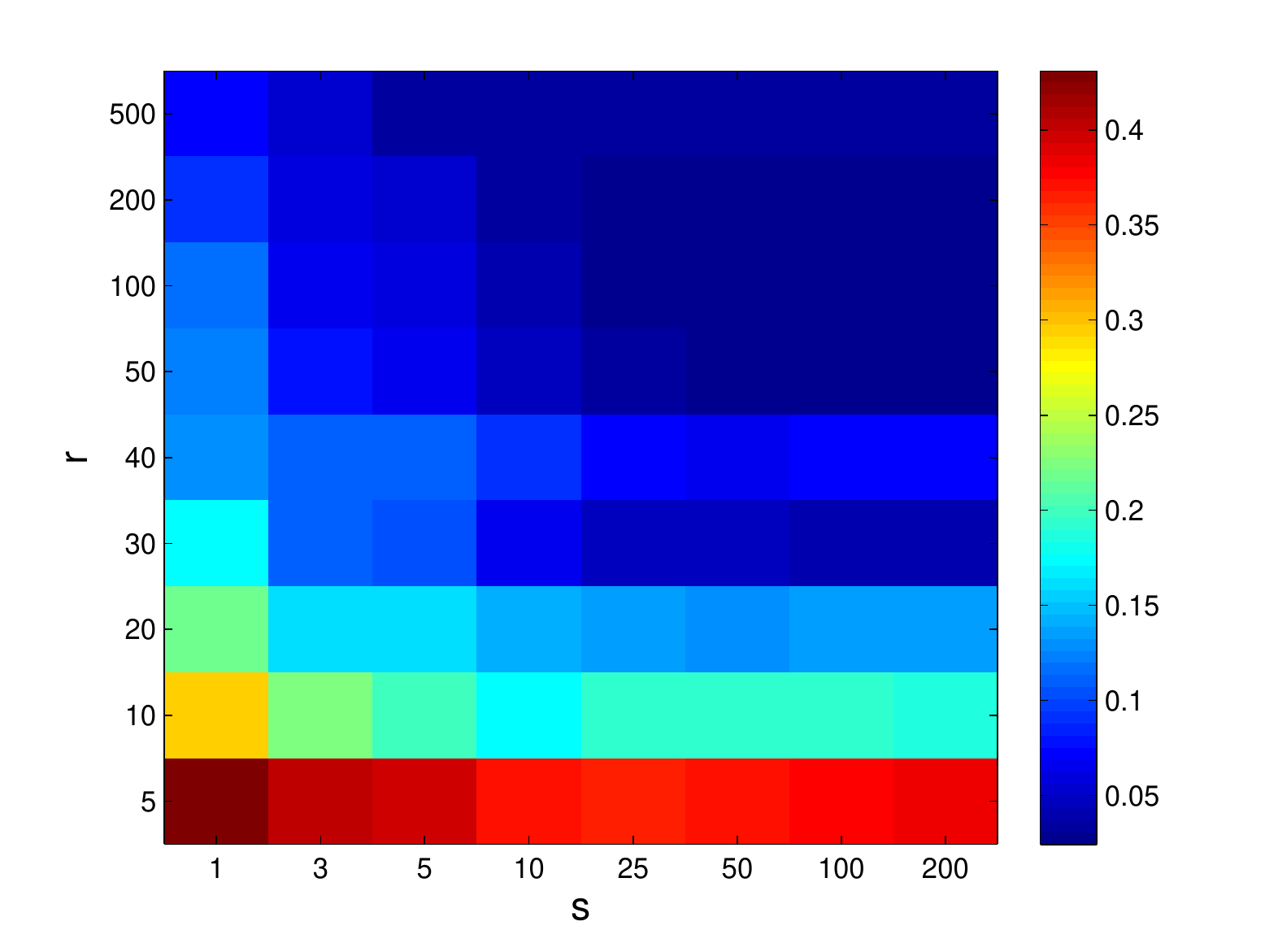}
	\includegraphics[width=0.48\linewidth]{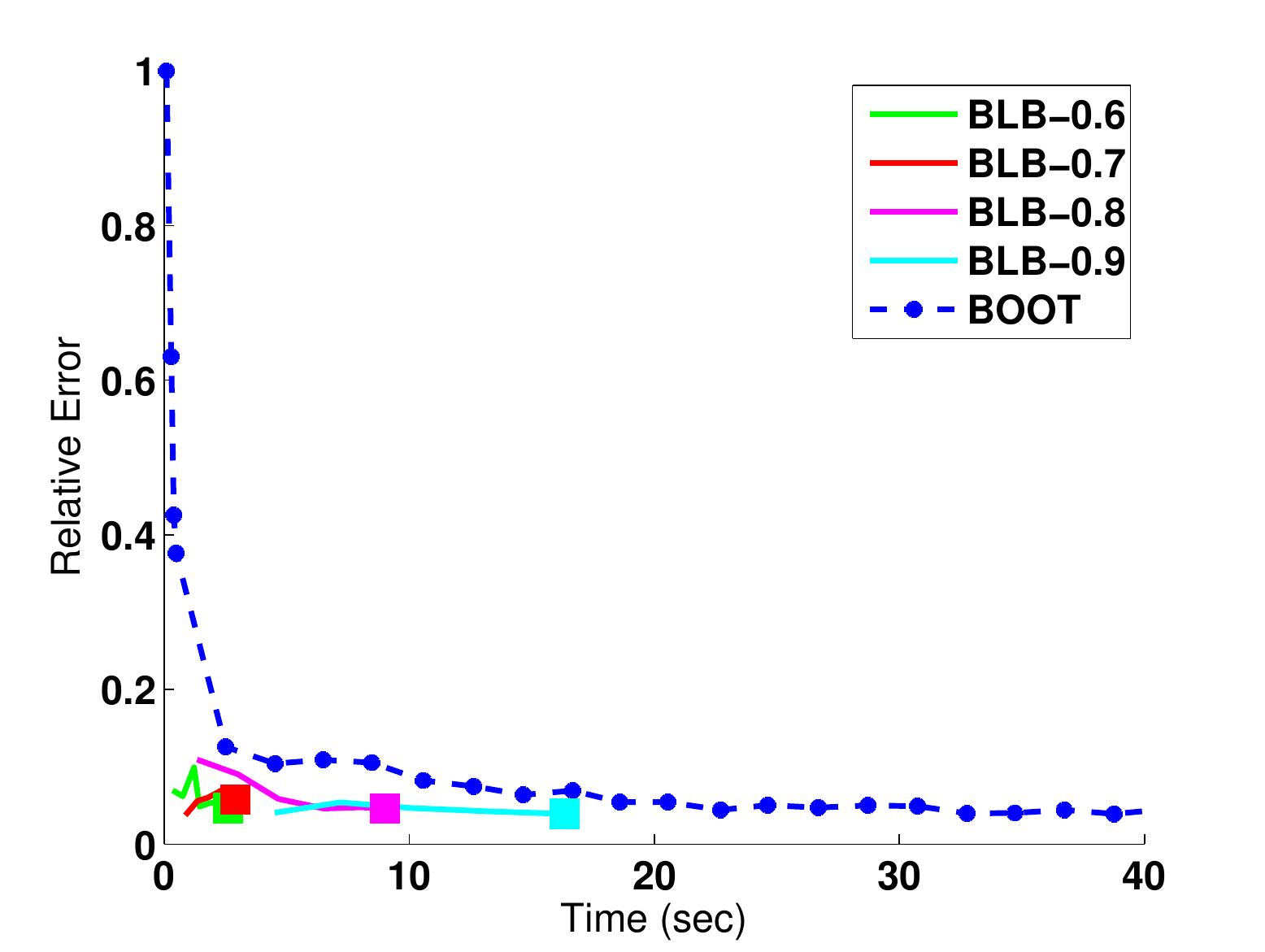}
\end{minipage}
\begin{minipage}[c]{0.3\linewidth}
{
\begin{small}
\begin{sc}
\begin{tabular}{lcc}
\hline
$r$ Stats & CI  & STDERR \\
\hline
mean & 89.6 & 67.7 \\
min & 50 & 40 \\
max & 150 & 110 \\
\hline
\end{tabular}
\end{sc}
\end{small}
} \end{minipage}
\caption{Results for \ouralgAbbrev hyperparameter selection in the classification setting with linear data generating distribution and $\StudentT$ $\tilde{X}_i$ distribution.  The left plot shows the relative error achieved by \ouralgAbbrev for different values of $r$ and $s$, with $b=n^{0.7}$.  The right plot shows relative error vs.\ processing time (without parallelization) for \ouralgAbbrev using adaptive selection of $r$ and $s$ (the resulting stopping times of the \ouralgAbbrev trajectories are marked by large squares) and the bootstrap (BOOT); for \ouralgAbbrev, $b=n^\gamma$ with the value of $\gamma$ for each trajectory given in the legend.  The table gives statistics of the different values of $r$ selected by \ouralgAbbrev's adaptive hyperparameter selection (across multiple subsamples, with $b=n^{0.7}$) when $\xi$ is either our usual confidence interval width-based quality measure (CI), or a component-wise standard error (STDERR); the relative errors achieved by \ouralgAbbrev and the bootstrap are comparable in both cases.}
\label{fig:hyperparam}
\end{figure*}

While these results are useful and provide some guidance for hyperparameter selection, we expect the sufficient values of $r$ and $s$ to change based on the identity of $\xi$ (e.g., we expect a confidence interval to be harder to compute and hence to require larger $r$ than a standard error) and the properties of the underlying data.  Thus, to help avoid the need to set $r$ and $s$ to be conservatively and inefficiently large, we now provide a means for adaptive hyperparameter selection, which we validate empirically.

Concretely, to select $r$ adaptively in the inner loop of Algorithm~\ref{alg:ouralg}, we propose an iterative scheme whereby, for any given subsample $j$, we continue to process resamples and update $\s{\xi}_{n,j}$ until it has ceased to change significantly.  Noting that the values $\s{\sest{\theta}}_{n,k}$ used to compute $\s{\xi}_{n,j}$ are conditionally i.i.d.\ given a subsample, for most forms of $\xi$ the series of computed $\s{\xi}_{n,j}$ values will be well behaved and will converge (in many cases at rate $O(1/\sqrt{r})$, though with unknown constant) to a constant target value as more resamples are processed.  Therefore, it suffices to process resamples (i.e., to increase $r$) until we are satisfied that $\s{\xi}_{n,j}$ has ceased to fluctuate significantly; we propose using Algorithm~\ref{alg:heuristic} to assess this convergence.  The same scheme can be used to select $s$ adaptively by processing more subsamples (i.e., increasing $s$) until \ouralgAbbrev's output value $\numsub^{-1} \sum_{j=1}^{\numsub} \s{\xi}_{n,j}$ has stabilized; in this case, one can simultaneously also choose $r$ adaptively and independently for each subsample.  When parallelizing across subsamples and resamples, one can simply process batches of subsamples and resamples (with batch size determined by the available parallelism) until the output stabilizes.

\begin{algorithm}[tb]
   \caption{Convergence Assessment}
   \label{alg:heuristic}
   \DontPrintSemicolon

   \KwIn{\parbox[t]{0.8\linewidth}{A series $z^{(1)}, z^{(2)}, \ldots, z^{(t)} \in \R^d$
						\newline $w \in \mathbb{N}$: window size ($< t$)
						\newline $\epsilon \in \R$: target relative error ($>0$)}}
   \KwOut{true if and only if the input series is deemed to have ceased to fluctuate beyond the target relative error}
   \BlankLine
   \eIf{$\forall j \in [1, w]$, $\frac{1}{d} \sum_{i=1}^{d} \frac{| z_i^{(t-j)} - z_i^{(t)} |}{| z_i^{(t)} |} \leq \epsilon$}{\Return{true}}{\Return{false}}
\end{algorithm}

The right plot of Figure~\ref{fig:hyperparam} shows the results of applying such adaptive hyperparameter selection in a representative empirical setting from our earlier simulation study (without parallelization).  For selection of $r$ we use $\epsilon=0.05$ and $w=20$, and for selection of $s$ we use $\epsilon=0.05$ and $w=3$.  As illustrated in the plot, the adaptive hyperparameter selection allows \ouralgAbbrev to cease computing shortly after it has converged (to low relative error), limiting the amount of unnecessary computation that is performed without degradation of statistical performance.  Though selected a priori, $\epsilon$ and $w$ are more intuitively interpretable and less dependent on the details of $\xi$ and the underlying data generating distribution than $r$ and $s$.  Indeed, the aforementioned specific values of $\epsilon$ and $w$ yield results of comparably good quality when also used for the other data generation settings considered in Section~\ref{sec:simulation}, when applied to a variety of real datasets in Section~\ref{sec:real-data} below, and when used in conjunction with different forms of $\xi$ (see the table in Figure~\ref{fig:hyperparam}, which shows that smaller values of $r$ are selected when $\xi$ is easier to compute).  Thus, our scheme significantly helps to alleviate the burden of a priori hyperparameter selection.

Automatic selection of a value of $b$ in a computationally efficient manner would also be desirable but is more difficult due to the inability to easily reuse computations performed for different values of $b$.  One could consider similarly increasing $b$ from some small value until the output of \ouralgAbbrev stabilizes (an approach reminiscent of the method proposed in~\citet{bickel-sakov-choice-of-m} for the $m$ out of $n$ bootstrap); devising a means of doing so efficiently is the subject of future work.  Nonetheless, based on our fairly extensive empirical investigation, it seems that $b = n^{0.7}$ is a reasonable and effective choice in many situations.

\section{Real Data}
\label{sec:real-data}

In this section, we present the results of applying \ouralgAbbrev to several different real datasets.  In this case, given the absence of ground truth, it is not possible to objectively evaluate the statistical correctness of any particular estimator quality assessment method; rather, we are reduced to comparing the outputs of various methods (in this case, \ouralgAbbrev, the bootstrap, and the $b$ out of $n$ bootstrap) to each other.  Because we cannot determine the relative error of each procedure's output without knowledge of ground truth, we now instead report the average (across dimensions) absolute confidence interval width yielded by each procedure.

Figure~\ref{fig:real-data-classif} shows results for \ouralgAbbrev, the bootstrap, and the $b$ out of $n$ bootstrap on the UCI connect4 dataset~\cite{uci-repo}, where the model is logistic regression (as in the classification setting of our simulation study above), $d=42$, and $n=67,557$.  We select the \ouralgAbbrev hyperparameters $r$ and $s$ using the adaptive method described in the preceding section.  Notably, the outputs of \ouralgAbbrev for all values of $b$ considered, and the output of the bootstrap, are tightly clustered around the same value; additionally, as expected, \ouralgAbbrev converges more quickly than the bootstrap.  However, the values produced by the $b$ out of $n$ bootstrap vary significantly as $b$ changes, thus further highlighting this procedure's lack of robustness.  We have obtained qualitatively similar results on six additional datasets from the UCI dataset repository (ct-slice, magic, millionsong, parkinsons, poker, shuttle)~\cite{uci-repo} with different estimators (linear regression and logistic regression) and a range of different values of $n$ and $d$ (see the appendix for plots of these results).

\begin{figure*}
\includegraphics[width=0.5\linewidth]{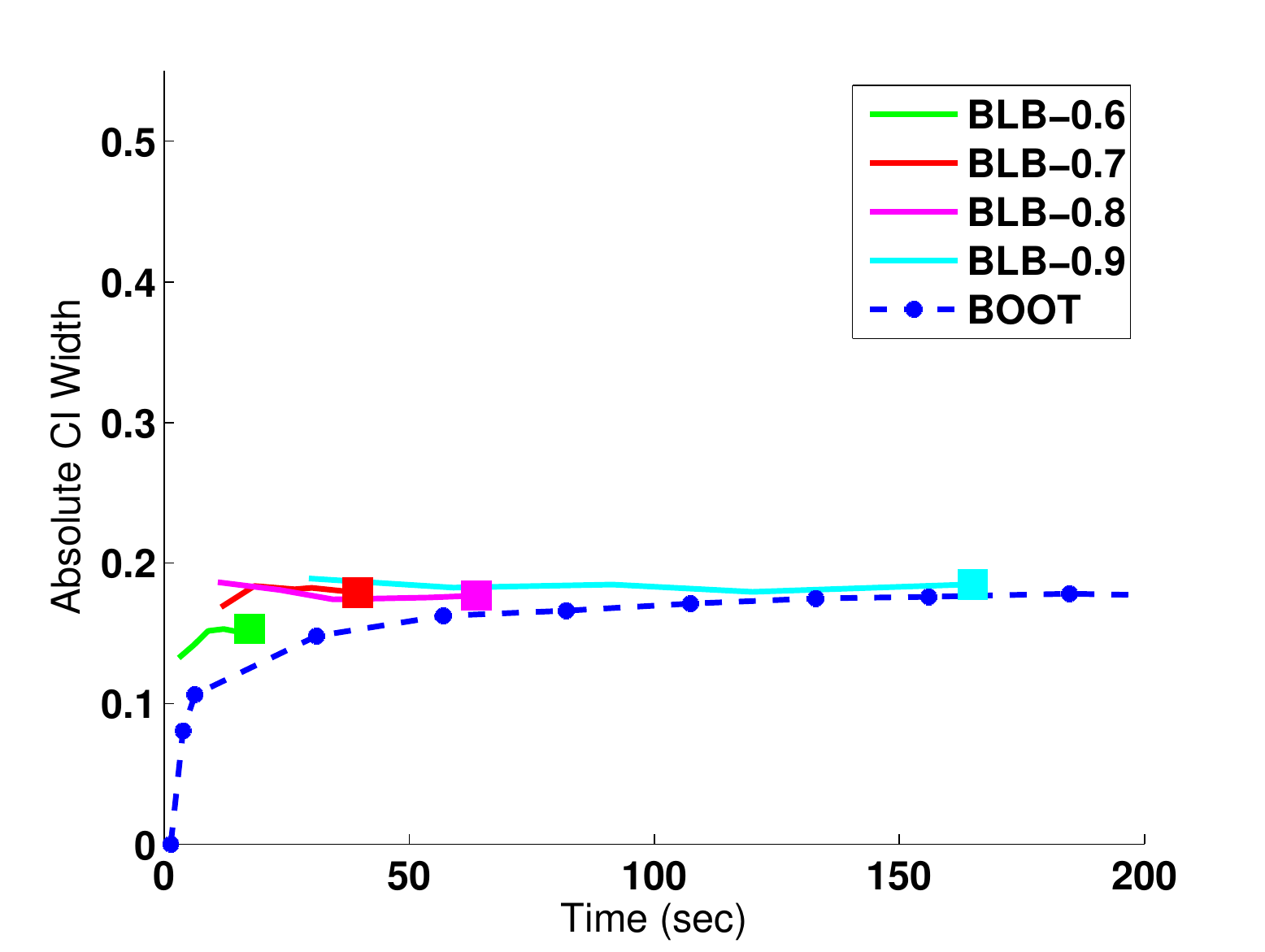}
\includegraphics[width=0.5\linewidth]{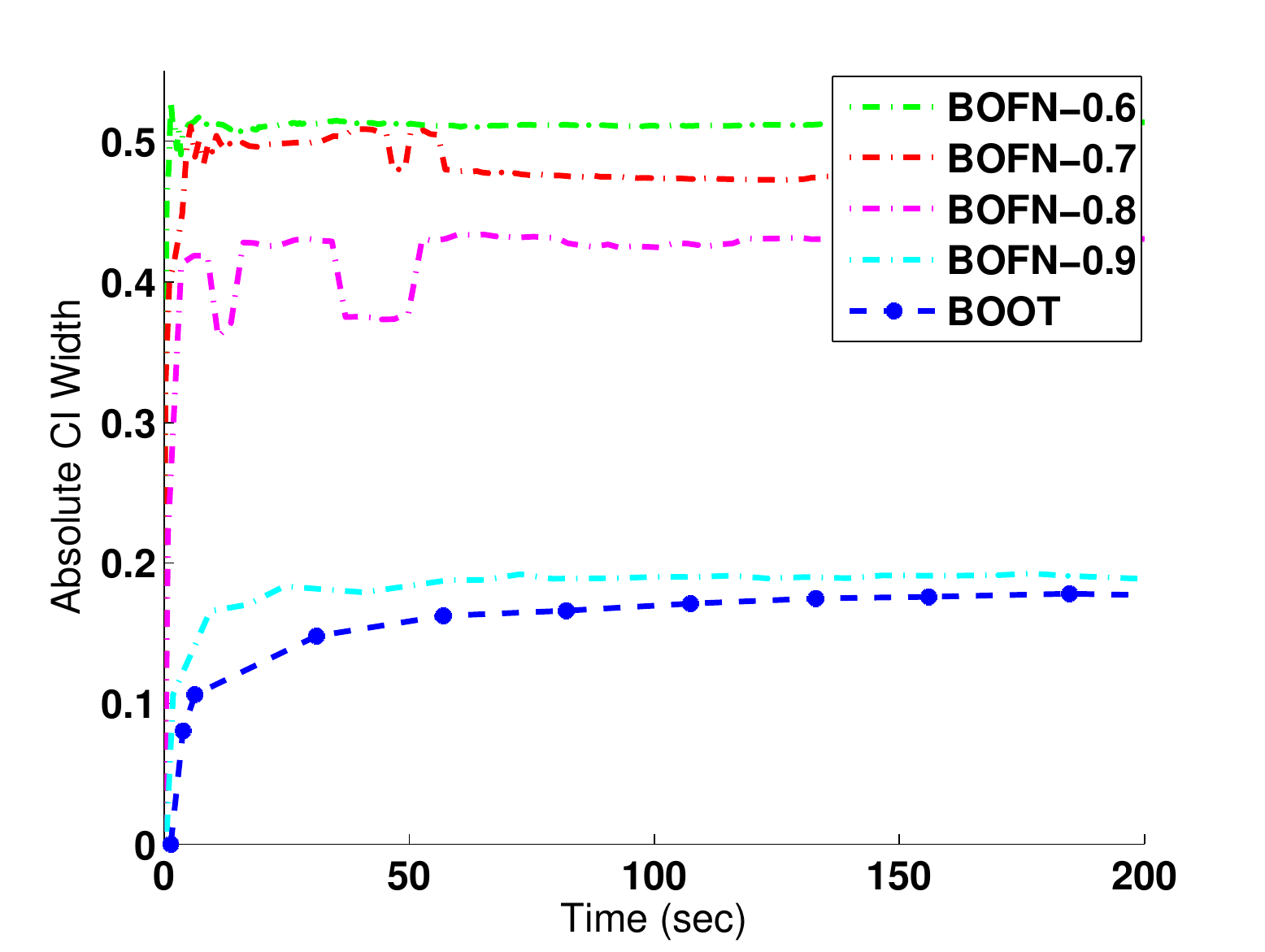}
\caption{Average (across dimensions) absolute confidence interval width vs.\ processing time on the UCI connect4 dataset (logistic regression, $d=42, n= 67,557$).  The left plot shows results for \ouralgAbbrev (using adaptive hyperparameter selection, with the output at convergence marked by large squares) and the bootstrap (BOOT).  The right plot shows results for the $b$ out of $n$ bootstrap (BOFN).  For both \ouralgAbbrev and BOFN, $b=n^\gamma$ with the value of $\gamma$ for each trajectory given in the legend.}
\label{fig:real-data-classif}
\end{figure*}

\section{Time Series}
\label{sec:time-series}

While we have focused thus far on the setting of i.i.d.\ data, variants of the bootstrap---such as the moving block bootstrap and the stationary bootstrap---have been proposed to handle other data analysis settings such as that of time series~\cite{efron-intro-bootstrap, hall-mammen, kunsch-mbb, liu-singh-mbb, politis-romano-statboot}.  These bootstrap variants can be used within \ouralgAbbrev, in computing the requisite plugin approximations $\xi(Q_n(\P_{n,b}^{(j)}))$, to obtain variants of our procedure which are applicable in non-i.i.d.\ settings.  The advantages (e.g., with respect to scalability) of such \ouralgAbbrev variants over variants of the bootstrap (and its relatives) remain identical to the advantages discussed above in the context of large-scale i.i.d.\ data.  We briefly demonstrate the extensibility of \ouralgAbbrev by combining our procedure with the stationary bootstrap~\cite{politis-romano-statboot} to obtain a ``stationary \ouralgAbbrev'' which is suitable for assessing the quality of estimators applied to large-scale stationary time series data.

To extend \ouralgAbbrev in this manner, we must simply alter both the subsample selection mechanism and the resample generation mechanism such that both of these processes respect the underlying data generating process.  In particular, for stationary time series data it suffices to select each subsample as a (uniformly) randomly positioned block of length $b$ within the observed time series of length $n$.  Given a subsample of size $b$, we generate each resample by applying the stationary bootstrap to the subsample to obtain a series of length $n$.  That is, given $p \in [0,1]$ (a hyperparameter of the stationary bootstrap), we first select uniformly at random a data point in the subsample series and then repeat the following process until we have amassed a new series of length $n$: with probability $1-p$ we append to our resample the next point in the subsample series (wrapping around to the beginning if we reach the end of the subsample series), and with probability $p$ we (uniformly at random) select and append a new point in the subsample series.  Given subsamples and resamples generated in this manner, we execute the remainder of the \ouralgAbbrev procedure as described in Algorithm~\ref{alg:ouralg}.

We now present simulation results comparing the performance of the bootstrap, BLB, the stationary bootstrap, and stationary BLB.  In this experiment, initially introduced by~\citet{politis-romano-statboot}, we generate observed data consisting of a stationary time series $X_1, \ldots, X_{n} \in \R$ where $X_t = Z_t + Z_{t-1} + Z_{t-2} + Z_{t-3} + Z_{t-4}$ and the $Z_t$ are drawn independently from $\Normal(0,1)$.  We consider the  task of estimating the standard deviation of the rescaled mean $\sum_{i=1}^n X_t / \sqrt{n}$, which is approximately 5; we set $p = 0.1$ for the stationary bootstrap and stationary BLB.  The results in Table~\ref{tbl:time-series} (for $n=5,000$) show the improvement of the stationary bootstrap over the bootstrap, the similar improvement of stationary \ouralgAbbrev over \ouralgAbbrev, and the fact that the statistical performance of stationary BLB is comparable to that of the stationary bootstrap for $b \ge n^{0.7}$.  Note that this exploration of stationary \ouralgAbbrev is intended as a proof of concept, and additional investigation would help to further elucidate and perhaps improve the performance characteristics of this \ouralgAbbrev extension.

\begin{table}
\begin{center} \begin{small} \begin{sc}
  \begin{tabular}{ccc}
    \hline
    Method & Standard & Stationary \\
    \hline
    BLB-0.6 & $2.2 \pm .1$  & $4.2 \pm .1$ \\
    BLB-0.7 & $2.2 \pm .04$ & $4.5 \pm .1$ \\
    BLB-0.8 & $2.2 \pm .1$  & $4.6 \pm .2$ \\
    BLB-0.9 & $2.2 \pm .1$  & $4.6 \pm .1$ \\
    BOOT & $2.2 \pm .1$  & $4.6 \pm .2$ \\
    \hline
  \end{tabular}
\end{sc} \end{small} \end{center}
\caption{Comparison of standard and stationary bootstrap (BOOT) and \ouralgAbbrev on stationary time series data with $n = 5,000$.  We report the average and standard deviation of estimates (after convergence) of the standard deviation of the rescaled mean aggregated over 10 trials.  The true population value of the standard deviation of the rescaled mean is approximately~5.}
\label{tbl:time-series}
\end{table}

\section{Conclusion}
\label{sec:conclusion}

We have presented a new procedure, \ouralgAbbrev, which provides a powerful new alternative for automatic, accurate assessment of estimator quality that is well suited to large-scale data and modern parallel and distributed computing architectures.  \OuralgAbbrev shares the favorable statistical properties (i.e., consistency and higher-order correctness) and generic applicability of the bootstrap, while typically having a markedly better computational profile, as we have demonstrated via large-scale experiments on a distributed computing platform.  Additionally, \ouralgAbbrev is consistently more robust than the $m$ out of $n$ bootstrap and subsampling to the choice of subset size and does not require the use of analytical corrections.  To enhance our procedure's computational efficiency and render it more automatically usable, we have introduced a means of adaptively selecting its hyperparameters.  We have also applied \ouralgAbbrev to several real datasets and presented an extension to non-i.i.d.\ time series data.

A number of open questions and possible extensions remain.  Though we have constructed an adaptive hyperparameter selection method based on the properties of the subsampling and resampling processes used in \ouralgAbbrev, as well as empirically validated the method, it would be useful to develop a more precise theoretical characterization of its behavior.  Additionally, as discussed in Section~\ref{sec:hyperparam}, it would be beneficial to develop a computationally efficient means of adaptively selecting $b$.  It may also be possible to further reduce $\numre$ by using methods that have been proposed for reducing the number of resamples required by the bootstrap~\cite{efficient-bootstrap-efron, efron-intro-bootstrap}.

Furthermore, it is worth noting that, while \ouralgAbbrev shares the statistical strengths of the bootstrap, we conversely do not expect our procedure to be applicable in cases in which the bootstrap fails~\cite{bootstrap-moutofn}.  Indeed, it was such edge cases that originally motivated development of the $m$ out of $n$ bootstrap and subsampling, which are consistent in various settings that are problematic for the bootstrap.  It would be interesting to investigate the performance of \ouralgAbbrev in such settings and perhaps use ideas from the $m$ out of $n$ bootstrap and subsampling to improve the applicability of \ouralgAbbrev in these edge cases while maintaining computational efficiency and robustness.  Finally, note that averaging the plugin approximations $\xi(Q_n(\P_{n,b}^{(j)}))$ in equation~\eq{ouralg} implicitly corresponds to minimizing the squared error of \ouralgAbbrev's output.  It would be possible to specifically optimize for other losses on our estimator quality assessments by combining the $\xi(Q_n(\P_{n,b}^{(j)}))$ in other ways (e.g., by using medians rather than averages).

{
\bibliographystyle{plainnat}
\bibliography{refs}
}

\appendix
\section{Appendix: Proofs}
\label{sec:appendix-proofs}

We provide here full proofs of the theoretical results included in Section~\ref{sec:theory} above.

\subsection{Consistency}

We first define some additional notation, following that used by~\citet{vdv-wellner}.  Let $l^\infty (\F)$ be the set of all uniformly bounded real functions on $\F$, and let $\text{BL}_1(l^\infty (\F))$ denote the set of all functions $h:l^\infty (\F)\rightarrow [0,1]$ such that $|h(z_1)-h(z_2)|\leq \|z_1-z_2\|_\F, \forall z_1, z_2 \in l^\infty (\F)$, where $\|\cdot\|_\F$ is the uniform norm for maps from $\F$ to $\R$.  We define $Pf$ to be the expectation of $f(X)$ when $X \sim P$; as suggested by this notation, throughout this section we will view distributions such as $P$, $\P_n$, and $\P_{n,b}^{(j)}$ as maps from some function class $\F$ to $\R$.  $\eout{E(\cdot)}$ and $\ein{E(\cdot)}$ denote the outer and inner expectation of $(\cdot)$, respectively, and we indicate outer probability via $P^*$.
$X \dequal Y$ denotes that the random variables $X$ and $Y$ are equal in distribution, and $\F_{\delta}$ is defined as the set $\{f-g:f,g\in\F,\rho_P(f-g)<\delta\}$, where $\rho_P(\cdot)$ is the variance semimetric: $\rho_P(f)=\left(P(f-Pf)^2\right)^{1/2}$.

Following prior analyses of the bootstrap~\cite{gine-zinn, vdv-wellner}, we first observe that, conditioning on $\P_{n,b}^{(j)}$ for any $j$ as $b,n \rightarrow \infty$, resamples from the subsampled empirical distribution $\P_{n,b}^{(j)}$ behave asymptotically as though they were drawn directly from $P$, the true underlying distribution:

\begin{lemma}
\label{lemma:ep-consistency}
Given $\P_{n,b}^{(j)}$ for any $j$, let $\s{X}_1, \ldots, \s{X}_n \sim \P_{n,b}^{(j)}$ i.i.d., and define $\s{\P}_{n,b} = n^{-1} \sum_{i=1}^n \delta_{\s{X}_i}$.  Additionally, we define the resampled empirical process $\s{\G}_{n,b} = \sqrt{n}(\s{\P}_{n,b} - \P_{n,b}^{(j)})$.  Then, for $\F$ a Donsker class of measurable functions such that $\F_\delta$ is measurable for every $\delta > 0$, $$\sup_{h \in \text{BL}_1(l^\infty (\F))} \left| E_{\P_{n,b}^{(j)}} h(\s{\G}_{n,b}) - E h(\G_P) \right| \overset{P^*}{\rightarrow} 0,$$
as $n \rightarrow \infty$, for any sequence $b \rightarrow \infty$, where $E_{\P_{n,b}^{(j)}}$ denotes expectation conditional on the contents of the subscript and $\G_P$ is a $P$-Brownian bridge process.  ``Furthermore, the sequence $\eout{E_{\P_{n,b}^{(j)}} h(\s{\G}_{n,b})} - \ein{E_{\P_{n,b}^{(j)}} h(\s{\G}_{n,b})}$ converges to zero in probability for every $h \in \text{BL}_1(l^\infty (\F))$.  If $P^* \| f - Pf \|_\F^2 < \infty$, then the convergence is also outer almost surely.''~\cite{vdv-wellner}
\end{lemma}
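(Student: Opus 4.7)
The plan is to mirror the Gin\'e--Zinn / van der Vaart--Wellner proof of the standard bootstrap CLT (e.g., Theorem~3.6.2 of \citet{vdv-wellner}) and track the single new feature of our setup: the resample size $n$ may be much larger than the number of atoms $b$ in the measure $\P_{n,b}^{(j)}$ being resampled. I would condition on the subsample indices $\I_j$ throughout and treat all statements as holding conditionally in outer probability, which automatically yields the desired unconditional outer-probability convergence.

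First, I would rewrite $\s{\G}_{n,b}$ in multiplier form. Writing the resampling counts as $(n_1,\ldots,n_b)\sim\Multinomial(n,\1_b/b)$, a direct computation gives
$$\s{\G}_{n,b}(f)=\frac{1}{\sqrt{n}}\sum_{a=1}^{b}\bigl(n_a-\tfrac{n}{b}\bigr)f(X_{i_a}),$$
displaying the process as a sum of centered, exchangeable multipliers indexed by the $b$ subsampled points. I would then combine two ingredients. (i) The subsampled measure behaves like an empirical measure of $b$ i.i.d.\ draws from $P$: the indices form a simple random sample from $\{1,\ldots,n\}$, the $X_i$ are i.i.d.\ $P$, $b\to\infty$, and $\F$ is Donsker with the stated measurability, so $\sqrt{b}(\P_{n,b}^{(j)}-P)\dconv\G_P$. (ii) A conditional multiplier CLT in the spirit of Praestgaard--Wellner shows that the multiplier process built from the exchangeable weights $n_a-n/b$ converges conditionally to a $P$-Brownian bridge: here $\Var(n_a-n/b)=n(b-1)/b^2$, so the implicit rescaling by $\sqrt{b/n}$ gives weight variance $1-1/b\to 1$, the pairwise correlations are $O(1/b)\to 0$, and higher moments of $n_a$ are uniformly controlled, so a Lindeberg condition is immediate.

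Combining (i) and (ii) via the standard two-step argument yields matching conditional finite-dimensional distributions and conditional asymptotic equicontinuity over $\F$. The bounded-Lipschitz formulation in the statement then follows from a portmanteau argument, and the outer-almost-sure refinement under $P^*\|f-Pf\|_\F^2<\infty$ follows from the reverse-martingale argument of Gin\'e--Zinn. The main obstacle---and the step deserving the most care---is the conditional equicontinuity: the classical proof exploits a symmetry between the original sample and the resample that is broken when $n\ne b$. My plan is to symmetrize the multiplier process and apply a Hoffmann--J\o rgensen-type moment bound for Donsker classes to the multinomial-multiplier sums, with the ratio $n/b$ entering only through the bounded factor $(1-1/b)$ in the weight variance; this quantitative control is what allows the argument to go through uniformly in the regime $b\le n$ with $b,n\to\infty$.
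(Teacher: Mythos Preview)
Your approach is correct, but it is far more elaborate than the paper's. You have identified the key distributional fact---that a size-$b$ subsample drawn without replacement from $X_1,\ldots,X_n$ satisfies $\P_{n,b}^{(j)}\dequal\P_b$, i.e., it is marginally just an empirical measure of $b$ i.i.d.\ draws from $P$---and this is precisely the paper's entire argument. Having made that observation, the paper does not redo any of the multiplier/equicontinuity machinery: it simply cites Theorem~3.6.3 of \citet{vdv-wellner} (not 3.6.2) with the identification $(n,k_n)\leftrightarrow(b,n)$, since that theorem already covers the bootstrap with resample size $k_n$ possibly different from (and in particular larger than) the underlying sample size. In effect, you are sketching a self-contained proof of Theorem~3.6.3 from the ingredients of Theorem~3.6.2 and Praestgaard--Wellner, which is a legitimate route and makes explicit where the ratio $n/b$ enters, but the paper buys the same conclusion in two lines by recognizing that the existing $k_n$-bootstrap result applies verbatim once $\P_{n,b}^{(j)}\dequal\P_b$ is noted.
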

\begin{proof}
Note that $\P_{n,b}^{(j)} \dequal \P_b$.  Hence, applying Theorem 3.6.3 of~\citet{vdv-wellner} with the identification $(n, k_n) \leftrightarrow (b, n)$ yields the desired result.
\end{proof}

Lemma~\ref{lemma:ep-consistency} states that, conditionally on the sequence $\P_{n,b}^{(j)}$, the sequence of processes $\s{\G}_{n,b}$ converges in distribution to the $P$-Brownian bridge process $\G_P$, in probability.  Noting that the empirical process $\G_n = \sqrt{n}(\P_n - P)$ also converges in distribution to $\G_P$ (recall that $\F$ is a Donsker class by assumption), it follows that size $n$ resamples generated from $\P_{n,b}^{(j)}$ behave asymptotically as though they were drawn directly from $P$.  Under standard assumptions, it then follows that $\xi(Q_n(\P_{n,b}^{(j)})) - \xi(Q_n(P)) \pconv 0$:

\begin{lemma}
\label{lemma:oneiter-consistency}
Under the assumptions of Theorem~\ref{thm:consistency}, for any $j$,
$$\xi(Q_n(\P_{n,b}^{(j)})) - \xi(Q_n(P)) \pconv 0$$
as $n \rightarrow \infty$, for any sequence $b \rightarrow \infty$.
\end{lemma}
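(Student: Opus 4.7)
The plan is to combine Lemma~\ref{lemma:ep-consistency} with the functional delta method for the bootstrap (e.g., Theorem 3.9.11 of \citet{vdv-wellner}) and then invoke the continuity of $\xi$ with respect to a metric that metrizes weak convergence. The key observation is that both $Q_n(\P_{n,b}^{(j)})$ and $Q_n(P)$ are distributions on the same space that share the same weak limit; their difference in the chosen metric will vanish in probability, and applying $\xi$ preserves this.

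First, I would use Lemma~\ref{lemma:ep-consistency} to obtain that, conditional on $\P_{n,b}^{(j)}$, the resampled empirical process $\s{\G}_{n,b} = \sqrt{n}(\s{\P}_{n,b} - \P_{n,b}^{(j)})$ converges in distribution to the $P$-Brownian bridge $\G_P$, in outer probability, as $b,n \to \infty$. Next, because $\phi$ is Hadamard differentiable at $P$ tangentially to a subspace containing (almost all realizations of) $\G_P$, I would apply the functional delta method for the bootstrap. This yields that, conditional on $\P_{n,b}^{(j)}$, the distribution of $\sqrt{n}(\phi(\s{\P}_{n,b}) - \phi(\P_{n,b}^{(j)}))$---which is exactly $Q_n(\P_{n,b}^{(j)})$ under the centered/normalized reading of $\xi$---converges weakly to the distribution of $\phi'_P(\G_P)$, in outer probability. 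In parallel, the standard functional delta method, applied to the fact that $\F$ is $P$-Donsker, gives that $Q_n(P)$ (the deterministic distribution of $\sqrt{n}(\phi(\P_n) - \phi(P))$) also converges weakly to the distribution of $\phi'_P(\G_P)$.

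Since both $Q_n(\P_{n,b}^{(j)})$ and $Q_n(P)$ have the same weak limit $\mathcal{L}(\phi'_P(\G_P))$---the first in outer probability, the second deterministically---the distance between them in any metric $d$ metrizing weak convergence satisfies $d(Q_n(\P_{n,b}^{(j)}), Q_n(P)) \leq d(Q_n(\P_{n,b}^{(j)}), \mathcal{L}(\phi'_P(\G_P))) + d(Q_n(P), \mathcal{L}(\phi'_P(\G_P))) \pconv 0$. Finally, by the assumed continuity of $\xi$ with respect to this metric, the continuous mapping theorem (for convergence in probability of real-valued functions) yields $\xi(Q_n(\P_{n,b}^{(j)})) - \xi(Q_n(P)) \pconv 0$, completing the proof.

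The main obstacle I anticipate is the careful invocation of the functional delta method for the bootstrap in the BLB-style setting: standard references state it for resamples drawn from $\P_n$ itself, whereas here we resample $n$ points from the subsampled measure $\P_{n,b}^{(j)}$. Lemma~\ref{lemma:ep-consistency} is precisely tailored to supply the bootstrap empirical process convergence needed as input, so the argument should transfer directly, but one must verify the measurability and tangential-subspace hypotheses carry through unchanged under the $(n,b)$ double asymptotics (in particular that the limit process $\G_P$ lives in the relevant tangent subspace almost surely, which is exactly the hypothesis placed on $\phi$). Once that is checked, the remaining pieces---the standard delta method for $Q_n(P)$ and the continuity of $\xi$---are routine.
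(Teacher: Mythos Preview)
Your proposal is correct and follows essentially the same approach as the paper: invoke Lemma~\ref{lemma:ep-consistency}, apply the functional delta method for the bootstrap (the paper cites Theorem~23.9 of \citet{vdv}, which is the same result as Theorem~3.9.11 of \citet{vdv-wellner}) to obtain that $\sqrt{n}(\phi(\s{\P}_{n,b}) - \phi(\P_{n,b}^{(j)}))$ and $\sqrt{n}(\phi(\P_n) - \phi(P))$ share the same weak limit, and then use continuity of $\xi$. Your explicit triangle-inequality step in the metric $d$ and your cautionary remark about checking that the bootstrap delta method transfers to the $(n,b)$ setting are both well placed, but do not deviate from the paper's argument.
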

\begin{proof}
Let $R$ be the random element to which $\sqrt{n}(\phi(\P_n) - \phi(P))$ converges in distribution; note that the functional delta method~\cite{vdv} provides the form of $R$ in terms of $\phi$ and $P$.  The delta method for the bootstrap (see Theorem 23.9 of~\citet{vdv}) in conjunction with Lemma~\ref{lemma:ep-consistency} implies that, under our assumptions, $\sqrt{n}(\phi(\s{\P}_{n,b}) - \phi(\P_{n,b}^{(j)}))$ also converges conditionally in distribution to $R$, given $\P_{n,b}^{(j)}$, in probability.  Thus, the distribution of $\sqrt{n}(\phi(\P_n) - \phi(P))$ and the distribution of $\sqrt{n}(\phi(\s{\P}_{n,b}) - \phi(\P_{n,b}^{(j)}))$, the latter conditionally on $\P_{n,b}^{(j)}$, have the same asymptotic limit in probability.  As a result, given the assumed continuity of $\xi$, it follows that $\xi(Q_n(\P_{n,b}^{(j)}))$ and $\xi(Q_n(P))$ have the same asymptotic limit, in probability.
\end{proof}

The above lemma indicates that each individual $\xi(Q_n(\P_{n,b}^{(j)}))$ is asymptotically consistent as $b, n \rightarrow \infty$.  Theorem~\ref{thm:consistency} immediately follows:

\begin{proof}[Proof of Theorem~\ref{thm:consistency}]
Lemma~\ref{lemma:oneiter-consistency} in conjunction with the continuous mapping theorem~\cite{vdv} implies the desired result.
\end{proof}

\subsection{Higher-Order Correctness}
We first prove two supporting lemmas.

\begin{lemma}
\label{lemma:marginal-var}
Assume that $X_1, \ldots, X_b \sim P$ are i.i.d., and let $\sest{p}_k(X_1, \ldots, X_b)$ be the sample version of $p_k$ based on $X_1, \ldots, X_b$, as defined in Theorem~\ref{thm:higher-order}.  Then, assuming that $E[\sest{p}_k(X_1, \ldots, X_b)^2] < \infty$,
$$\Var(\sest{p}_k(X_1, \ldots, X_b) - p_k) = \Var(\sest{p}_k(X_1, \ldots, X_b)) = O(1/b).$$
\end{lemma}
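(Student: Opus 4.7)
The plan is as follows. The first equality $\Var(\sest{p}_k - p_k) = \Var(\sest{p}_k)$ is immediate because $p_k$ is a fixed polynomial in the population moments of $P$, hence a deterministic constant, and subtracting a constant does not change the variance. The work lies in the $O(1/b)$ bound, which I would establish by exploiting the explicit polynomial structure given in the hypotheses of Theorem~\ref{thm:higher-order}.

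Concretely, I would write $\sest{p}_k = g(\hat m_1, \ldots, \hat m_L)$ and $p_k = g(m_1, \ldots, m_L)$, where $g$ is a fixed polynomial, the $f_\ell$ are monomials of the coordinates of $X$, $m_\ell = E f_\ell(X)$, and $\hat m_\ell = b^{-1} \sum_{i=1}^b f_\ell(X_i)$. A finite multivariate Taylor (equivalently, multinomial) expansion of $g$ about $(m_1, \ldots, m_L)$ yields
\[
\sest{p}_k - p_k \;=\; \sum_{1 \le |\alpha| \le \deg g} c_\alpha \prod_{\ell=1}^L (\hat m_\ell - m_\ell)^{\alpha_\ell},
\]
where the sum runs over finitely many multi-indices $\alpha = (\alpha_1,\ldots,\alpha_L)$ and the constants $c_\alpha$ depend only on $g$ and the true moments $m_\ell$.

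I would then bound the $L^2$ norm of each term separately. Each $\hat m_\ell - m_\ell$ is a centered i.i.d.\ sample average, so the standard Marcinkiewicz--Zygmund (or elementary moment) bound gives $E|\hat m_\ell - m_\ell|^q = O(b^{-q/2})$ whenever the required moments of $f_\ell(X)$ are finite. Applying Cauchy--Schwarz to each product term then shows that it has $L^2$ norm $O(b^{-|\alpha|/2})$. The dominant contribution therefore comes from the $|\alpha|=1$ (linear) terms, giving an $O(1/b)$ variance, while every higher-order term contributes $O(b^{-3/2})$ or smaller; since the sum is finite, the triangle inequality in $L^2$ yields $\Var(\sest{p}_k - p_k) = O(1/b)$.

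The main (mild) technical point, which I expect to be the chief obstacle, is to extract the Marcinkiewicz--Zygmund moment conditions from the single assumption $E[\sest{p}_k^2] < \infty$. Here the polynomial structure again helps: $g(\hat m_1,\ldots,\hat m_L)^2$ is itself a polynomial in the $\hat m_\ell$, and via symmetry of the i.i.d.\ sample, finiteness of its expectation forces finiteness of the joint moments of the $f_\ell(X_i)$ up to the orders needed in the termwise bounds above. With this moment control in hand, the termwise $L^2$ estimates combine to give the claimed $O(1/b)$ rate and complete the proof.
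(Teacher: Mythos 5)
Your proof is correct in outline but takes a genuinely different route from the paper's. The paper recognizes $\sest{p}_k = \sum_\beta c_\beta \prod_\alpha (b^{-1}\sum_i g^{(\beta)}_\alpha(X_i))$ as a V-statistic of fixed order $A$, and obtains the rate by citing U-/V-statistic theory (Shao's Proposition~3.5(ii) to replace the V-statistic by the corresponding U-statistic up to $O(b^{-2})$ in variance, then the Hoeffding-type bound $\Var(U) \leq \frac{A}{b}\Var(h(X_1,\ldots,X_A))$). You instead perform an exact finite Taylor expansion of the polynomial about the true moments and bound each term $\prod_\ell(\hat m_\ell - m_\ell)^{\alpha_\ell}$ in $L^2$ via Marcinkiewicz--Zygmund and H\"older, so that the $|\alpha|=1$ terms contribute $O(b^{-1/2})$ in $L^2$ and everything else is $O(b^{-1})$. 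This is more elementary and self-contained (no appeal to U-statistic machinery), and it makes transparent where the $1/b$ comes from; the price is in the moment hypotheses. Your H\"older step for a term of total degree $|\alpha|$ requires $E|\hat m_\ell - m_\ell|^{2|\alpha|} = O(b^{-|\alpha|})$, hence moments of the monomials $f_\ell(X)$ up to order $2\deg g$, whereas the paper's kernel-variance bound nominally involves only pairwise products of the monomials at a single point. Your claim that all needed moments are "forced" by the single assumption $E[\sest{p}_k^2]<\infty$ is the one step I would not accept as stated: since distinct terms of the polynomial can cancel, finiteness of $E[g(\hat m)^2]$ does not in general imply finiteness of the individual $E[\hat m_\ell^{2\alpha_\ell}]$. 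In fairness, the paper's own proof is equally casual about extracting its kernel-moment conditions from the stated hypothesis, so this is a shared looseness rather than a defect unique to your argument; if you simply strengthen the hypothesis to finiteness of the relevant population moments of the $f_\ell$ (which is what Theorem~\ref{thm:higher-order} morally assumes anyway), your proof goes through.
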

\begin{proof}
By definition, the $\sest{p}_k$ are simply polynomials in sample moments.  Thus, we can write
\begin{equation}
\label{eq:pk-genform}
\sest{p}_k = \sest{p}_k(X_1,\dots,X_b) = \sum_{\beta=1}^B c_\beta \prod_{\alpha=1}^{A_\beta}\left(b^{-1}\sum_{i=1}^b g^{(\beta)}_\alpha(X_i)\right),
\end{equation}
where each $g^{(\beta)}_\alpha$ raises its argument to some power.  Now, observe that for any $\beta$,
$$V_\beta = \prod_{\alpha=1}^{A_\beta}\left(b^{-1}\sum_{i=1}^b g^{(\beta)}_\alpha(X_i)\right)$$
is a V-statistic of order $A_\beta$ applied to the $b$ observations $X_1, \ldots, X_b$.  Let $h_\beta(x_1, \ldots, x_{A_\beta})$ denote the kernel of this V-statistic, which is a symmetrized version of $\prod_{\alpha=1}^{A_\beta} g^{(\beta)}_\alpha(x_\alpha)$.  It follows that $\sest{p}_k = \sum_{\beta=1}^B c_\beta V_\beta$ is itself a V-statistic of order $A = \max_\beta A_\beta$ with kernel $h(x_1, \ldots, x_A) = \sum_{\beta=1}^B c_\beta h_\beta(x_1, \ldots, x_{A_\beta})$, applied to the $b$ observations $X_1, \ldots, X_b$.  Let $U$ denote the corresponding U-statistic having kernel $h$.  Then, using Proposition~3.5(ii) and Corollary~3.2(i) of~\citet{shao}, we have
$$\Var(\sest{p}_k - p_k) = \Var(\sest{p}_k) = \Var(U) + O(b^{-2}) \leq \frac{A}{b} \Var(h(X_1, \ldots, X_A)) + O(b^{-2}) = O(1/b).$$
\end{proof}

\begin{lemma}
\label{lemma:expectation-rate}
Assume that $X_1, \ldots, X_b \sim P$ are i.i.d., and let $\sest{p}_k(X_1, \ldots, X_b)$ be the sample version of $p_k$ based on $X_1, \ldots, X_b$, as defined in Theorem~\ref{thm:higher-order}.  Then, assuming that $E|\sest{p}_k(X_1, \ldots, X_b)| < \infty$,
$$| E[\sest{p}_k(X_1, \ldots, X_b)] - p_k | = O(1/b).$$
\end{lemma}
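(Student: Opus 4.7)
The plan is to mirror the V-statistic decomposition used in the proof of Lemma~\ref{lemma:marginal-var}, but now applied to the \emph{bias} of $\sest{p}_k$ rather than its variance. Recall from~\eq{pk-genform} that
$$\sest{p}_k(X_1,\dots,X_b) = \sum_{\beta=1}^B c_\beta V_\beta, \qquad V_\beta = \prod_{\alpha=1}^{A_\beta}\Bigl(b^{-1}\sum_{i=1}^b g^{(\beta)}_\alpha(X_i)\Bigr),$$
where each $V_\beta$ is a V-statistic of order $A_\beta$ with symmetrized kernel $h_\beta$. The corresponding population quantity $p_k$ is obtained by replacing each sample moment $b^{-1}\sum_i g^{(\beta)}_\alpha(X_i)$ with its expectation $Pg^{(\beta)}_\alpha$, so $p_k = \sum_{\beta=1}^B c_\beta \prod_{\alpha=1}^{A_\beta} Pg^{(\beta)}_\alpha$. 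By the i.i.d.\ assumption, $\prod_{\alpha=1}^{A_\beta} Pg^{(\beta)}_\alpha = E[h_\beta(X_1,\ldots,X_{A_\beta})] = E[U_\beta]$, where $U_\beta$ is the U-statistic with kernel $h_\beta$.

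Thus it suffices to control $|E[V_\beta] - E[U_\beta]|$ for each $\beta$. The standard device here is to split the V-statistic sum over all $A_\beta$-tuples of indices $(i_1,\ldots,i_{A_\beta}) \in \{1,\ldots,b\}^{A_\beta}$ into the ``off-diagonal'' tuples (all $A_\beta$ indices distinct) and the ``diagonal'' tuples (at least two indices coincide). The off-diagonal contribution, once normalized, has expectation exactly $E[U_\beta]$, while the diagonal contribution involves only $O(b^{A_\beta - 1})$ of the $b^{A_\beta}$ tuples and therefore contributes $O(1/b)$ after normalization by $b^{-A_\beta}$. This is precisely the content of Proposition~3.5(ii) of~\citet{shao}, which gives $E[V_\beta] = E[U_\beta] + O(1/b)$ under a finite-expectation condition on $h_\beta$; the moment hypothesis $E|\sest{p}_k| < \infty$ passes down to the relevant expectations of $h_\beta$ since each $g^{(\beta)}_\alpha$ is a monomial in the components of $X$.

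Combining these ingredients,
$$\bigl| E[\sest{p}_k] - p_k \bigr| = \Bigl| \sum_{\beta=1}^B c_\beta\bigl(E[V_\beta] - E[U_\beta]\bigr) \Bigr| \leq \sum_{\beta=1}^B |c_\beta| \cdot O(1/b) = O(1/b),$$
since $B$ and the coefficients $c_\beta$ do not depend on $b$. The main (and essentially only) substantive point is the V-to-U reduction in the previous paragraph; everything else is bookkeeping carried over verbatim from the proof of Lemma~\ref{lemma:marginal-var}. I expect no significant obstacle beyond verifying that the integrability hypothesis suffices to apply the cited V-statistic bias bound to each kernel $h_\beta$ separately.
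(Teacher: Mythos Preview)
Your proposal is correct and follows essentially the same approach as the paper: decompose $\sest{p}_k$ into a finite linear combination of V-statistics $V_\beta$, identify $p_k$ with the corresponding combination of $E[U_\beta]$, and control $|E[V_\beta]-E[U_\beta]|$ by separating the off-diagonal (all-distinct) tuples from the $O(b^{A_\beta-1})$ diagonal tuples. The only cosmetic difference is that the paper carries out the diagonal/off-diagonal counting explicitly (writing $b!/(b-A_\beta)! = b^{A_\beta} - \kappa b^{A_\beta-1} + O(b^{A_\beta-2})$ and bounding the diagonal terms by a constant $C$), whereas you package the same computation as an appeal to a V-to-U bias result.
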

\begin{proof}
As noted in the proof of Lemma~\ref{lemma:marginal-var}, we can write
$$\sest{p}_k(X_1,\dots,X_b)=\sum_{\beta=1}^B c_\beta \prod_{\alpha=1}^{A_\beta}\left(b^{-1}\sum_{i=1}^b g^{(\beta)}_\alpha(X_i)\right),$$
where each $g^{(\beta)}_\alpha$ raises its argument to some power.  Similarly,
$$p_k = \sum_{\beta=1}^B c_\beta \prod_{\alpha=1}^{A_\beta} E g^{(\beta)}_\alpha(X_1),$$
and so
\begin{align*}
| E[\sest{p}_k(X_1, \ldots, X_b)] - p_k | &= \left| \sum_{\beta=1}^B c_\beta \prod_{\alpha=1}^{A_\beta}\left(b^{-1}\sum_{i=1}^b g^{(\beta)}_\alpha(X_i)\right) - \sum_{\beta=1}^B c_\beta \prod_{\alpha=1}^{A_\beta} E g^{(\beta)}_\alpha(X_1) \right| \\
&\leq \sum_{\beta=1}^B | c_\beta | \cdot \left| E\left[\prod_{\alpha=1}^{A_\beta}\left(b^{-1}\sum_{i=1}^b g^{(\beta)}_\alpha(X_i)\right) \right] - \prod_{\alpha=1}^{A_\beta} E g^{(\beta)}_\alpha(X_1) \right|.
\end{align*}
Given that the number of terms in the outer sum on the right-hand side is constant with respect to $b$, to prove the desired result it is sufficient to show that, for any $\beta$,
$$\Delta_\beta = \left| E\left[\prod_{\alpha=1}^{A_\beta}\left(b^{-1}\sum_{i=1}^b g^{(\beta)}_\alpha(X_i)\right) \right] - \prod_{\alpha=1}^{A_\beta} E g^{(\beta)}_\alpha(X_1) \right| = O\left(\frac{1}{b}\right).$$
Observe that
\begin{equation}
\label{eq:moment-exp-1}
E\left[\prod_{\alpha=1}^{A_\beta}\left(b^{-1}\sum_{i=1}^b g^{(\beta)}_\alpha(X_i)\right) \right] = b^{-A_\beta} E\left[\sum_{i_1,\dots,i_{A_\beta} = 1}^b \; \prod_{\alpha=1}^{A_\beta} g^{(\beta)}_\alpha(X_{i_\alpha}) \right].
\end{equation}
If $i_1, \ldots, i_{A_\beta}$ are all distinct, then $E \prod_{\alpha=1}^{A_\beta} g^{(\beta)}_\alpha(X_{i_\alpha}) = \prod_{\alpha=1}^{A_\beta} E g^{(\beta)}_\alpha(X_1)$ because $X_1, \ldots, X_b$ are i.i.d..  Additionally, the right-hand summation in~\eq{moment-exp-1} has $b! / (b-A_\beta)!$ terms in which $i_1, \ldots, i_{A_\beta}$ are all distinct; correspondingly, there are $b^{A_\beta} - b! / (b-A_\beta)!$ terms in which $\exists \alpha, \alpha' \text{ s.t.\ } i_\alpha=i_{\alpha'}$.  Therefore, it follows that
$$E\left[\prod_{\alpha=1}^{A_\beta}\left(b^{-1}\sum_{i=1}^b g^{(\beta)}_\alpha(X_i)\right) \right] = b^{-A_\beta} \left[ \frac{b!}{(b-A_\beta)!} \prod_{\alpha=1}^{A_\beta} E g^{(\beta)}_\alpha(X_1) \quad + \sum_{\substack{1 \leq i_1,\dots,i_{A_\beta} \leq b \\ \exists \alpha, \alpha' \text{ s.t.\ } i_\alpha = i_{\alpha'}}} E \prod_{\alpha=1}^{A_\beta} g^{(\beta)}_\alpha(X_{i_\alpha}) \right]$$
and
\begin{align}
\Delta_\beta &= \left| E\left[\prod_{\alpha=1}^{A_\beta}\left(b^{-1}\sum_{i=1}^b g^{(\beta)}_\alpha(X_i)\right) \right] - \prod_{\alpha=1}^{A_\beta} E g^{(\beta)}_\alpha(X_1) \right| \nonumber \\
&= b^{-A_\beta} \left| \left(\frac{b!}{(b-A_\beta)!} - b^{A_\beta}\right) \prod_{\alpha=1}^{A_\beta} E g^{(\beta)}_\alpha(X_1) \quad + \sum_{\substack{1 \leq i_1,\dots,i_{A_\beta} \leq b \\ \exists \alpha, \alpha' \text{ s.t.\ } i_\alpha = i_{\alpha'}}} E \prod_{\alpha=1}^{A_\beta} g^{(\beta)}_\alpha(X_{i_\alpha}) \right| \nonumber \\
&\leq b^{-A_\beta} \left| \frac{b!}{(b-A_\beta)!} - b^{A_\beta}\right| \cdot \left| \prod_{\alpha=1}^{A_\beta} E g^{(\beta)}_\alpha(X_1) \right| \; + \; b^{-A_\beta} \left| b^{A_\beta} - \frac{b!}{(b-A_\beta)!} \right| C, \label{eq:delta-upper-bound}
\end{align}
where
$$C \quad = \max_{\substack{1 \leq i_1,\dots,i_{A_\beta} \leq b \\ \exists \alpha, \alpha' \text{ s.t.\ } i_\alpha = i_{\alpha'}}} \left| E \prod_{\alpha=1}^{A_\beta} g^{(\beta)}_\alpha(X_{i_\alpha}) \right|.$$
Note that $C$ is a constant with respect to $b$.  Also, simple algebraic manipulation shows that
$$\frac{b!}{(b-A_\beta)!} = b^{A_\beta} - \kappa b^{A_\beta - 1} + O(b^{A_\beta - 2})$$
for some $\kappa > 0$.  Thus, plugging into equation~\eq{delta-upper-bound}, we obtain the desired result:
$$\Delta_\beta \leq b^{-A_\beta} \left| \kappa b^{A_\beta - 1} + O(b^{A_\beta - 2}) \right| \cdot \left| \prod_{\alpha=1}^{A_\beta} E g^{(\beta)}_\alpha(X_1) \right| + b^{-A_\beta} \left| \kappa b^{A_\beta - 1} + O(b^{A_\beta - 2}) \right| C = O\left(\frac{1}{b}\right).$$
\end{proof}

We now provide full proofs of Theorem~\ref{thm:higher-order}, Remark~\ref{remark:condvar-rate}, and Theorem~\ref{thm:higher-order-disjoint}.

\begin{proof}[Proof of Theorem~\ref{thm:higher-order}]
Summing the expansion~\eq{sample-expansion} over $j$, we have
$$\numsub^{-1} \sum_{j=1}^{\numsub} \xi(Q_n(\P_{n,b}^{(j)})) = z + n^{-1/2} \numsub^{-1} \sum_{j=1}^{\numsub} \sest{p}^{(j)}_1 + n^{-1} \numsub^{-1} \sum_{j=1}^{\numsub} \sest{p}^{(j)}_2 + o_P\left(\frac{1}{n}\right).$$
Subtracting the corresponding expansion~\eq{pop-expansion} for $\xi(Q_n(P))$, we then obtain
\begin{equation}
\label{eq:edgeworth-dev}
\left| \numsub^{-1} \sum_{j=1}^{\numsub} \xi(Q_n(\P_{n,b}^{(j)})) - \xi(Q_n(P)) \right| \leq n^{-1/2} \left| \numsub^{-1} \sum_{j=1}^{\numsub} \sest{p}^{(j)}_1 - p_1 \right| + n^{-1} \left| \numsub^{-1} \sum_{j=1}^{\numsub} \sest{p}^{(j)}_2 - p_2 \right| + o_P\left(\frac{1}{n}\right).
\end{equation}
We now further analyze the first two terms on the right-hand side of the above expression; for the remainder of this proof, we assume that $k \in \{1,2\}$.  Observe that, for fixed $k$, the $\sest{p}^{(j)}_k$ are conditionally i.i.d.\ given $X_1, \ldots, X_n$ for all $j$, and so
$$\Var\left( \left. \numsub^{-1} \sum_{j=1}^{\numsub} \left([\sest{p}^{(j)}_k - p_k] - E[\sest{p}^{(1)}_k - p_k | \P_n]\right) \right| \P_n \right) = \frac{\Var(\sest{p}^{(1)}_k - p_k | \P_n)}{\numsub},$$
where we denote by $E[\sest{p}^{(1)}_k - p_k | \P_n]$ and $\Var(\sest{p}^{(1)}_k - p_k | \P_n)$ the expectation and variance of $\sest{p}^{(1)}_k - p_k$ over realizations of $\P_{n,b}^{(1)}$ conditionally on $X_1, \ldots, X_n$.  Now, given that $\sest{p}^{(j)}_k$ is a permutation-symmetric function of size $b$ subsets of $X_1, \ldots, X_n$, $E[\sest{p}^{(1)}_k - p_k | \P_n]$ is a U-statistic of order $b$.  Hence, we can apply Corollary~3.2(i) of~\citet{shao} in conjunction with Lemma~\ref{lemma:marginal-var} to find that
$$\Var\left( E[\sest{p}^{(1)}_k - p_k | \P_n] - E[\sest{p}^{(1)}_k - p_k] \right) = \Var\left( E[\sest{p}^{(1)}_k - p_k | \P_n] \right) \leq \frac{b}{n} \Var(\sest{p}^{(1)}_k - p_k) = O\left( \frac{1}{n} \right).$$
From the result of Lemma~\ref{lemma:expectation-rate}, we have
$$| E[\sest{p}^{(1)}_k - p_k] | = O\left(\frac{1}{b}\right).$$
Combining the expressions in the previous three panels, we find that
$$\left| \numsub^{-1} \sum_{j=1}^{\numsub} \sest{p}^{(j)}_k - p_k \right| = O_P\left(\frac{\sqrt{\Var(\sest{p}^{(1)}_k - p_k | \P_n)}}{\sqrt{\numsub}}\right) + O_P\left( \frac{1}{\sqrt{n}} \right) + O\left(\frac{1}{b}\right).$$
Finally, plugging into equation~\eq{edgeworth-dev} with $k=1$ and $k=2$, we obtain the desired result.
\end{proof}

\begin{proof}[Proof of Remark~\ref{remark:condvar-rate}]
Observe that
$$\Var(\sest{p}^{(1)}_k - p_k | \P_n) \leq E[(\sest{p}^{(1)}_k - p_k)^2 | \P_n] = E[(\sest{p}^{(1)}_k - p_k)^2 | \P_n] - E[(\sest{p}^{(1)}_k - p_k)^2] + E[(\sest{p}^{(1)}_k - p_k)^2].$$
Given that $\sest{p}^{(1)}_k$ is a polynomial in the moments of $\P_{n,b}^{(1)}$, $\sest{q}^{(1)}_k = (\sest{p}^{(1)}_k - p_k)^2$ is also a polynomial in the moments of $\P_{n,b}^{(1)}$.  Hence, Lemma~\ref{lemma:marginal-var} applies to $\sest{q}^{(1)}_k$.  Additionally, $\sest{q}^{(1)}_k$ is a permutation-symmetric function of size $b$ subsets of $X_1, \ldots, X_n$, and so $E[\sest{q}^{(1)}_k | \P_n]$ is a U-statistic of order $b$.  Therefore, applying Corollary~3.2(i) of~\citet{shao} in conjunction with Lemma~\ref{lemma:marginal-var}, we find that
$$\Var\left( E[(\sest{p}^{(1)}_k - p_k)^2 | \P_n] - E[(\sest{p}^{(1)}_k - p_k)^2] \right) = \Var\left( E[\sest{q}^{(1)}_k | \P_n] \right) \leq \frac{b}{n} \Var( \sest{q}^{(1)}_k ) = O\left(\frac{1}{n}\right).$$
Now,
$$E[(\sest{p}^{(1)}_k - p_k)^2] = \Var(\sest{p}^{(1)}_k - p_k) + E[\sest{p}^{(1)}_k - p_k]^2.$$
By Lemmas~\ref{lemma:marginal-var} and~\ref{lemma:expectation-rate}, $\Var(\sest{p}^{(1)}_k - p_k) = O(1/b)$ and $E[\sest{p}^{(1)}_k - p_k]^2 = O(1/b^2)$.  Combining with the expressions in the previous three panels, we obtain the desired result:
$$\Var(\sest{p}^{(1)}_k - p_k | \P_n) = O_P\left(\frac{1}{\sqrt{n}}\right) + O\left(\frac{1}{b}\right) + O\left(\frac{1}{b^2}\right) = O_P\left(\frac{1}{\sqrt{n}}\right) + O\left(\frac{1}{b}\right).$$
\end{proof}

\begin{proof}[Proof of Theorem~\ref{thm:higher-order-disjoint}]
As noted in the proof of Theorem~\ref{thm:higher-order},
\begin{equation}
\label{eq:edgeworth-dev-disjoint}
\left| \numsub^{-1} \sum_{j=1}^{\numsub} \xi(Q_n(\P_{n,b}^{(j)})) - \xi(Q_n(P)) \right| \leq n^{-1/2} \left| \numsub^{-1} \sum_{j=1}^{\numsub} \sest{p}^{(j)}_1 - p_1 \right| + n^{-1} \left| \numsub^{-1} \sum_{j=1}^{\numsub} \sest{p}^{(j)}_2 - p_2 \right| + o_P\left(\frac{1}{n}\right).
\end{equation}
Throughout this proof, we assume that $k \in \{1,2\}$.  Under the assumptions of this theorem, the $\P_{n,b}^{(j)}$ are based on disjoint subsets of the $n$ observations and so are i.i.d..  Hence, for any $k$, the $\sest{p}^{(j)}_k$ are i.i.d.\ for all $j$, and so using Lemma~\ref{lemma:marginal-var},
$$\Var\left( \left[ \numsub^{-1} \sum_{j=1}^{\numsub} \sest{p}^{(j)}_k - p_k \right] - E[\sest{p}^{(1)}_k - p_k] \right) = \frac{\Var(\sest{p}^{(1)}_k - p_k)}{s} = O\left(\frac{1}{b \numsub}\right).$$
Additionally, from the result of Lemma~\ref{lemma:expectation-rate}, we have
$$| E[\sest{p}^{(1)}_k - p_k] | = O\left(\frac{1}{b}\right).$$
Combining the expressions in the previous two panels, we find that
$$\left| \numsub^{-1} \sum_{j=1}^{\numsub} \sest{p}^{(j)}_k - p_k \right| = O_P\left(\frac{1}{\sqrt{b \numsub}}\right) + O\left(\frac{1}{b}\right).$$
Finally, plugging into equation~\eq{edgeworth-dev-disjoint} with $k=1$ and $k=2$, we obtain the desired result.
\end{proof}

\section{Appendix: Additional Real Data Results}
\label{sec:appendix-real-data}

\begin{figure}[ht]
\includegraphics[width=0.5\linewidth]{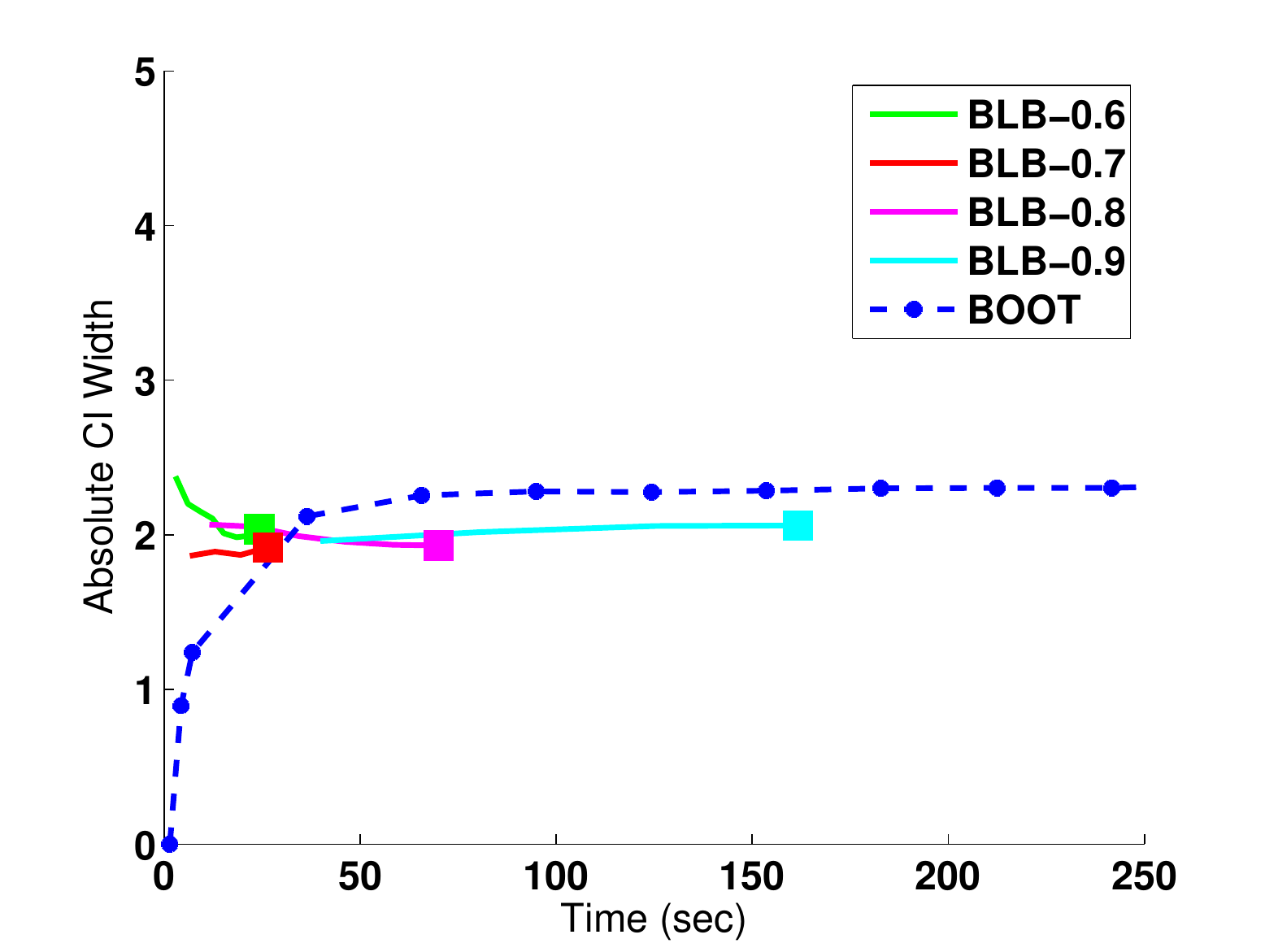}
\includegraphics[width=0.5\linewidth]{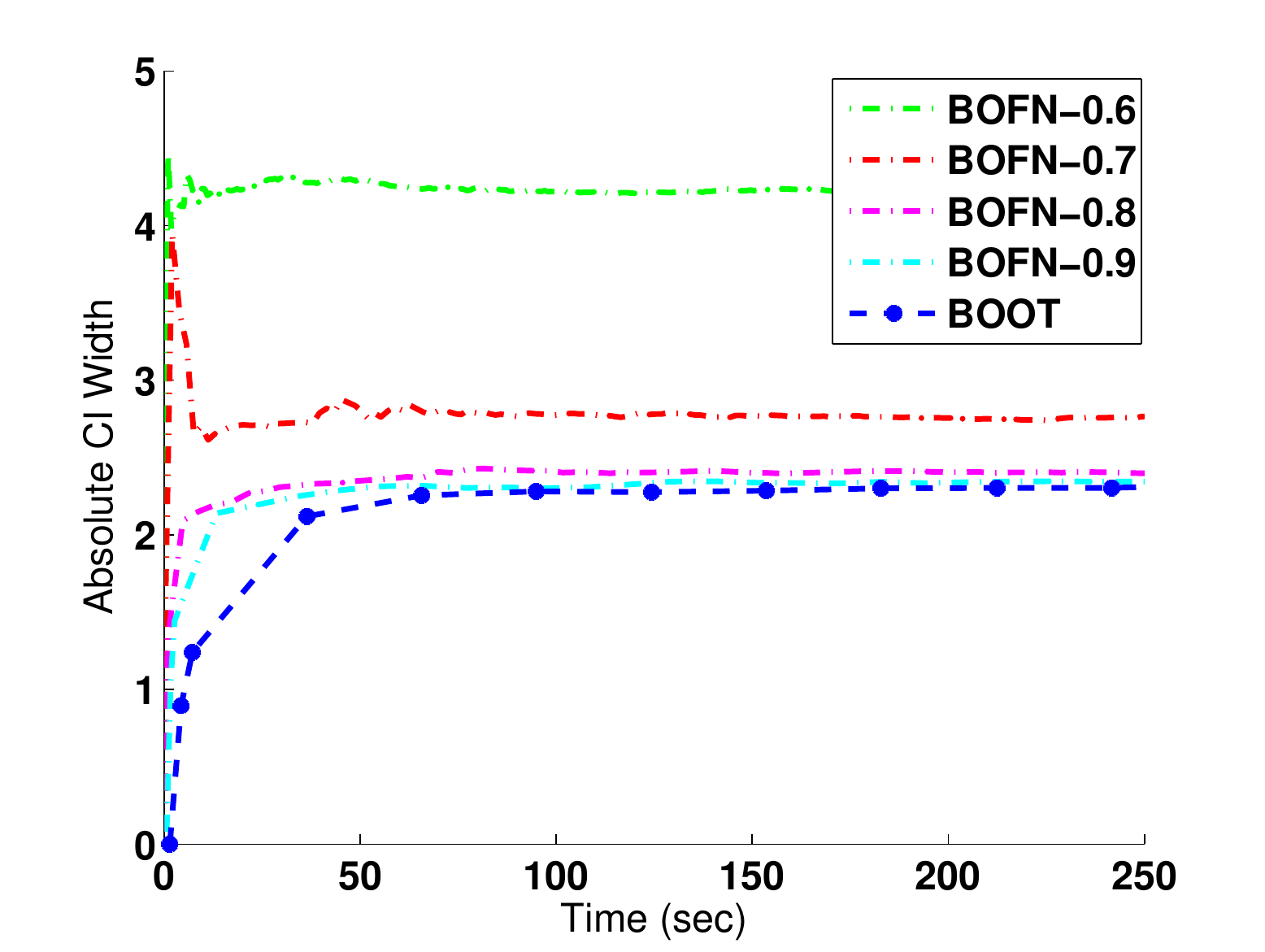}
\caption{Average (across dimensions) absolute confidence interval width vs.\ processing time on the UCI ct-slice dataset (linear regression, $d=385$, $n= 53,500$).  The left plot shows results for \ouralgAbbrev (using adaptive hyperparameter selection, with the output at convergence marked by large squares) and the bootstrap (BOOT).  The right plot shows results for the $b$ out of $n$ bootstrap (BOFN).}
\end{figure}

\begin{figure}[ht]
\includegraphics[width=0.5\linewidth]{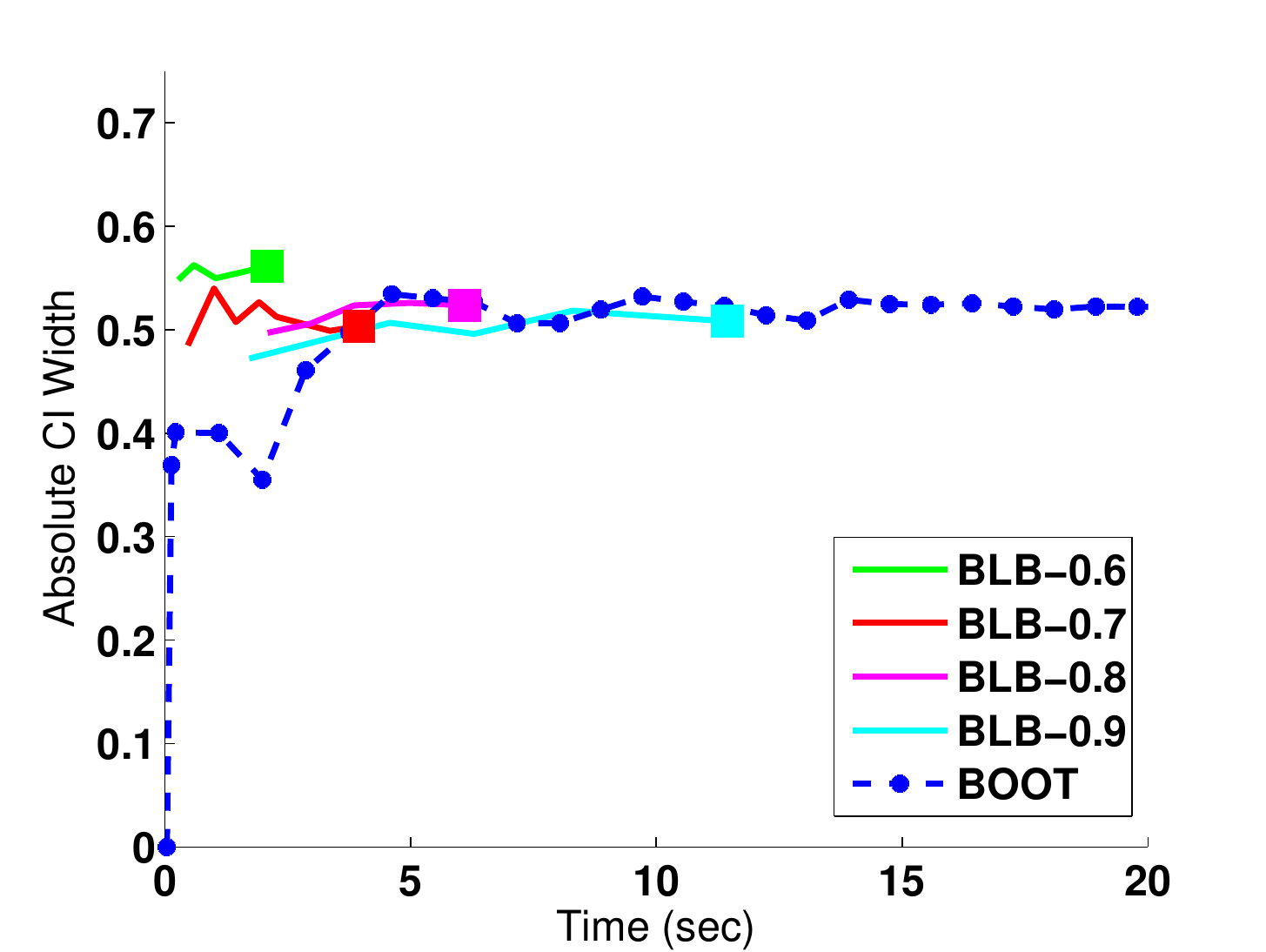}
\includegraphics[width=0.5\linewidth]{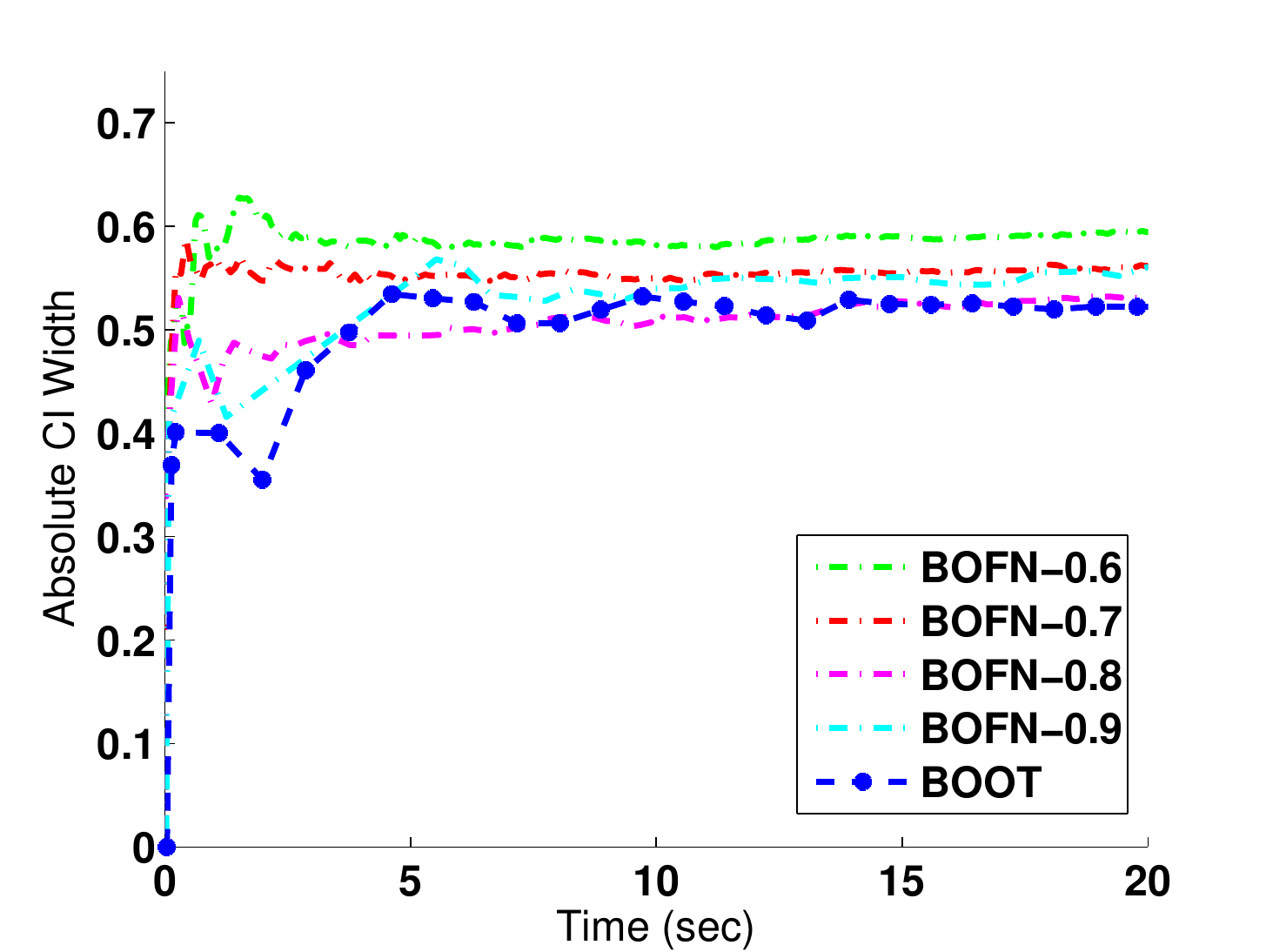}
\caption{Average (across dimensions) absolute confidence interval width vs.\ processing time on the UCI magic dataset (logistic regression, $d=10$, $n= 19,020$).  The left plot shows results for \ouralgAbbrev (using adaptive hyperparameter selection, with the output at convergence marked by large squares) and the bootstrap (BOOT).  The right plot shows results for the $b$ out of $n$ bootstrap (BOFN).}
\end{figure}

\begin{figure}[ht]
\includegraphics[width=0.5\linewidth]{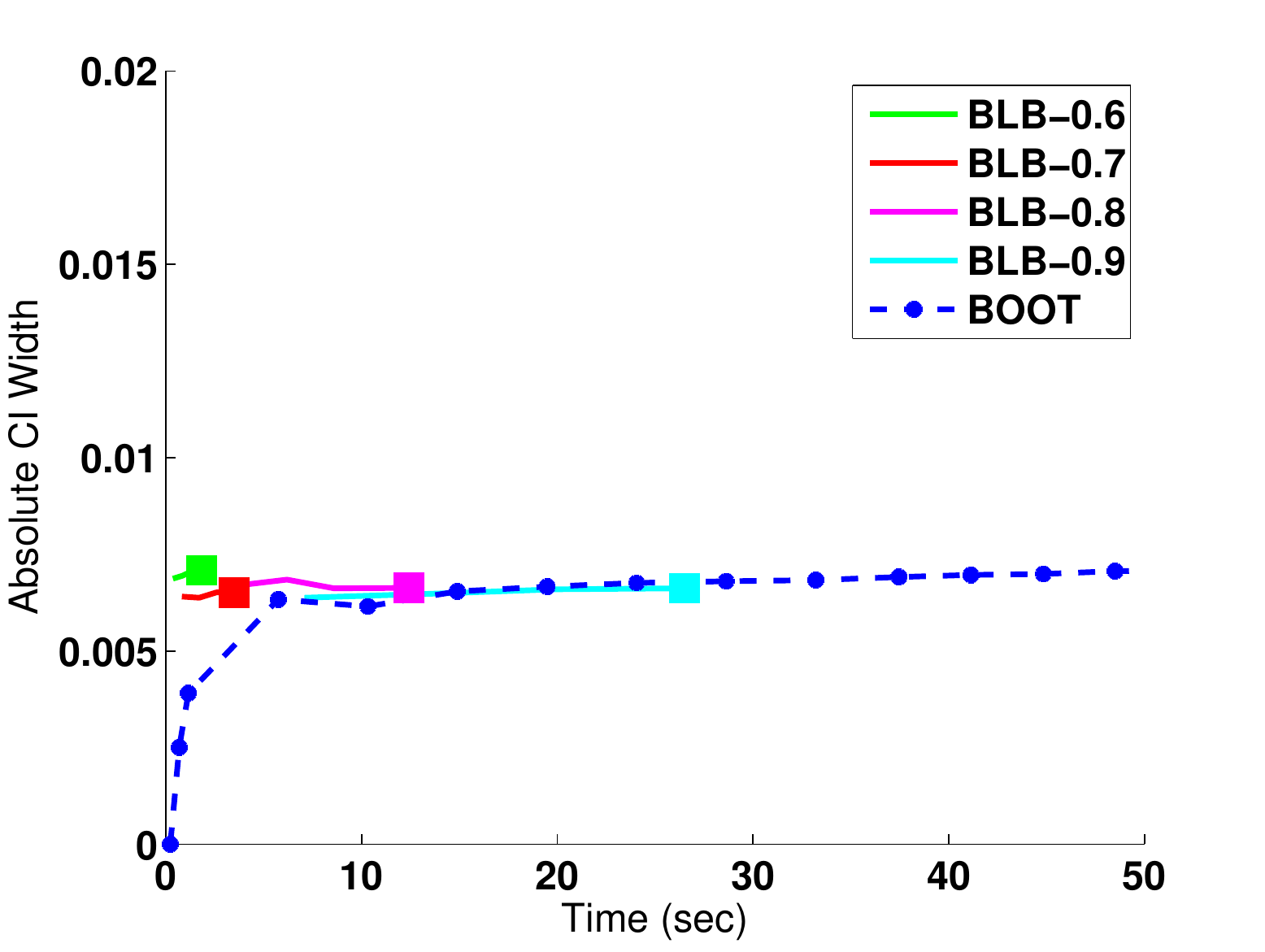}
\includegraphics[width=0.5\linewidth]{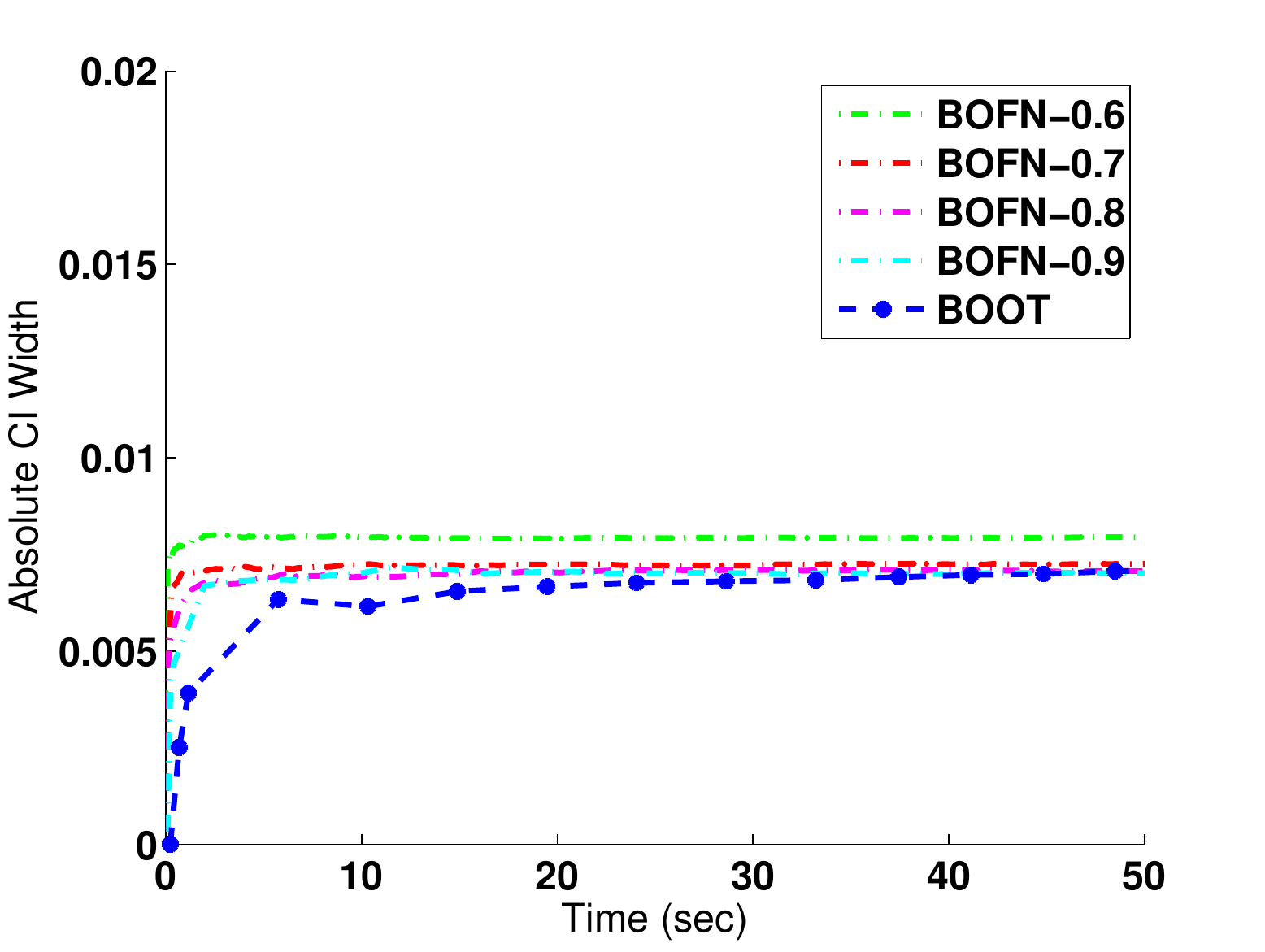}
\caption{Average (across dimensions) absolute confidence interval width vs.\ processing time on the UCI millionsong dataset (linear regression, $d=90$, $n= 50,000$).  The left plot shows results for \ouralgAbbrev (using adaptive hyperparameter selection, with the output at convergence marked by large squares) and the bootstrap (BOOT).  The right plot shows results for the $b$ out of $n$ bootstrap (BOFN).}
\end{figure}

\begin{figure}[ht]
\includegraphics[width=0.5\linewidth]{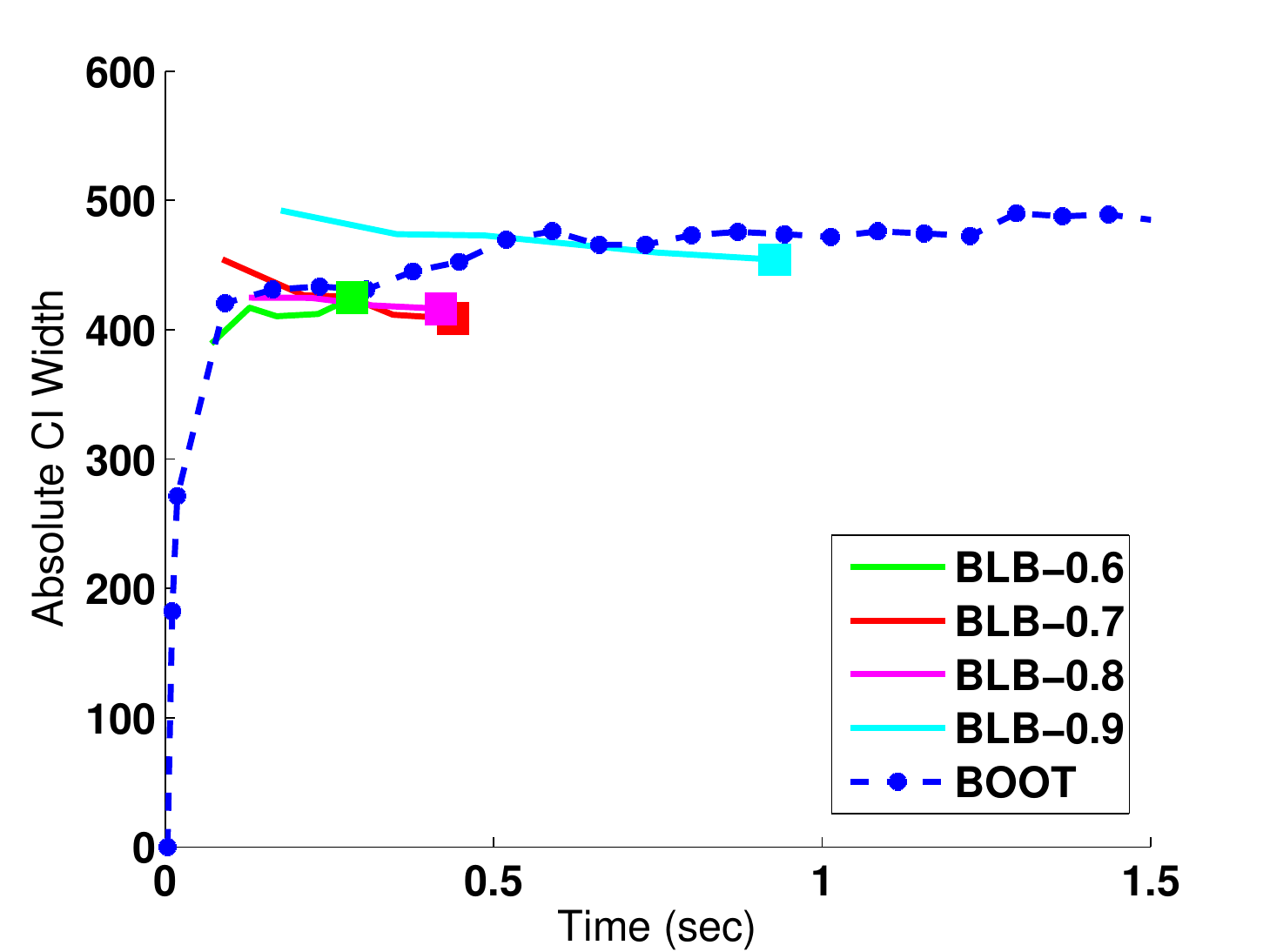}
\includegraphics[width=0.5\linewidth]{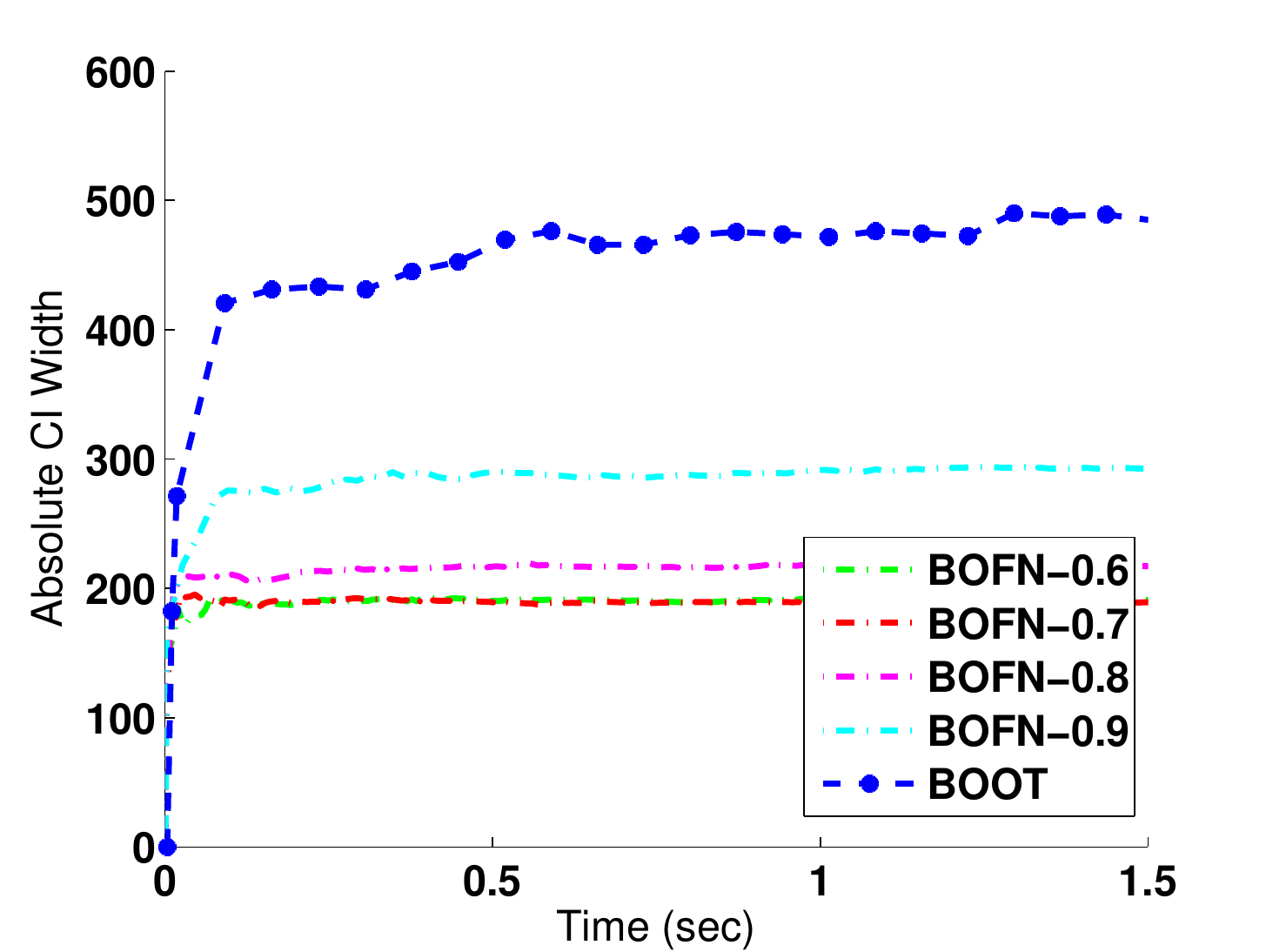}
\caption{Average (across dimensions) absolute confidence interval width vs.\ processing time on the UCI parkinsons dataset (linear regression, $d=16$, $n= 5,875$).  The left plot shows results for \ouralgAbbrev (using adaptive hyperparameter selection, with the output at convergence marked by large squares) and the bootstrap (BOOT).  The right plot shows results for the $b$ out of $n$ bootstrap (BOFN).}
\end{figure}

\begin{figure}[ht]
\includegraphics[width=0.5\linewidth]{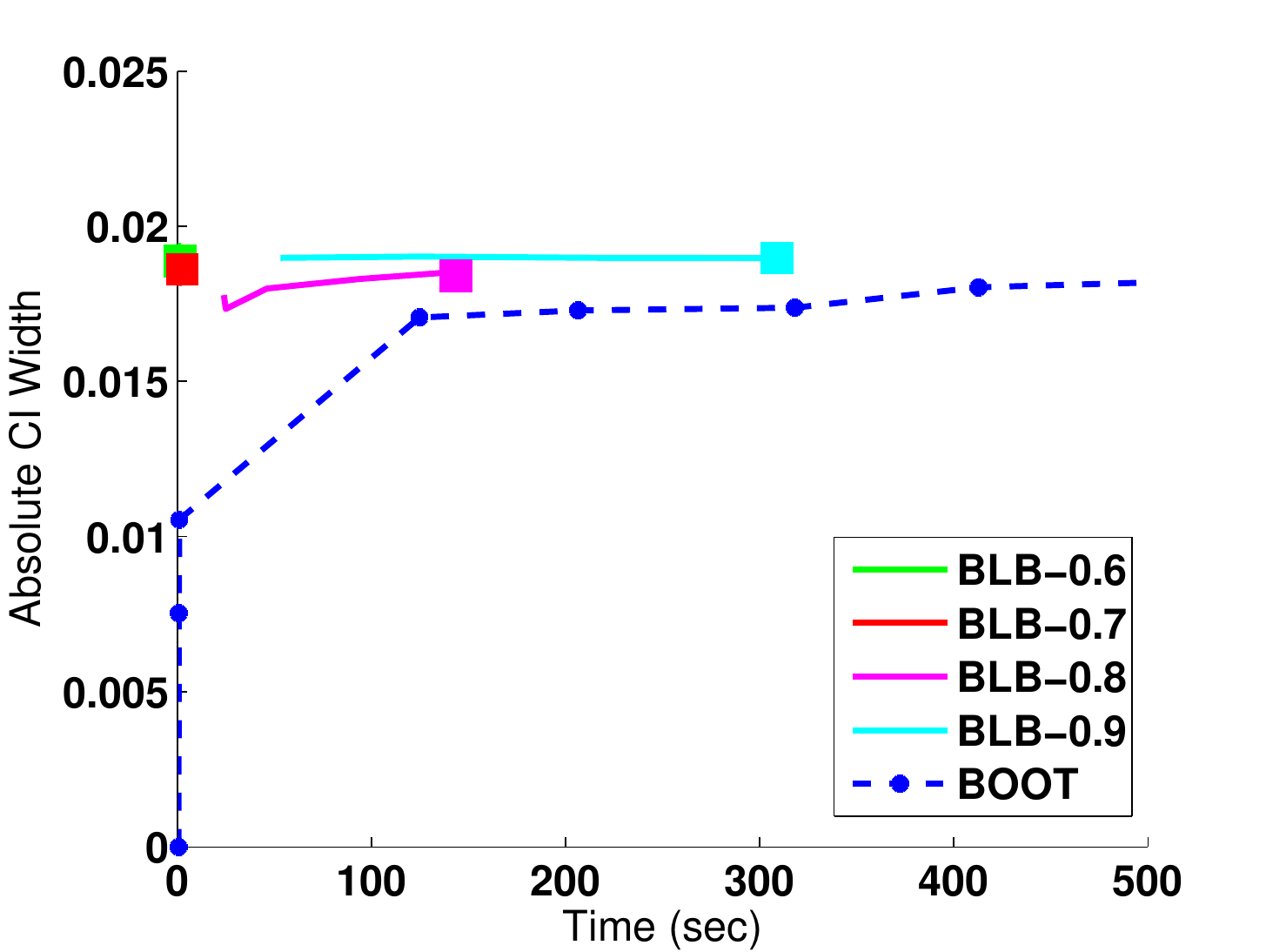}
\includegraphics[width=0.5\linewidth]{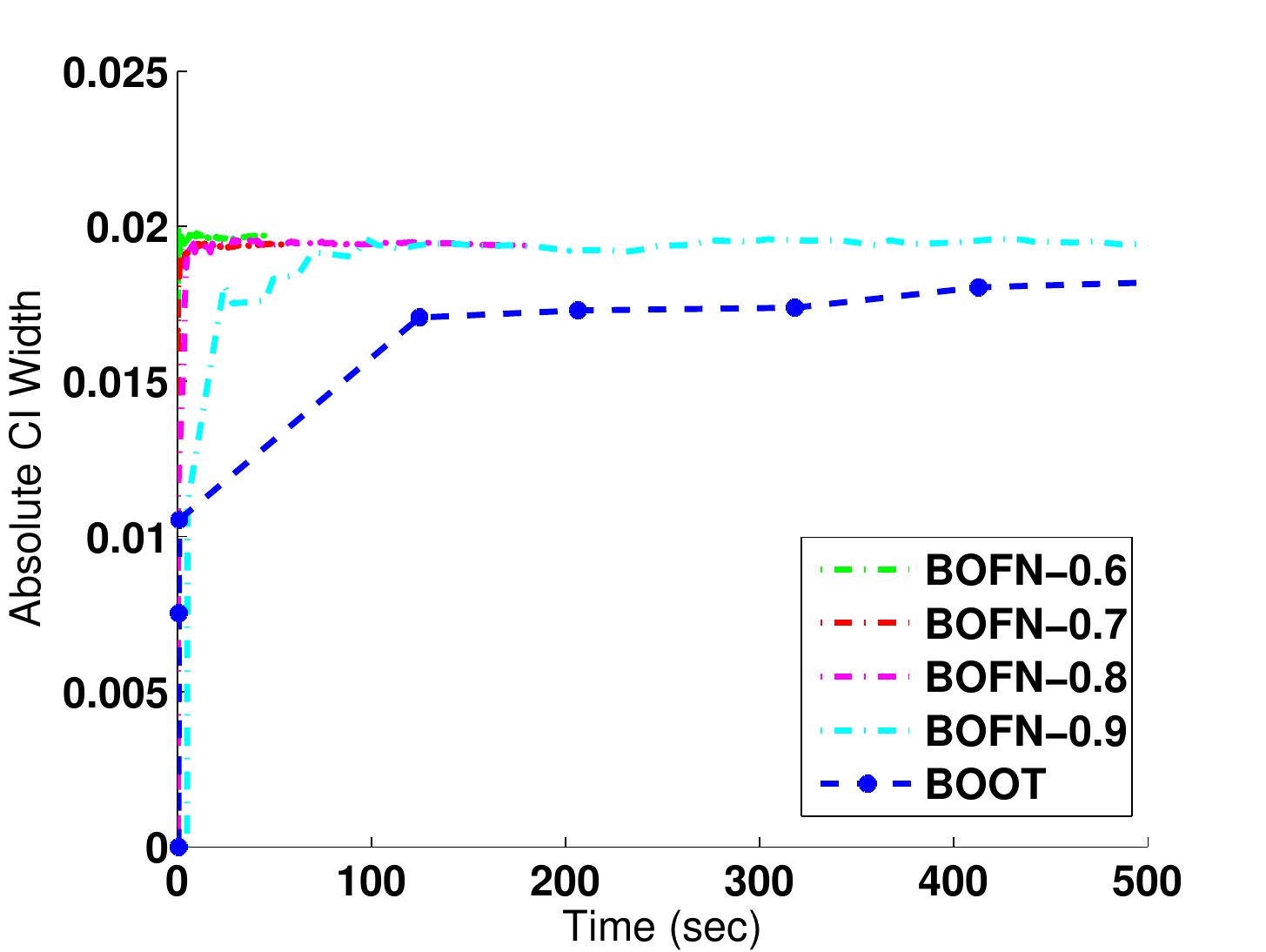}
\caption{Average (across dimensions) absolute confidence interval width vs.\ processing time on the UCI poker dataset (logistic regression, $d=10$, $n= 50,000$).  The left plot shows results for \ouralgAbbrev (using adaptive hyperparameter selection, with the output at convergence marked by large squares) and the bootstrap (BOOT).  The right plot shows results for the $b$ out of $n$ bootstrap (BOFN).}
\end{figure}

\begin{figure}[ht]
\includegraphics[width=0.5\linewidth]{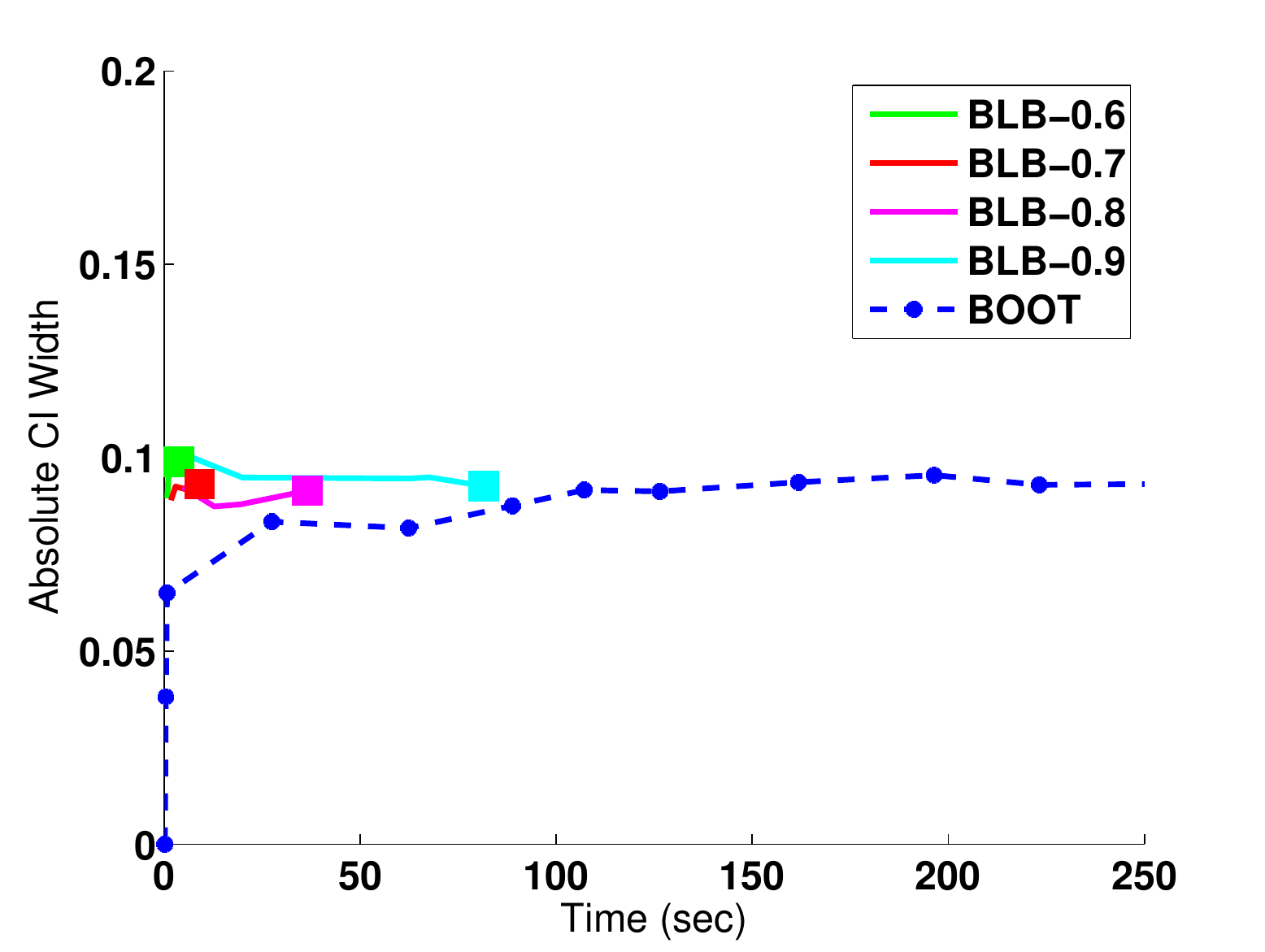}
\includegraphics[width=0.5\linewidth]{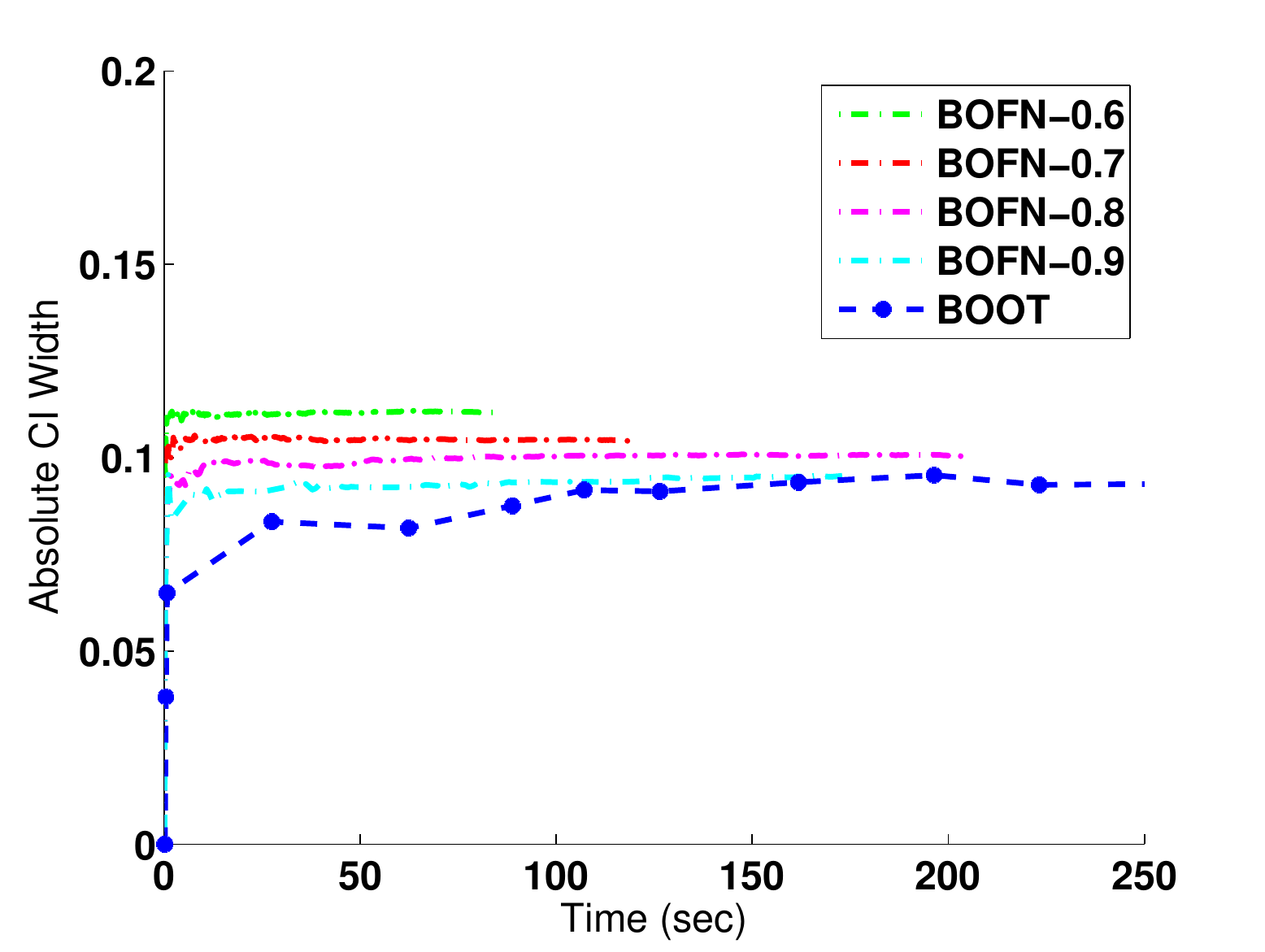}
\caption{Average (across dimensions) absolute confidence interval width vs.\ processing time on the UCI shuttle dataset (logistic regression, $d=9$, $n= 43,500$).  The left plot shows results for \ouralgAbbrev (using adaptive hyperparameter selection, with the output at convergence marked by large squares) and the bootstrap (BOOT).  The right plot shows results for the $b$ out of $n$ bootstrap (BOFN).}
\end{figure}

\end{document}